\newtheorem{theorem}{Theorem}[section]
\newtheorem{lemma}[theorem]{Lemma}
\newtheorem{definition}{Definition}[section]
\newtheorem{remark}[theorem]{Remark}
\newtheorem{example}[theorem]{Example}
\begin{document}
\title{Remote State Estimation  over Packet Dropping Links in the Presence of an Eavesdropper} 

\author{Alex S. Leong, Daniel E. Quevedo, Daniel Dolz, and Subhrakanti Dey
 \thanks{A. Leong and D. Quevedo are with the Department of Electrical Engineering  (EIM-E), Paderborn University, Paderborn, Germany.  E-mail: {\tt alex.leong@upb.de, dquevedo@ieee.org.} 
 D. Dolz is with Procter \& Gamble, Germany. E-mail: {\tt ddolz@uji.es.}
   S. Dey is with the Department of Engineering Science, Uppsala University, Uppsala, Sweden. E-mail: {\tt Subhra.Dey@signal.uu.se.} }
 }

\maketitle

\begin{abstract}
This paper studies remote state estimation in the presence of  an eavesdropper. A sensor transmits local state estimates over a packet dropping link to a remote estimator, while an eavesdropper can successfully overhear each sensor transmission with a certain probability. The objective is to determine when the sensor should transmit, in order to minimize the estimation error covariance at the remote estimator, while trying to keep the eavesdropper error covariance  above a certain level. This is done by solving an optimization problem that minimizes a linear combination of the expected estimation error covariance and the negative of the expected eavesdropper error covariance. Structural results on the optimal transmission policy are derived, and shown to exhibit thresholding behaviour in the estimation error covariances. In the infinite horizon situation, it is shown that with unstable systems one can keep the expected estimation error covariance bounded while the expected eavesdropper error covariance becomes unbounded. An alternative measure of security, constraining the amount of information revealed to the eavesdropper, is also considered, and similar structural results on the optimal transmission policy are derived. In the infinite horizon situation with unstable systems, it is now shown that for any transmission policy which keeps the expected estimation error covariance bounded, the expected amount of information revealed to the eavesdropper is always lower bounded away from zero. An extension of our results  to the transmission of measurements is also presented.
\end{abstract}

\section{Introduction}
With  the ever increasing amounts of data being transmitted wirelessly, the need to protect systems from malicious agents has become increasingly important. Traditionally, information security has been studied in the context of cryptography. However, due to the often limited computational power available at the transmitters (e.g. sensors in wireless sensor networks) to implement strong encryption, as well as the increased computational power available to malicious agents, achieving security using solely cryptographic methods may not be sufficient. Thus, alternative ways to implement security using information theoretic and physical layer techniques, complementary to the traditional cryptographic approaches, have attracted significant recent interest \cite{special_issue_PROC15}. 

In communications theory, the notion of information theoretic security has been around for many years, in fact dating back to the work of Claude Shannon in the 1940s \cite{Shannon_secrecy}. Roughly speaking, a communication system is regarded as secure in the information theoretic sense if the mutual information between the original message and what is received at the eavesdropper is either zero or becomes vanishingly small as the block length of the codewords increases \cite{Wyner_wiretap}. The term ``physical layer security'' has been used  to describe ways to implement information theoretic security using physical layer characteristics of the wireless channel such as fading, interference, and noise, see e.g. \cite{BlochBarros,LiangPoorShamai,GopalaLaiElGamal,ZhouSongZhang}. 

Motivated in part by the ideas of physical layer security, the consideration of security issues in signal processing systems has also started to gain the attention of researchers. For a survey on works in detection and estimation in the presence of eavesdroppers, focusing particularly on detection, see \cite{KailkhuraNadendlaVarshney}. In estimation problems with eavesdroppers, studies include \cite{AysalBarner,ReboredoXavierRodrigues,GuoLeongDey_TAES,GuoLeongDey_SIPN}. The objective is to minimize the average mean squared error at the legitimate receiver, while trying to keep the  mean squared error at the eavesdropper above a certain level, by using techniques such as stochastic bit flipping \cite{AysalBarner}, transmit filter design \cite{ReboredoXavierRodrigues}, and power control \cite{GuoLeongDey_TAES,GuoLeongDey_SIPN}. 
The above works deal with estimation of either constants or i.i.d. sources. In contrast, the focus  of the current paper is to consider the more general, and more difficult, problem of state estimation of \emph{dynamical systems} when there is an eavesdropper. For unstable systems, it has recently been shown that when using uncertain wiretap channels, one can keep the estimation error of the legitimate receiver bounded while the estimation error of the eavesdropper becomes unbounded for sufficiently large coding block length \cite{WieseJohansson}. In the current work we are interested primarily in estimation performance, and as such we do not assume coding, which can introduce large delays. Nonetheless, as we shall show, similar behaviour to \cite{WieseJohansson} can also be derived for our setup in the infinite horizon case. In a similar setup to the current work, but transmitting measurements and without using feedback acknowledgements, \cite{TsiamisGatsisPappas} derived mechanisms for keeping the expected error covariance bounded while driving the expected eavesdropper covariance unbounded, provided the reception probability is greater than the eavesdropping probability. By allowing for feedback, in this work we show that the same behaviour can be achieved for \emph{all} eavesdropping probabilities strictly less than one.

In information security, the two main types of attacks are generally regarded as: 1) passive attacks from eavesdroppers, and 2) active attacks such as Byzantine attacks or Denial of Service attacks. This paper is concerned with passive attacks from eavesdroppers. However, estimation and control problems in the presence of active attacks have also been  studied. Works in this area include \cite{LiuNingReiter,FawziTabuadaDiggavi,TeixeiraShamesSandbergJohansson,MoSinopoli_secure_estimation,BaiPasqualettiGupta,LiShi_jamming,LiQuevedo_TCNS}, just to mention a few. Another related area deals with privacy issues in estimation and control, see \cite{LeNyPappas,Cortes_differential_privacy} and the references therein.

\begin{figure}[t!]
\centering 
\includegraphics[scale=0.42]{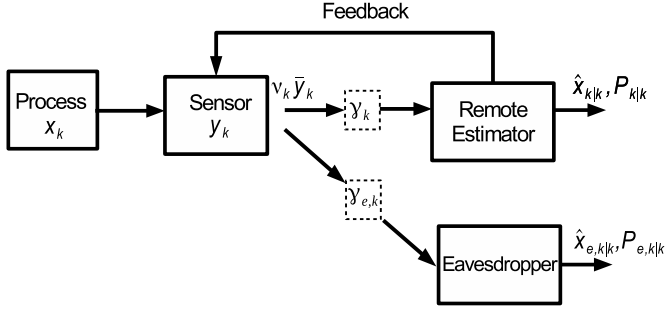} 
\caption{Remote State Estimation with an Eavesdropper}
\label{system_model}
\end{figure} 
In this paper, a sensor makes noisy measurements of a linear dynamical process. The sensor transmits  local state estimates  to the remote estimator over a packet dropping link. At the same time, an eavesdropper can successfully eavesdrop on the sensor transmission with a certain probability, see Fig. \ref{system_model}. Within this setup, we consider the problem of dynamic transmission scheduling, i.e. deciding at each instant whether the sensor should transmit. We seek to minimize a linear combination of the expected error covariance at the remote estimator and the negative of the expected error covariance at the eavesdropper. This scheduling is done at the remote estimator.

\emph{Summary of contributions}: The main contributions of this paper are:
\begin{itemize}
\item Structural results on the optimal transmission policy are derived. In the case where knowledge of the eavesdropper's error covariances are available at the remote estimator, our results show  that 1) for a fixed value of the eavesdropper's error covariance, the optimal policy has a threshold structure: the sensor should transmit if and only if the remote estimator's error covariance exceeds a certain threshold, and 2) for a fixed value of the remote estimator's error covariance, the sensor should not transmit if and only if the eavesdropper's error covariance is above a certain threshold. Such threshold policies are similar to schemes considered in event triggered estimation, e.g. \cite{LiLemmonWang,TrimpeDAndrea_journal,XiaGuptaAntsaklis,WuJiaJohanssonShi}. In the case where knowledge of the eavesdropper's error covariances are unavailable at the remote estimator, for a fixed belief of the eavesdropper's error covariance, the sensor should transmit if and only if the remote estimator's error covariance exceeds a certain threshold. 
\item For unstable systems, it is shown that in the infinite horizon situation there exist transmission policies which can keep the expected estimation error covariance bounded while the expected eavesdropper error covariance is unbounded. This behaviour can be achieved for all eavesdropping probabilities strictly less than one. 
\item An alternative measure of security, constraining the amount of information revealed to the eavesdropper (measured via the sum of conditional mutual informations), is also considered, and similar structural results on the optimal transmission policy are derived. 
\item For this alternative measure of security, in the infinite horizon situation with unstable systems, it is now shown that for any transmission policy which keeps the expected estimation error covariance bounded, the expected amount of information revealed to the eavesdropper is always lower bounded away from zero. 
\item An extension to the transmission of measurements is  described, where it is shown that threshold-type behaviour in the optimal transmission policy holds for scalar systems, but not in general for vector systems. 
\end{itemize}

This paper is organized as follows. Section \ref{model_sec} describes the system model. Section \ref{full_CSI_sec} considers the case where knowledge of the eavesdropper's error covariances is available at the remote estimator,  while Section \ref{partial_CSI_sec} studies the case where this information is unavailable. Section \ref{alt_measures_sec} considers an alternative measure of security which tries to minimize the estimator error covariance while constraining the amount of information revealed to the eavesdropper.  Section \ref{tx_meas_sec} considers the transmission of measurements. Numerical studies are given in Section \ref{numerical_sec}. Section \ref{conclusion_sec} draws conclusions. 

\section{System Model}
\label{model_sec}
A diagram of the system model is shown in Fig. \ref{system_model}.
Consider a discrete time process 
\begin{equation}
\label{state_eqn}
x_{k+1}=A x_k + w_k
\end{equation}
where $x_k \in \mathbb{R}^{n_x}$ and $w_k$ is i.i.d. Gaussian with zero mean and covariance $Q > 0$.\footnote{For a symmetric matrix $X$, we say that $X> 0 $ if it is positive definite, and  $X \geq 0 $ if it is positive semi-definite.}
The sensor has measurements
\begin{equation}
\label{measurement_eqn}
y_{k} = C x_k + v_{k}, 
\end{equation}
where $y_{k} \in \mathbb{R}^{n_y}$ and $v_{k}$ is Gaussian with zero mean and covariance $R > 0$. 
The noise processes $\{w_k\}$ and $\{v_{k}\}$ are assumed to be mutually independent.

The sensor transmits quantities $\bar{y}_k$ to the remote estimator. Common choices for $\bar{y}_k$ are the measurements, i.e. $\bar{y}_k = y_k$, or the local state estimates $\bar{y}_k = \hat{x}_{k|k}^s$ \cite{XuHespanha}. We first treat the case where the local state estimates are transmitted, see Section \ref{tx_meas_sec} for the case where measurements are transmitted. 
This requires the sensor to have some computational capabilities (i.e. the sensor is ``smart'') to run a local Kalman filter. 
The local state estimates and error covariances
\begin{equation*}
\begin{split}
\hat{x}_{k|k-1}^s & \triangleq \mathbb{E}[x_k|y_{0},\dots,y_{k-1}],  \quad
\hat{x}_{k|k}^s  \triangleq \mathbb{E}[x_k|y_{0},\dots,y_{k}] \\
P_{k|k-1}^s & \triangleq  \mathbb{E}[(x_k-\hat{x}_{k|k-1}^s)(x_k-\hat{x}_{k|k-1}^s)^T|y_{0},\dots,y_{k-1}]\\
P_{k|k}^s & \triangleq  \mathbb{E}[(x_k-\hat{x}_{k|k}^s)(x_k-\hat{x}_{k|k}^s)^T|y_{0},\dots,y_{k}]
\end{split}
\end{equation*}
can be computed at the sensor using the standard  Kalman filtering equations, see e.g. \cite{AndersonMoore}. We will assume that the pair $(A,C)$ is detectable and the pair $(A,Q^{1/2})$ is stabilizable.
Let $\bar{P}^+$ be the steady state value of $P_{k|k-1}^s$, and $\bar{P}$ be the steady state value of $P_{k|k}^s$ as $k \rightarrow \infty$, which both exist due to the detectability assumption. 
To simplify the presentation, we will assume that this local Kalman filter is operating in the steady state regime, so that   $P_{k|k}^s = \bar{P}, \forall k$. In general, the local Kalman filter will converge to steady state at an exponential rate \cite{AndersonMoore}.

Let $\nu_{k} \in \{0,1\}$ be  decision variables such that $\nu_{k}=1$ if and only if $\hat{x}_{k|k}^s$ is to be transmitted at time $k$. The decision variables $\nu_{k}$ are determined at the remote estimator, which is assumed to have more computational capabilities than the sensor, using information available at time $k-1$, and then fed back to the sensor before transmission at time $k$. 
 
At time instances when $\nu_{k}=1$, the sensor  transmits its local state estimate $\hat{x}_{k|k}^s$ over a packet dropping channel to the remote estimator.
Let $\gamma_{k}$ be random variables such that $\gamma_{k}=1$ if the sensor transmission at time $k$ is successfully received by the remote estimator, and $\gamma_{k}=0$ otherwise. We will assume that $\{\gamma_{k}\}$ is i.i.d. Bernoulli  \cite{Sinopoli} with 
$$\mathbb{P}(\gamma_{k}=1) = \lambda \in (0,1).$$

The sensor transmissions can be overheard by an eavesdropper over another packet dropping channel. Let $\gamma_{e,k}$ be random variables such that $\gamma_{e,k}=1$ if the sensor transmission at time $k$ is overheard by the eavesdropper, and $\gamma_{e,k}=0$ otherwise. We will assume that $\{\gamma_{e,k}\}$ is i.i.d. Bernoulli with 
$$\mathbb{P}(\gamma_{e,k}=1) = \lambda_e \in (0,1).$$ 
The processes $\{\gamma_{k}\}$ and $\{\gamma_{e,k}\}$ are assumed to be mutually independent.

At instances where $\nu_{k}=1$, it is assumed that the remote estimator knows whether the transmission was successful or not, i.e., the remote estimator knows the value $\gamma_{k}$, with dropped packets discarded. Define 
\begin{equation*}
\begin{split}
\mathcal{I}_k \triangleq & \{\nu_{0},\dots,\nu_{k}, \nu_{0} \gamma_{0},\dots,\nu_{k} \gamma_{k}, \nu_{0} \gamma_{0} \hat{x}_{0|0}^s,\dots,\nu_{k} \gamma_{k} \hat{x}_{k|k}^s\}
\end{split}
\end{equation*}
as the information set available to the remote estimator at time $k$. 
 Denote the state estimates and error covariances at the remote estimator by:
 \begin{equation}
\label{remote_estimator_no_tx} 
\begin{split}
\hat{x}_{k|k-1} & \triangleq \mathbb{E}[x_k| \mathcal{I}_{k-1}], \,\, \hat{x}_{k|k}  \triangleq \mathbb{E}[x_k| \mathcal{I}_k], \\ 
P_{k|k-1} & \triangleq  \mathbb{E}[(x_k-\hat{x}_{k|k-1})(x_k-\hat{x}_{k|k-1})^T|\mathcal{I}_{k-1}], \\
 P_{k|k} & \triangleq  \mathbb{E}[(x_k-\hat{x}_{k|k})(x_k-\hat{x}_{k|k})^T|\mathcal{I}_k] .
\end{split}
\end{equation}
Similarly, the eavesdropper knows if it has eavesdropped sucessfully.  Define 
\begin{equation*}
\begin{split}
\mathcal{I}_{e,k} \triangleq & \{\nu_{0},\dots,\nu_{k}, \nu_{0} \gamma_{e,0},\dots,\nu_{k} \gamma_{e,k},\\
& \quad \nu_{0} \gamma_{e,0} \hat{x}_{0|0}^s,\dots,\nu_{k} \gamma_{e,k} \hat{x}_{k|k}^s\}
\end{split}
\end{equation*}
as the information set available to the eavesdropper at time $k$, and the state estimates and error covariances at the eavesdropper by\footnote{We will assume that the eavesdropper knows the system parameters $A,C,Q,R$. If these are unknown, then the performance at the eavesdropper will be worse than the derived results.}:
 \begin{equation*}
\begin{split}
\hat{x}_{e,k|k-1} & \triangleq \mathbb{E}[x_k| \mathcal{I}_{e,k-1}], \,\, \hat{x}_{e,k|k}  \triangleq \mathbb{E}[x_k| \mathcal{I}_{e,k}], \\ 
P_{e,k|k-1} & \triangleq  \mathbb{E}[(x_k-\hat{x}_{e,k|k-1})(x_k-\hat{x}_{k|e,k-1})^T|\mathcal{I}_{e,k-1}], \\ P_{e,k|k} & \triangleq  \mathbb{E}[(x_k-\hat{x}_{e,k|k})(x_k-\hat{x}_{e,k|k})^T|\mathcal{I}_{e,k}] .
\end{split}
\end{equation*}
For simplicity of presentation, we will assume that the initial covariances $P_{0|0} = \bar{P}$ and $P_{e,0|0} = \bar{P}$.  

As stated before, the decision variables $\nu_{k}$ are determined at the remote estimator and fed back to the sensor. In Section \ref{full_CSI_sec} we consider the case where  $\nu_{k}$  depends on both $P_{k-1|k-1}$ and $P_{e,k-1|k-1}$, while in Section \ref{partial_CSI_sec} we consider the case where  $\nu_{k}$  depends only on  $P_{k-1|k-1}$ and the remote estimator's belief of $P_{e,k-1|k-1}$ constructed from knowledge of previous $\nu_{k}$'s. 
In either case, the decisions do not depend on the state $x_k$ (or the noisy measurement $y_k$). Thus, the optimal remote estimator can be shown to have the form
\begin{equation}
\label{remote_estimator_eqns_multi_sensor}
\begin{split}
\hat{x}_{k|k} & = \left\{\begin{array}{ccc}  A \hat{x}_{k-1|k-1} & , & \nu_{k} \gamma_{k} = 0 \\ \hat{x}_{k|k}^s & , & \nu_{k} \gamma_{k} = 1 \end{array}  \right. \\
P_{k|k} & = \left\{\begin{array}{ccl}  f(P_{k-1|k-1}) & ,  & \nu_{k} \gamma_{k} = 0\\ \bar{P} & , &  \nu_{k} \gamma_{k} = 1  \end{array} \right. 
\end{split}
\end{equation} 
where 
\begin{equation}
\label{f_defn}
f(X) \triangleq A X A^T + Q,
\end{equation}
while  at the eavesdropper the optimal estimator has the form
\begin{equation*}
\begin{split}
\hat{x}_{e,k|k} & = \left\{\begin{array}{ccc}  A \hat{x}_{e,k-1|k-1} & , & \nu_{k} \gamma_{e,k} = 0 \\ \hat{x}_{k|k}^s & , & \nu_{k} \gamma_{e,k} = 1 \end{array}  \right. \\
P_{e,k|k} & = \left\{\begin{array}{ccl}  f(P_{e,k-1|k-1}) & ,  & \nu_{k} \gamma_{e,k} = 0\\ \bar{P} & , &  \nu_{k} \gamma_{e,k} = 1  \end{array} \right. 
\end{split}
\end{equation*}

Define the countable set of matrices:
\begin{equation}
\label{S_defn}
\mathcal{S} \triangleq \{\bar{P}, f(\bar{P}), f^2(\bar{P}),\dots \},
\end{equation}
where $f^n(.)$ is the $n$-fold composition of $f(.)$, with the convention that $f^0(X) = X$. 
The set $\mathcal{S}$ consists of all possible values of $P_{k|k}$ at the remote estimator, as well as all possible values of  $P_{e,k|k}$ at the eavesdropper.  Given two symmetric matrices $X$ and $Y$, we say that $X\leq Y$ if $Y-X$ is positive semi-definite, and $X < Y$ if $Y-X$ is positive definite. As shown in e.g. \cite{LeongDeyQuevedo_TAC}, there is a total ordering on the elements of $\mathcal{S}$ given by 
$$ \bar{P} \leq f(\bar{P}) \leq f^2 (\bar{P}) \leq ...$$


\section{Eavesdropper Error Covariance Known at Remote Estimator}
\label{full_CSI_sec}
In this section we consider the case where the transmission decisions $\nu_k$ can depend on the error covariances of both the remote estimator  $P_{k-1|k-1}$ and the eavesdropper $P_{e,k-1|k-1}$. While knowledge of $P_{e,k-1|k-1}$ at the remote estimator may be difficult to achieve in practice, this case nevertheless serves as a useful benchmark on the achievable performance.   The  situation where  $P_{e,k-1|k-1}$  is not known at the remote estimator will be  considered in Section \ref{partial_CSI_sec}. 

We will first formulate an optimal transmission scheduling problem that minimizes a linear combination of the expected estimation error covariance and the negative of the expected eavesdropper error covariance. We then prove some structural results on the associated optimal transmission schedules.  Finally, we consider the infinite horizon situation.

\subsection{Optimal Transmission Scheduling}
The approach to security taken in Sections \ref{full_CSI_sec}-\ref{partial_CSI_sec} of this paper is to  minimize the expected error covariance at the remote estimator, while trying to keep the expected error covariance at the eavesdropper above a certain level.\footnote{Similar notions have been used in \cite{AysalBarner,ReboredoXavierRodrigues,GuoLeongDey_TAES,GuoLeongDey_SIPN}, which studied the estimation of constant parameters or i.i.d sources in the presence of an eavesdropper.} To accomplish this, we will formulate a problem that minimizes a linear combination of the expected estimation error covariance and the negative of the expected eavesdropper error covariance.  The problem we wish to solve is the finite horizon (of horizon $K$) problem:
\begin{equation}
\label{finite_horizon_problem_full_CSI_alt}
\begin{split}
& \min_{\{\nu_k\}} \sum_{k=1}^K  \mathbb{E}[\beta \textrm{tr} P_{k|k} - (1-\beta) \textrm{tr} P_{e,k|k}] 
 \\ & = \min_{\{\nu_k\}} \sum_{k=1}^K \mathbb{E} \Big[ \mathbb{E}[\beta \textrm{tr} P_{k|k} - (1-\beta) \textrm{tr} P_{e,k|k}  \\ & \quad \quad \quad \quad\quad \quad\quad \quad \quad | P_{0,0},  P_{e,0|0}, \mathcal{I}_{k-1},  \mathcal{I}_{e,k-1}, \nu_k ]\Big]
 \\ & = \min_{\{\nu_k\}} \sum_{k=1}^K \mathbb{E} \Big[ \mathbb{E}[\beta \textrm{tr} P_{k|k} - (1-\beta) \textrm{tr} P_{e,k|k}  \\ & \quad \quad \quad \quad\quad \quad\quad \quad \quad | P_{k-1,k-1},  P_{e,k-1,k-1}, \nu_k ]\Big]
 \\ & = \min_{\{\nu_k\}} \sum_{k=1}^K \mathbb{E}\Big[\beta(\nu_k \lambda \textrm{tr} \bar{P} + (1-\nu_k \lambda)\textrm{tr}f(P_{k-1|k-1}))
  \\ & \quad \quad - (1-\beta) (\nu_k \lambda_e \textrm{tr} \bar{P} + (1-\nu_k \lambda_e)\textrm{tr}f(P_{e,k-1|k-1}))\Big], 
\end{split}
\end{equation}
for some $\beta \in (0,1)$. The design parameter $\beta$ in problem (\ref{finite_horizon_problem_full_CSI_alt}) controls the tradeoff between estimation performance at the remote estimator and at the  eavesdropper, with a larger $\beta$ placing more importance on keeping $\mathbb{E}[P_{k|k}]$ small, and a smaller $\beta$ placing more importance on keeping $\mathbb{E}[P_{e,k|k}]$ large. The second equality in (\ref{finite_horizon_problem_full_CSI_alt}) holds since  $P_{k-1|k-1}$ (similarly for $P_{e,k-1|k-1}$) is a deterministic function of $P_{0|0}$ and $\mathcal{I}_{k-1}$, and $P_{k|k}$ is a function of $P_{k-1|k-1}$, $\nu_k$, and $ \gamma_{k}$. The third equality in (\ref{finite_horizon_problem_full_CSI_alt})  follows from computing the conditional expectations $\mathbb{E}[P_{k|k} | P_{k-1|k-1}, \nu_k]$ and $\mathbb{E}[P_{e,k|k} | P_{e.k-1|k-1}, \nu_k]$.

Problem (\ref{finite_horizon_problem_full_CSI_alt}) can be solved numerically using dynamic programming. 
For that purpose, define the functions $J_k(\cdot,\cdot): \mathcal{S} \times \mathcal{S} \rightarrow \mathbb{R}$ recursively as:
\begin{equation}
\label{J_fn_defn}
\begin{split}
&J_{K+1}(P,P_e)  =0 \\
&J_k(P,P_e)  = \min_{\nu \in \{0,1\}} \Big\{ \beta(\nu \lambda \textrm{tr} \bar{P} + (1-\nu \lambda)\textrm{tr}f(P)) \\ & \quad -  (1-\beta) (\nu \lambda_e \textrm{tr} \bar{P} + (1-\nu \lambda_e)\textrm{tr}f(P_e)) \\ & \quad +  \nu \lambda \lambda_e J_{k+1}(\bar{P}, \bar{P}) +  \nu \lambda (1-\lambda_e) J_{k+1}(\bar{P},f(P_e)) \\ & \quad +  \nu (1-\lambda) \lambda_e J_{k+1}(f(P),\bar{P}) 
\\ &  \quad + \big(\nu(1-\lambda)(1-\lambda_e) + 1 - \nu\big)  J_{k+1}(f(P),f(P_e))  \Big\} 
\end{split}
\end{equation}
for $k=K,\dots,1$. Then problem  (\ref{finite_horizon_problem_full_CSI_alt}) is solved by computing $J_k(P_{k-1|k-1}, P_{e,k-1|k-1})$ for $k = K,K-1,\dots,1$.  
 
\begin{remark}
Note that problem (\ref{finite_horizon_problem_full_CSI}) can be solved exactly since, for any horizon $K$, the  possible values of $(P_{k|k},P_{e,k|k})$ will lie in the finite set $  \{\bar{P}, f(\bar{P}), \dots, f^K(\bar{P}) \} \times  \{\bar{P}, f(\bar{P}), \dots, f^K(\bar{P}) \} $, which has finite  cardinality $(K+1)^2$. 
\end{remark}

\subsection{Structural Properties of Optimal Transmission Schedules}
In this subsection we will prove some structural properties on the optimal solution to problem (\ref{finite_horizon_problem_full_CSI}). In particular, we will show that 1) for a fixed $P_{e,k-1|k-1}$, the optimal policy is to only transmit if $P_{k-1|k-1}$ exceeds a threshold (which in general depends on $k$ on $P_{e,k-1|k-1}$), and 2) for a   fixed $P_{k-1|k-1}$, the optimal policy is to transmit if and only if $P_{e, k-1|k-1}$ is below a threshold (which depends on $k$ and $P_{k-1|k-1}$). Knowing that the optimal policies are of threshold-type gives insight into the form of the optimal solution, with characteristics of event triggered estimation, and can also provide computational savings when solving problem (\ref{finite_horizon_problem_full_CSI}) numerically, see \cite{Krishnamurthy_book}.

\begin{definition} 
\label{increasing_fn_defn}
A function $F(.): \mathcal{S} \rightarrow \mathbb{R}$ is \emph{increasing} if 
\begin{equation*}
X \leq Y \Rightarrow F(X) \leq F(Y).
\end{equation*}
\end{definition}

\begin{lemma}
\label{f_composition_lemma}
For any $n \in \mathbb{N}$, $\textnormal{tr} f^n(P)$ is an increasing function of $P$. 
\end{lemma}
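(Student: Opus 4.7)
The plan is to first establish that the map $f$ itself is monotone with respect to the Loewner partial order on positive semi-definite matrices, then promote this to its iterates by induction, and finally deduce the trace-level statement from the fact that the trace of a positive semi-definite matrix is non-negative.

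First, I would show the base case: if $X \leq Y$, then $Y - X \geq 0$, and hence $A(Y-X)A^T \geq 0$ since conjugation by $A$ preserves positive semi-definiteness. Adding and subtracting the common constant $Q$ yields $f(X) = AXA^T + Q \leq AYA^T + Q = f(Y)$, so $f$ is monotone in the Loewner order. Note that this step does not require any invertibility or stability assumption on $A$; it uses only that $M \mapsto AMA^T$ preserves positive semi-definiteness and that adding the same matrix $Q$ to both sides preserves the inequality.

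Next, I would proceed by induction on $n$. Assume $f^{n-1}$ is Loewner-monotone, i.e. $X \leq Y \Rightarrow f^{n-1}(X) \leq f^{n-1}(Y)$. Applying the base case to the pair $f^{n-1}(X) \leq f^{n-1}(Y)$ gives $f^n(X) = f(f^{n-1}(X)) \leq f(f^{n-1}(Y)) = f^n(Y)$, completing the induction. The case $n=0$ is trivial since $f^0$ is the identity.

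Finally, to pass from the Loewner inequality to the trace inequality, I would observe that $f^n(Y) - f^n(X) \geq 0$, and since the trace of any positive semi-definite matrix is non-negative, $\operatorname{tr} f^n(Y) - \operatorname{tr} f^n(X) = \operatorname{tr}(f^n(Y) - f^n(X)) \geq 0$. This gives $\operatorname{tr} f^n(X) \leq \operatorname{tr} f^n(Y)$, which is exactly the increasing-function property in the sense of Definition \ref{increasing_fn_defn}. No step appears to be a real obstacle; the only mild subtlety is being careful that monotonicity is invoked in the Loewner sense at the matrix level before taking traces, rather than trying to argue directly about scalar traces, since $\operatorname{tr} X \leq \operatorname{tr} Y$ does not in general imply $\operatorname{tr} f(X) \leq \operatorname{tr} f(Y)$.
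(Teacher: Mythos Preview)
Your argument is correct. The paper takes a slightly different, more direct route: it simply writes down the closed form
\[
\operatorname{tr} f^n(P) = \operatorname{tr}\!\left(A^n P (A^n)^T + \sum_{m=0}^{n-1} A^m Q (A^m)^T\right),
\]
and observes that this is increasing in $P$ since the only $P$-dependent term is $\operatorname{tr}(A^n P (A^n)^T)$, which is monotone by the same conjugation-preserves-PSD fact you invoke. Your approach instead establishes Loewner monotonicity of each iterate $f^n$ by induction and only takes the trace at the end. Both rest on the same underlying observation; the paper's version exploits the affine structure of $f$ to collapse the induction into a single explicit formula, whereas your inductive argument would continue to work for any Loewner-monotone (not necessarily affine) recursion.
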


\begin{proof}
We have 
$$\textrm{tr} f^n(P) = \textrm{tr} \left(A^n P (A^n)^T + \sum_{m=0}^{n-1} A^m Q (A^m)^T \right)$$
which is increasing with $P$. 
\end{proof}

From the definition of $J_k(.,.)$ in (\ref{J_fn_defn}), we know that if the minimizer $\nu^* =0$ then
\begin{equation}
\label{Jk_nu_0}
J_k(P,P_e)  = \beta \textrm{tr}f(P) - (1-\beta)  \textrm{tr}f(P_e) + J_{k+1}(f(P),f(P_e)),  
\end{equation} 
and if the minimizer $\nu^* =1$ then
\begin{align}
\label{Jk_nu_1}
& J_k(P,P_e) =  \beta( \lambda \textrm{tr} \bar{P} + (1- \lambda)\textrm{tr}f(P)) \nonumber \\ & \quad - (1-\beta)  ( \lambda_e \textrm{tr} \bar{P} + (1- \lambda_e)\textrm{tr}f(P_e)) \nonumber \\ & \quad +  \lambda \lambda_e J_{k+1}(\bar{P}, \bar{P})  +   \lambda (1-\lambda_e) J_{k+1}(\bar{P},f(P_e)) \nonumber  \\ & \quad +   (1-\lambda) \lambda_e J_{k+1}(f(P),\bar{P}) 
\nonumber  \\ &  \quad + (1-\lambda)(1-\lambda_e)   J_{k+1}(f(P),f(P_e)).
\end{align}
Denote  the difference of (\ref{Jk_nu_0}) and (\ref{Jk_nu_1}) as
\begin{align}
\label{phi_fn_defn}
& \phi_k(P,P_e)  \triangleq \beta \lambda  \textrm{tr}f(P) - \beta \lambda \textrm{tr} \bar{P} \nonumber \\ & \quad - (1-\beta) \lambda_e \textrm{tr}f(P_e) + (1-\beta) \lambda_e \textrm{tr} \bar{P} \nonumber \\
& \quad + [1-(1-\lambda)(1-\lambda_e)] J_{k+1}(f(P),f(P_e)) \nonumber \\ & \quad- \lambda \lambda_e J_{k+1}(\bar{P}, \bar{P}) -   \lambda (1-\lambda_e) J_{k+1}(\bar{P},f(P_e)) \nonumber \\ & \quad  -   (1-\lambda) \lambda_e J_{k+1}(f(P),\bar{P}) 
\end{align}
Note that when $\nu_k^*=1$, i.e. the optimal decision at time $k$ is to transmit, we have $\phi_k(P,P_e) >0$. The following result proves some structural properties of the optimal solution. Part (i) shows that for fixed $P_{e,k-1|k-1}$, the optimal policy is to transmit if and only if $P_{k-1|k-1}$ exceeds a threshold. Part (ii) shows that for fixed $P_{k-1|k-1}$, the optimal policy is to not transmit if and only if $P_{e,k-1|k-1}$ is above a threshold. 

\begin{theorem}
\label{structural_thm_full_CSI_alt}
(i) For fixed $P_{e,k-1|k-1}$, the optimal solution to problem (\ref{finite_horizon_problem_full_CSI_alt}) is a threshold policy on $P_{k-1|k-1}$ of the form
\begin{equation*}
\begin{split}
\nu_k^*(P_{k-1|k-1},P_{e,k-1|k-1}) = \left\{ \begin{array}{lcl} 0 & , & \textnormal{if } P_{k-1|k-1} \leq P_k^* \\ 1 & , & \textnormal{otherwise} \end{array} \right.
\end{split}
\end{equation*}
where the threshold $P_k^* \in \mathcal{S}$ depends on $k$ and $P_{e,k-1|k-1}$.
\\ (ii) For fixed $P_{k-1|k-1}$, the optimal solution to problem (\ref{finite_horizon_problem_full_CSI_alt}) is a threshold policy on $P_{e,k-1|k-1}$ of the form
\begin{equation*}
\begin{split}
\nu_k^*(P_{k-1|k-1},P_{e,k-1|k-1}) = \left\{ \begin{array}{lcl} 0 & , &  \textnormal{if } P_{e,k-1|k-1} \geq P_{e,k}^* \\ 1 & , & \textnormal{otherwise} \end{array} \right.
\end{split}
\end{equation*}
where the threshold $P_{e,k}^* \in \mathcal{S}$ depends on $k$ and $P_{k-1|k-1}$. 
\end{theorem}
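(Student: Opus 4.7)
I would prove the theorem by backward induction on the dynamic programming recursion (\ref{J_fn_defn}). The key observation is that $\nu_k^*(P,P_e) = 1$ iff $\phi_k(P,P_e) > 0$, so both parts reduce to monotonicity (or single-crossing) properties of $\phi_k$ on the totally ordered chain $\mathcal{S}$: part (i) is equivalent to $\phi_k(\cdot,P_e)$ being nondecreasing on $\mathcal{S}$, with threshold $P_k^* := \sup\{P \in \mathcal{S} : \phi_k(P,P_e) \leq 0\}$, and part (ii) is equivalent to $\phi_k(P,\cdot)$ being nonincreasing on $\mathcal{S}$, with threshold $P_{e,k}^* := \inf\{P_e \in \mathcal{S} : \phi_k(P,P_e) \leq 0\}$ (with the usual conventions if such sets are empty or equal to $\mathcal{S}$).

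The induction would maintain the invariant that $J_k(P,P_e)$ is nondecreasing in $P$ and nonincreasing in $P_e$. The base case $J_{K+1}\equiv 0$ is trivial. For the inductive step, each of the two per-action values inside the $\min$ in (\ref{J_fn_defn}) inherits these monotonicities from Lemma \ref{f_composition_lemma} (for the immediate-cost trace terms), the order-preservation of $f$ on the chain $\mathcal{S}$, and the induction hypothesis on $J_{k+1}$; a pointwise minimum preserves both monotonicities.

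To then establish the monotonicity of $\phi_k$ itself, I would decompose (\ref{phi_fn_defn}) and isolate the potentially troublesome term $-(1-\lambda)\lambda_e J_{k+1}(f(P),\bar P)$, which is decreasing in $P$. Setting $a := 1-(1-\lambda)(1-\lambda_e)$ and using $a - (1-\lambda)\lambda_e = \lambda$, the two $J_{k+1}$ terms depending on $P$ can be regrouped as
\[
\lambda J_{k+1}(f(P),f(P_e)) + (1-\lambda)\lambda_e\bigl[J_{k+1}(f(P),f(P_e)) - J_{k+1}(f(P),\bar P)\bigr].
\]
The first summand is nondecreasing in $P$ by the $J_{k+1}$ monotonicity invariant; combined with the fact that $\beta\lambda\,\textrm{tr}\,f(P)$ is nondecreasing (Lemma \ref{f_composition_lemma}), and that the remaining terms in (\ref{phi_fn_defn}) do not depend on $P$, this reduces part (i) to showing that the bracketed second difference is nondecreasing in $P$. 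A symmetric regrouping in the $P_e$-variable reduces part (ii) to showing that an analogous second difference is nonincreasing in $P_e$.

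The main obstacle is that these second-difference properties must themselves be carried along as inductive invariants, and propagating them through the pointwise $\min$ in the Bellman operator is delicate. I would argue this by first verifying, for each fixed $\nu\in\{0,1\}$, that the per-action value $Q_k(\nu;\cdot,\cdot)$ inherits the required second-difference monotonicity from $J_{k+1}$ via direct computation using linearity and the order-preservation of $f$; then, on any $2\times 2$ sub-rectangle of $\mathcal{S}\times\mathcal{S}$, a case analysis over which action achieves the minimum at each of the four corners, combined with the $J_k$ monotonicity invariant, recovers the desired inequality. This final step is where the proof requires the most care.
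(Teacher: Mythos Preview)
Your overall strategy---prove that $\phi_k$ is monotone on the chain $\mathcal S$ by backward induction on properties of $J_k$---matches the paper, and your regrouping of the $P$-dependent $J_{k+1}$ terms in $\phi_k$ as $\lambda J_{k+1}(f(P),f(P_e))+(1-\lambda)\lambda_e\bigl[J_{k+1}(f(P),f(P_e))-J_{k+1}(f(P),\bar P)\bigr]$ is correct. The gap is in the last step. You propose to carry \emph{supermodularity} of $J_k$ (the bracketed second difference being nondecreasing in $P$) as an inductive invariant and to recover it from supermodularity of each per-action value $Q_k(\nu;\cdot,\cdot)$ by a four-corner case analysis combined with monotonicity. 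This does not close: the pointwise minimum of supermodular functions is not supermodular in general, and monotonicity alone does not repair it here. Concretely, consider the configuration (consistent with the threshold structure of $\phi_k$) where $\nu^*=1$ at $(P,P_e)$ and $(P,P_e')$ but $\nu^*=0$ at $(P',P_e)$ and $(P',P_e')$. The supermodularity inequality for $J_k$ then reduces, via supermodularity of $Q_k(1;\cdot,\cdot)$, to $Q_k(1;P',P_e)-Q_k(1;P',P_e')\ge Q_k(0;P',P_e)-Q_k(0;P',P_e')$; expanding both sides leaves the negative term $\lambda(1-\lambda_e)\bigl[J_{k+1}(\bar P,f(P_e))-J_{k+1}(\bar P,f(P_e'))\bigr]$, which is not controlled by monotonicity or supermodularity of $J_{k+1}$. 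One can construct monotone supermodular $J_{k+1}$ for which this inequality fails, so the invariants you list are not strong enough to self-propagate.

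The paper sidesteps this by \emph{not} separating ``monotonicity'' from ``supermodularity''. Instead of your bracketed difference, it carries as inductive invariant precisely the weighted combination that appears in $\phi_k$: that
\[
[1-(1-\lambda)(1-\lambda_e)]\,J_k\bigl(f^n(P),P_e\bigr)-(1-\lambda)\lambda_e\,J_k\bigl(f^n(P),P_e'\bigr)
\]
is increasing in $P$, for all $n\in\mathbb N$ and all $P_e,P_e'\in\mathcal S$ (and the symmetric statement with weights $[1-(1-\lambda)(1-\lambda_e)]$ and $\lambda(1-\lambda_e)$ for part~(ii)). These specific weights are exactly the ones that re-emerge after expanding one Bellman step and cancelling, so the next-stage terms regroup into instances of the \emph{same} weighted invariant for $J_{k+1}$. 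The generalization to arbitrary $P_e,P_e'$ (rather than only the particular pair $(f(P_e),\bar P)$ you encounter in $\phi_k$) and to all powers $f^n$ is what makes the induction close. This invariant is weaker than supermodularity and is the one you should carry.
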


\begin{proof}
(i) Since $\nu_k$ only takes on the two values $0$ and $1$, Theorem \ref{structural_thm_full_CSI_alt}(i) will be proved if we can show that 
the functions $\phi_k(P,P_e) $ defined in (\ref{phi_fn_defn}) are increasing functions of $P$ for $k=1,\dots,K$.
As $\textrm{tr} f(P)$ is an increasing function of $P$ by Lemma \ref{f_composition_lemma}, it is sufficient to show that 
$$[1-(1-\lambda)(1-\lambda_e)] J_{k}(f(P),f(P_e))-  (1-\lambda) \lambda_e J_{k}(f(P),\bar{P}) $$
is an increasing function of $P$ for all $k$. We will prove this using induction. In order to make the induction argument work, we will prove the slightly  more general statement that 
$$[1-(1-\lambda)(1-\lambda_e)] J_{k}(f^n(P),P_e)-  (1-\lambda) \lambda_e J_{k}(f^n(P),P_e') $$
is an increasing function of $P$ for all $k$, all $n\in \mathbb{N}$ and all $P_e, P_e' \in \mathcal{S}$. 

The case of $k=K+1$ is clear. Now assume that, for $P \geq P'$, 
\begin{equation}
\label{induction_hypothesis}
\begin{split}
& [1\!-\!(1\!-\!\lambda)(1\!-\!\lambda_e)] J_{l}(f^n(P),P_e) \!-\!  (1\!-\!\lambda) \lambda_e J_{l}(f^n(P),P_e') \\
& \!-\![1\!-\!(1\!-\!\lambda)(1\!-\!\lambda_e)] J_{l}(f^n(P'),\!P_e) \!+\!  (1\!-\!\lambda) \lambda_e J_{l}(f^n(P'),\!P_e') \\ & \geq 0
\end{split}
\end{equation}
holds for $l=K+1,K,\dots,k+1$. Then
\begin{align*}
& [1\!-\!(1\!-\!\lambda)(1\!-\!\lambda_e)] J_{k}(f^n(P),P_e) \!-\!  (1\!-\!\lambda) \lambda_e J_{k}(f^n(P),P_e') \\
& \!\!-\![1\!-\!(1\!-\!\lambda)(1\!-\!\lambda_e)] J_{k}(f^n(P'),\!P_e) \!+\!  (1\!-\!\lambda) \lambda_e J_{k}(f^n(P'),\!P_e') \\
& \geq \min_{\nu \in \{0,1\}} \Bigg\{  [1-(1-\lambda)(1-\lambda_e)] \Big\{ \beta [ \nu \lambda \textrm{tr} \bar{P}  + (1-\nu \lambda)  \\ & \quad \times
\textrm{tr} f^{n+1}(P)] - (1-\beta)  [ \nu \lambda_e \textrm{tr} \bar{P} + (1-\nu \lambda_e) \textrm{tr} f(P_e)] \\
& \quad  + \nu \lambda \lambda_e J_{k+1}(\bar{P},\bar{P}) + \nu \lambda(1-\lambda_e) J_{k+1} (\bar{P},f(P_e)) \\ & \quad 
+\nu(1-\lambda) \lambda_e J_{k+1} (f^{n+1}(P),\bar{P}) \\
& \quad  + [\nu(1-\lambda)(1-\lambda_e) + (1-\nu)]J_{k+1} (f^{n+1}(P),f(P_e)) \Big\} \\
&  -(1-\lambda) \lambda_e   \Big\{ \beta [ \nu \lambda \textrm{tr} \bar{P} + (1-\nu \lambda) \textrm{tr} f^{n+1}(P)] \\ & \quad 
- (1-\beta) [ \nu \lambda_e \textrm{tr} \bar{P} + (1-\nu \lambda_e) \textrm{tr} f(P_e')]  \\ &  \quad 
+ \nu \lambda \lambda_e J_{k+1}(\bar{P},\bar{P}) + \nu \lambda(1-\lambda_e) J_{k+1} (\bar{P},f(P_e')) \\ & \quad 
+\nu(1-\lambda) \lambda_e J_{k+1} (f^{n+1}(P),\bar{P}) \\
& \quad + [\nu(1-\lambda)(1-\lambda_e) + (1-\nu)]J_{k+1} (f^{n+1}(P),f(P_e')) \Big\} \\ 
&  -[1-(1-\lambda)(1-\lambda_e) ]  \Big\{ \beta [ \nu \lambda \textrm{tr} \bar{P} + (1-\nu \lambda) \textrm{tr} f^{n+1}(P')] \\ & \quad 
- (1-\beta) [ \nu \lambda_e \textrm{tr} \bar{P} + (1-\nu \lambda_e) \textrm{tr} f(P_e)] \\
&  \quad + \nu \lambda \lambda_e J_{k+1}(\bar{P},\bar{P}) + \nu \lambda(1-\lambda_e) J_{k+1} (\bar{P},f(P_e)) \\ & \quad 
+\nu(1-\lambda) \lambda_e J_{k+1} (f^{n+1}(P'),\bar{P}) \\
&  \quad + [\nu(1-\lambda)(1-\lambda_e) + (1-\nu)]J_{k+1} (f^{n+1}(P'),f(P_e)) \Big\} \\
&  + (1-\lambda) \lambda_e   \Big\{ \beta [ \nu \lambda \textrm{tr} \bar{P} + (1-\nu \lambda) \textrm{tr} f^{n+1}(P')] \\ & \quad 
- (1-\beta)[ \nu \lambda_e \textrm{tr} \bar{P} + (1-\nu \lambda_e) \textrm{tr} f(P_e')] \\
& \quad + \nu \lambda \lambda_e J_{k+1}(\bar{P},\bar{P}) + \nu \lambda(1-\lambda_e) J_{k+1} (\bar{P},f(P_e')) \\ & \quad 
+\nu(1-\lambda) \lambda_e J_{k+1} (f^{n+1}(P'),\bar{P}) \\
& \quad + [\nu(1-\lambda)(1-\lambda_e) + (1-\nu)]J_{k+1} (f^{n+1}(P'),f(P_e')) \Big\}  \Bigg\} \\
& = \min_{\nu \in \{0,1\}} \Bigg\{  [1-(1-\lambda)(1-\lambda_e)]  \Big\{ \beta  (1-\nu \lambda) \textrm{tr} f^{n+1}(P) \\ & \quad 
+\nu(1-\lambda) \lambda_e J_{k+1} (f^{n+1}(P),\bar{P}) \\ 
&  \quad + [\nu(1-\lambda)(1-\lambda_e) + (1-\nu)]J_{k+1} (f^{n+1}(P),f(P_e)) \Big\} \\
&  -(1-\lambda)\lambda_e  \Big\{ \beta  (1-\nu \lambda) \textrm{tr} f^{n+1}(P)  \\ & \quad 
+\nu(1-\lambda) \lambda_e J_{k+1} (f^{n+1}(P),\bar{P}) \\ 
&  \quad + [\nu(1-\lambda)(1-\lambda_e) + (1-\nu)]J_{k+1} (f^{n+1}(P),f(P_e')) \Big\} \\ 
&  - [1-(1-\lambda)(1-\lambda_e)]  \Big\{ \beta  (1-\nu \lambda) \textrm{tr} f^{n+1}(P')  \\ & \quad 
+\nu(1-\lambda) \lambda_e J_{k+1} (f^{n+1}(P'),\bar{P}) \\ 
& \quad + [\nu(1-\lambda)(1-\lambda_e) + (1-\nu)]J_{k+1} (f^{n+1}(P'),f(P_e)) \Big\}  \\ 
&  +(1-\lambda)\lambda_e  \Big\{ \beta  (1-\nu \lambda) \textrm{tr} f^{n+1}(P')  \\ & \quad 
 +\nu(1-\lambda) \lambda_e J_{k+1} (f^{n+1}(P'),\bar{P}) \\ 
& \quad + [\nu(1-\lambda)(1-\lambda_e) + (1-\nu)]J_{k+1} (f^{n+1}(P'),f(P_e')) \Big\} \\
& \geq 0
\end{align*}
where the last inequality holds (for both cases $\nu^*=0$ and $\nu^*=1$) by Lemma \ref{f_composition_lemma} and the induction hypothesis (\ref{induction_hypothesis}).

(ii) As $-\textrm{tr} f(P_e)$ is a decreasing function of $P_e$, it is now sufficient to show that 
$$[1-(1-\lambda)(1-\lambda_e)] J_{k}(f(P),f(P_e))-  \lambda (1-\lambda_e)  J_{k}(\bar{P},f(P_e)) $$
is a decreasing function of $P_e$ for all $k$: 
Using similar techniques as in the proof of part (i), we can prove by induction the slightly more general statement that 
$$[1-(1-\lambda)(1-\lambda_e)] J_{k}(P,f^n(P_e))-  \lambda (1-\lambda_e)  J_{k}(P',f^n(P_e)) $$
is a decreasing function of $P_e$ for all $k$, all $n\in \mathbb{N}$ and all $P, P' \in \mathcal{S}$. The details are omitted for brevity. 
\end{proof}

\subsection{Infinite Horizon}
\label{infinite_horizon_sec_full_CSI_alt}
We now consider the infinite horizon situation. 
Let us first give a condition on when  $\mathbb{E} [P_{k|k}]$ will be bounded. If $A$ is stable, this is always the case. In the case where $A$ is unstable, consider the policy with $\nu_k=1, \forall k$, which transmits at every time instant, and is similar to the situation where local state estimates are transmitted over packet dropping links \cite{Schenato,XuHespanha}. From the results of  \cite{XuHespanha} and \cite{Schenato} we have that $\mathbb{E} [P_{k|k}]$ is bounded if and only if  
\begin{equation}
\label{stability_condition}
\lambda > 1 - \frac{1}{|\sigma_{\max}(A)|^2},
\end{equation}
 where $|\sigma_{\max}(A)|$ is the largest magnitude of the eigenvalues of $A$ (i.e. the spectral radius of $A$). Thus condition (\ref{stability_condition}) will ensure the existence of policies which keep $\mathbb{E} [P_{k|k}]$ bounded. 

We will now show that for unstable systems, in the infinite horizon situation, there exists transmission policies which can make the expected eavesdropper error covariance unbounded while keeping the expected estimator error covariance bounded. This can be achieved for all probabilities of successful eavesdropping $\lambda_e$ strictly less than one. 

\begin{theorem}
\label{infinite_horizon_unbounded_thm}
Suppose that $A$ is unstable, and that $\lambda > 1 - \frac{1}{|\sigma_{\max}(A)|^2}$. Then for any $\lambda_e < 1$, there exist transmission policies in the infinite horizon situation such that $\limsup_{K \rightarrow \infty} \frac{1}{K} \sum_{k=1}^K \textnormal{tr}\mathbb{E}[P_{k|k}]$ is bounded and $\liminf_{K \rightarrow \infty} \frac{1}{K} \sum_{k=1}^K \textnormal{tr}\mathbb{E}[P_{e,k|k}]$ is unbounded. 
\end{theorem}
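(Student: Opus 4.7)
I would exhibit an explicit threshold policy, parameterized by an integer $T\ge 0$, and show that for $T$ chosen large enough as a function of $\lambda_e$ and $|\sigma_{\max}(A)|$ it delivers the desired separation. Specifically, take $\nu_k=1$ if and only if $P_{k-1|k-1}\ge f^T(\bar P)$, i.e., the sensor suspends transmission for $T$ steps after every successful remote reception and then transmits continuously until the next success.

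Under this policy $\{P_{k|k}\}$ is a positive recurrent Markov chain on $\mathcal S$ whose stationary distribution is, by balance, $\pi_j=\pi_T=(T+1/\lambda)^{-1}$ for $0\le j\le T$ and $\pi_{T+m}=(1-\lambda)^m\pi_T$ for $m\ge 0$; in particular the stationary transmission rate is $q:=\mathbb{P}_\pi(\nu_k=1)=1/(\lambda T+1)$. Combining the standard estimate $\textnormal{tr}\, f^j(\bar P)\le C|\sigma_{\max}(A)|^{2j}$ with the hypothesis $(1-\lambda)|\sigma_{\max}(A)|^2<1$, the tail $\sum_{m\ge 0}(1-\lambda)^m\textnormal{tr}\, f^{T+m}(\bar P)$ is a convergent geometric series, so $\mathbb{E}_\pi[\textnormal{tr}\, P_{k|k}]<\infty$ for every finite $T$. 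The ergodic theorem then yields $\limsup_{K\to\infty}\tfrac1K\sum_{k=1}^K \textnormal{tr}\,\mathbb{E}[P_{k|k}]<\infty$, establishing boundedness of the remote estimator's time-averaged error covariance.

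For the eavesdropper, let $\tau_k:=\min\{s\ge 0:\nu_{k-s}\gamma_{e,k-s}=1\}$ so that $P_{e,k|k}=f^{\tau_k}(\bar P)$. Because $\{\gamma_{e,k}\}$ is i.i.d.\ Bernoulli$(\lambda_e)$ and is independent of $\{\nu_k\}$ (a measurable function of $\gamma_{0:k-1}$ only), conditioning on the policy gives
\begin{equation*}
\mathbb{P}_\pi(\tau_k\ge m) \;=\; \mathbb{E}_\pi\bigl[(1-\lambda_e)^{N_{k,m}}\bigr] \;\ge\; (1-\lambda_e)^{mq},
\end{equation*}
where $N_{k,m}:=\sum_{s=0}^{m-1}\nu_{k-s}$ and the last inequality is Jensen's applied to the convex map $x\mapsto(1-\lambda_e)^x$. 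Using $\textnormal{tr}\, f^m(\bar P)\ge c|\sigma_{\max}(A)|^{2m}$ (valid since $\bar P>0$ and $A$ is unstable) together with summation by parts, $\mathbb{E}_\pi[\textnormal{tr}\, P_{e,k|k}]=\infty$ whenever $|\sigma_{\max}(A)|^2(1-\lambda_e)^q\ge 1$, equivalently
\begin{equation*}
\lambda T+1 \;\ge\; \frac{\ln\bigl(1/(1-\lambda_e)\bigr)}{\ln|\sigma_{\max}(A)|^2}.
\end{equation*}
Since $|\sigma_{\max}(A)|>1$ and $\lambda_e<1$, the right-hand side is finite and a suitable finite $T$ always exists. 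Because $\mathbb{P}(\tau_k\ge m)$ converges to $\mathbb{P}_\pi(\tau\ge m)$ for each $m$ as $k\to\infty$, the same divergence argument applied to each partial sum forces $\mathbb{E}[\textnormal{tr}\, P_{e,k|k}]\to\infty$ as $k\to\infty$, so its Cesaro average diverges and $\liminf_{K\to\infty}\tfrac1K\sum_{k=1}^K\textnormal{tr}\,\mathbb{E}[P_{e,k|k}]=\infty$.

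The principal obstacle is that the $\{\nu_k\}$ are correlated across time (they depend on the remote's reception history), so the eavesdropper's reset events $\{\nu_k\gamma_{e,k}\}$ are not an i.i.d.\ Bernoulli sequence and the Sinopoli-type stability criterion cannot be invoked directly as in the naive randomized-thinning policy (which requires $\lambda>\lambda_e$). The Jensen step above, which relies only on the independence of the eavesdropping channel from the policy-driven decisions, replaces the correlated $\nu_k$'s by their stationary mean rate $q$ and makes the threshold-policy argument rigorous for every $\lambda_e<1$.
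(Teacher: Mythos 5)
Your proposal is correct, and it reaches the conclusion by a genuinely different route than the paper. The paper also uses the threshold policy ``transmit iff $P_{k-1|k-1}\ge f^t(\bar P)$'', but for the eavesdropper it simply lower-bounds $\frac{1}{K}\sum_k \textnormal{tr}\,\mathbb{E}[P_{e,k|k}]$ by the contribution of the single event $\omega$ on which every transmission is received by the estimator and missed by the eavesdropper: on $\omega$ the eavesdropper never resets, so $\textnormal{tr}\,P_{e,K|K}\gtrsim |\sigma_{\max}(A)|^{2K}$, while $\mathbb{P}(\omega)=(\lambda(1-\lambda_e))^{\lfloor K/(t+1)\rfloor}$, and the product diverges once $|\sigma_{\max}(A)|^{2(t+1)}\lambda(1-\lambda_e)>1$. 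You instead compute the stationary law of the estimator's covariance chain, extract the stationary transmission rate $q=1/(\lambda T+1)$, and use Jensen's inequality (exploiting the independence of $\{\gamma_{e,k}\}$ from the $\gamma$-driven decisions) to lower-bound the reset-age tail by $(1-\lambda_e)^{mq}$. The paper's single-event argument is shorter and avoids any stationarity analysis; your argument buys a sharper sufficient condition on the threshold ($\lambda T+1\ge \ln(1/(1-\lambda_e))/\ln|\sigma_{\max}(A)|^2$, with no $\lambda$ inside the logarithm) and the stronger conclusion that $\mathbb{E}[\textnormal{tr}\,P_{e,k|k}]\to\infty$ pointwise in $k$, not merely in Cesaro average. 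Two loose ends you should tighten: the bound $\textnormal{tr}\,f^j(\bar P)\le C|\sigma_{\max}(A)|^{2j}$ is false for non-normal $A$ (there can be polynomial or $s_{\max}(A)^{2j}$ factors); replace it by $C_\epsilon(|\sigma_{\max}(A)|+\epsilon)^{2j}$ via Gelfand's formula and choose $\epsilon$ so that $(1-\lambda)(|\sigma_{\max}(A)|+\epsilon)^2<1$, which still gives a convergent tail. And the final passage from the stationary bound $\mathbb{P}_\pi(\tau\ge m)\ge(1-\lambda_e)^{mq}$ to the time-$k$ quantities needs a sentence: fix $M$, use Abel summation on the full (finite) sum so the boundary term vanishes, keep only the first $M$ terms, and invoke aperiodicity of the (positive recurrent) chain to get $\mathbb{E}[N_{k,m}]\to mq$ for each fixed $m\le M$; letting $M\to\infty$ then gives $\liminf_k\mathbb{E}[\textnormal{tr}\,P_{e,k|k}]=\infty$. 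Both are routine and do not affect the validity of your approach.
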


\begin{proof}
The proof is by construction of a policy with the required properties. Consider the threshold policy which transmits at time $k$ if and only if $P_{k-1|k-1} \geq f^t(\bar{P})$ for some $t \in \mathbb{N}$. Since $\lambda > 1 - \frac{1}{|\sigma_{\max}(A)|^2}$, one can show using results from Section IV-C of \cite{LeongDeyQuevedo_TAC} that $\lim_{K \rightarrow \infty} \frac{1}{K} \sum_{k=1}^K \textnormal{tr}\mathbb{E}[P_{k|k}] < \infty$ for any $t \in \mathbb{N}$. 

Now choose a horizon $K > t$. Consider the event $\omega$ where each transmission  is successfully received at the remote estimator, and unsuccessfully received by the eavesdropper. Using an argument similar to \cite{ShiEpsteinTiwariMurray}, we will show that the contribution of this event $\omega$ will already cause the expected eavesdropper covariance to become unbounded.  Under this event, and using the threshold policy  above, the number of transmissions that occur over the horizon $K$ is $\lfloor K/(t+1) \rfloor$, and the eavesdropper error covariances are given by $P_{e,k|k} = f^k(\bar{P}), k=1,\dots,K$. The probability of this event occurring is $(\lambda(1-\lambda_e))^{\lfloor K/(t+1) \rfloor}$. Let $\omega^c$ denote the complement of $\omega$. Then we have
\begin{equation*}
\begin{split}
& \frac{1}{K} \sum_{k=1}^K\textrm{tr} \mathbb{E}[ P_{e,k|k}] \\ & = \frac{1}{K} \!\sum_{k=1}^K   \textrm{tr}\mathbb{E}[ P_{e,k|k}| \omega] \! \times\! \mathbb{P}(\omega) +  \frac{1}{K} \!\sum_{k=1}^K \textrm{tr} \mathbb{E}[ P_{e,k|k}| \omega^c] \!\times \!\mathbb{P}(\omega^c) \\
 &> \frac{1}{K} \sum_{k=1}^K   \textrm{tr}\mathbb{E}[ P_{e,k|k}| \omega]  \mathbb{P}(\omega) \\
& = \frac{1}{K} \sum_{k=1}^K \textrm{tr}  \Big(A^k \bar{P} (A^k)^T + \sum_{m=0}^{k-1} A^m Q (A^m)^T \Big) \\ & \quad\quad\quad\quad \times (\lambda(1-\lambda_e))^{\lfloor K/(t+1) \rfloor} \\
& > \frac{1}{K} \textrm{tr} (A^K \bar{P} (A^K)^T)  (\lambda(1-\lambda_e))^{K/(t+1)} \\
& \rightarrow \infty \textrm{ as } K \rightarrow \infty,
\end{split}
\end{equation*}
where the last line holds if $|\sigma_{\max}(A)|  (\lambda(1-\lambda_e))^{1/2(t+1)} > 1$, or equivalently if
\begin{equation}
\label{lambda_e_condition_infinite_horizon}
\lambda_e < 1 - \frac{1}{\lambda |\sigma_{\max}(A)|^{2(t+1)} }.
\end{equation}
Since $|\sigma_{\max}(A)| > 1$,  the condition (\ref{lambda_e_condition_infinite_horizon}) will be satisfied for any $\lambda_e <1$ when $t$ is sufficiently large. As $ \frac{1}{K} \sum_{k=1}^K \textnormal{tr}\mathbb{E}[P_{k|k}] $ remains bounded for every $t \in \mathbb{N}$, the result follows. 
\end{proof}

In summary, the threshold policy which transmits at time $k$ if and only if $P_{k-1|k-1} \geq f^t(\bar{P})$, with $t$ large enough that condition  (\ref{lambda_e_condition_infinite_horizon}) is satisfied, will have the required properties.

\begin{remark}
In a similar setup but transmitting measurements and without using feedback acknowledgements, mechanisms were derived in \cite{TsiamisGatsisPappas} for making the expected eavesdropper error covariance unbounded while keeping the expected estimation error covariance bounded, under the more restrictive condition that $\lambda_e < \lambda$.  In a different context with coding over uncertain wiretap channels, it was shown in \cite{WieseJohansson} that for unstable systems one can keep the estimation error at the legitimate receiver bounded while the eavesdropper estimation error becomes unbounded for a sufficiently large coding block length. 
\end{remark}

\section{Eavesdropper Error Covariance  Unknown at Remote Estimator}
\label{partial_CSI_sec}
In order to construct $P_{e,k|k}$ at the remote estimator as per Section \ref{full_CSI_sec}, the process $\{\gamma_{e,k}\}$ for the eavesdropper's channel needs to be known, which in practice may be difficult to achieve. In this section, we consider the situation where the remote estimator knows only the probability of successful eavesdropping $\lambda_e$ and not the actual realizations $\gamma_{e,k}$. Thus the transmit decisions $\nu_k$ can only depend on $P_{k-1|k-1}$ and our beliefs of $P_{e,k-1|k-1}$ constructed from knowledge of previous $\nu_{k}$'s. We will first derive the recursion for the conditional distribution of error covariances at the remote estimator (i.e. the ``belief states'' \cite{Bertsekas_DP1}), and then consider the optimal transmission scheduling problem. 

\subsection{Conditional Distribution of Error Covariances at Eavesdropper}
\label{belief_recursion_sec}
Define the belief vector
\begin{equation}
\label{belief_vector}
\pi_{e,k} = \left[ \begin{array}{c} \pi_{e,k}^{(0)} \\ \pi_{e,k}^{(1)} \\ \vdots \\ \pi_{e,k}^{(K)} \end{array} \right] \triangleq 
\left[ \begin{array}{c} \mathbb{P}\big(P_{e,k|k} = \bar{P} | \nu_0, \dots, \nu_k \big) \\  \mathbb{P}\big(P_{e,k|k} = f(\bar{P}) | \nu_0, \dots, \nu_k \big)  \\ \vdots \\  \mathbb{P}\big(P_{e,k|k} = f^K(\bar{P}) | \nu_0, \dots, \nu_k \big)  \end{array} \right]
\end{equation}
We note  that by our assumption of $P_{e,0|0}=\bar{P}$, we have $\pi_{e,k}^{(K)} \triangleq \mathbb{P}\big(P_{e,k|k} = f^K(\bar{P}) | \nu_0, \dots, \nu_k \big) = 0$ for $k < K$. Denote the set of all possible $\pi_{e,k}$'s by $\Pi_e \subseteq \mathbb{R}^{K+1}$. 

The vector $\pi_{e,k}$ represents our beliefs on $P_{e,k|k}$ given the transmission decisions $\nu_0, \dots, \nu_k $. In order to formulate the transmission scheduling problem as a partially observed problem in the next subsection, we first want to derive a recursive relationship between $\pi_{e,k+1}$ and $\pi_{e,k}$ given the next transmission decision $\nu_{k+1}$. When  $\nu_{k+1} = 0$, we have $P_{e,k+1|k+1} = f( P_{e,k|k} )$ with probability one, and thus  $\pi_{e,k+1} = \left[\begin{array}{cccc} 0 & \pi_{e,k}^{(0)} & \dots & \pi_{e,k}^{(K-1)} \end{array} \right]^T$. When $\nu_{k+1} = 1$, then $P_{e,k+1|k+1} = \bar{P}$ with probability $\lambda_e$ and $P_{e,k+1|k+1} = f( P_{e,k|k} )$ with probability $1-\lambda_e$, and thus  
$\pi_{e,k+1} = \left[\begin{array}{cccc}  \lambda_e & (1-\lambda_e)\pi_{e,k}^{(0)}  & \dots & (1-\lambda_e) \pi_{e,k}^{(K-1)}  \end{array} \right]^T$. 

Hence, defining  
\begin{equation*}
\begin{split}
&\Phi(\pi_{e}, \nu) \\ & \triangleq  \left\{ \!\! \begin{array}{lc} 
\left[\begin{array}{cccc} 0 & \pi_{e}^{(0)} & \dots & \pi_{e}^{(K-1)} \end{array} \right]^T\!, & \nu = 0   \\ 
  \left[\begin{array}{cccc}  \lambda_e & (1\!-\!\lambda_e) \pi_{e}^{(0)}  & \dots & (1\!-\!\lambda_e) \pi_{e}^{(K-1)}  \end{array} \right]^T\!, & \nu = 1 \end{array}\right. 
\end{split}
\end{equation*}
we obtain the recursive relationship 
$$ \pi_{e,k+1 } = \Phi(\pi_{e,k}, \nu_{k+1}) .$$

\subsection{Optimal Transmission Scheduling}
We again wish to minimize a linear combination of the expected error covariance at the remote estimator and the negative of the expected error covariance at the eavesdropper. Since $P_{e,k-1|k-1}$ is not available, the optimization problem will now be formulated as a partially observed one with $\nu_k$ dependent on $(P_{k-1|k-1}, \pi_{e,k-1})$. 
We then have the following problem (cf. (\ref{finite_horizon_problem_full_CSI_alt})):
\begin{equation}
\label{finite_horizon_problem_partial_CSI_alt}
\begin{split}
&  \min_{\{\nu_k\} } \sum_{k=1}^K \mathbb{E}  \Big[\beta(\nu_k \lambda \textrm{tr} \bar{P} + (1-\nu_k \lambda)\textrm{tr}f(P_{k-1|k-1}))
 \\ & \quad - (1-\beta) \Big(\nu_k \lambda_e \textrm{tr} \bar{P} + (1-\nu_k \lambda_e) \sum_{i=0}^K \textrm{tr} f^{i+1} (\bar{P}) \pi_{e,k-1}^{(i)} \Big) \Big].
\end{split}
\end{equation}

Problem (\ref{finite_horizon_problem_partial_CSI_alt}) can be solved by using the dynamic programming algorithm for partially observed problems \cite{Bertsekas_DP1}.  Let the functions $\mathcal{J}_k(\cdot,\cdot): \mathcal{S} \times \Pi_e \rightarrow \mathbb{R}$ be defined recursively as:
\begin{equation}
\label{Jhat_fn_defn}
\begin{split}
&\mathcal{J}_{K+1}(P,\pi_e)  =0 \\
&\mathcal{J}_k(P,\pi_e)  = \min_{\nu \in \{0,1\}} \Big\{ \beta(\nu \lambda \textrm{tr} \bar{P} + (1-\nu \lambda)\textrm{tr}f(P))   \\ & \quad 
- (1-\beta) \Big(\nu \lambda_e \textrm{tr} \bar{P} + (1-\nu \lambda_e) \sum_{i=0}^K \textrm{tr} f^{i+1} (\bar{P}) \pi_{e}^{(i)} \Big) \\ & \quad +  \nu \lambda  \mathcal{J}_{k+1}\big(\bar{P}, \Phi(\pi_e,1)\big) +  \nu (1-\lambda)  \mathcal{J}_{k+1}\big(f(P),\Phi(\pi_e,1)\big) 
\\ & \quad 
+ (1 - \nu)  \mathcal{J}_{k+1}\big(f(P),\Phi(\pi_e,0)\big)  \Big\} 
\end{split}
\end{equation}
for $k=K,\dots,1$. Then problem  (\ref{finite_horizon_problem_partial_CSI_alt}) is solved numerically by computing $\mathcal{J}_k(P_{k-1|k-1}, \pi_{e,k-1})$ for $k = K,K-1,\dots,1$.

\begin{remark}
\label{belief_discret_remark}
In the finite horizon situation, the number of possible values of $(P_{k|k},\pi_{e,k})$ is again finite, but now of cardinality $(K+1) \times (1+2+\dots+2^K) = (K+1) (2^{K+1}-1)$. This is exponential in $K$, which may be very large when $K$ is large. To reduce the complexity, one could consider instead probability distributions
$$\left[ \begin{array}{c} \pi_{e,k}^{(0)} \\ \pi_{e,k}^{(1)} \\ \vdots \\ \pi_{e,k}^{(N-1)} \\ \pi_{e,k}^{(N)}\end{array} \right] \triangleq 
\left[ \begin{array}{c} \mathbb{P}\big(P_{e,k|k} = \bar{P} | \nu_0, \dots, \nu_k \big) \\  \mathbb{P}\big(P_{e,k|k} = f(\bar{P}) | \nu_0, \dots, \nu_k \big)  \\ \vdots \\  \mathbb{P}\big(P_{e,k|k} = f^{N-1}  (\bar{P})  | \nu_0, \dots, \nu_k \big)\\ \mathbb{P}\big(P_{e,k|k} \geq f^{N} (\bar{P}) | \nu_0, \dots, \nu_k \big)  \end{array} \right]$$
for some $N < K$, and  update the beliefs via:
\begin{equation*}
\begin{split}
&\Phi^N(\pi_{e}, \nu) \\ & \triangleq  \left\{ \!\!\begin{array}{lc} 
\left[\begin{array}{ccccc} 0 & \pi_{e}^{(0)} & \dots & \pi_{e}^{(N-2)}  & \pi_{e}^{(N-1)} \!\!+\! \pi_{e}^{(N)} \end{array} \right]^T\!\!, & \nu = 0   \\ 
  \left[\begin{array}{cccc}  \lambda_e & (1-\lambda_e) \pi_{e}^{(0)}  & \dots & (1-\lambda_e) \pi_{e}^{(N-2)} \end{array}\right.  \\ \left.\begin{array}{c}\quad \quad\quad \quad \quad\quad(1-\lambda_e) (\pi_{e}^{(N-1)} +    \pi_{e}^{(N)})\end{array} \right]^T\!\!, & \nu = 1 \end{array}\right. 
\end{split}
\end{equation*}
Discretizing the space of $\pi_{e,k}$ to include the cases with up to $N-1$ successive packet drops or non-transmissions, with the remaining cases grouped into the single component  $ \pi_{e,k}^{(N)}$, will then give a state space of cardinality $(K+1) (2^{N+1}-1)$. 
\end{remark}

\subsection{Structural Properties}
Denote  the difference in the values of $\mathcal{J}_k(P,\pi_e)$ when the minimizing $\nu^*$ are 0 and 1 by
\begin{equation}
\label{psi_fn_defn}
\begin{split}
&\psi_k(P,\pi_e)  \triangleq \beta \lambda  \textrm{tr}f(P) - \beta \lambda \textrm{tr} \bar{P} \\ & \quad - (1-\beta) \lambda_e \Big(\sum_{i=0}^K \textrm{tr} f^{i+1}(\bar{P}) \pi_{e}^{(i)} - \textrm{tr}  \bar{P} \Big) \\
& \quad +  \mathcal{J}_{k+1}\big(f(P),\Phi(\pi_e,0)\big)- \lambda \mathcal{J}_{k+1}\big(\bar{P},\Phi(\pi_e,1)\big) \\ & \quad  -    (1-\lambda)  \mathcal{J}_{k+1}\big(f(P), \Phi(\pi_e,1)\big). 
\end{split}
\end{equation}

\begin{theorem}
\label{fixed_Pi_e_thm_partial_CSI_alt}
For fixed $\pi_{e,k-1}$, the optimal solution  to problem (\ref{finite_horizon_problem_partial_CSI_alt}) is a threshold policy on $P_{k-1|k-1}$ of the form
\begin{equation*}
\begin{split}
\nu_k^*(P_{k-1|k-1},\pi_{e,k-1}) = \left\{ \begin{array}{lcl} 0 & , & \textnormal{if } P_{k-1|k-1} \leq P^* \\ 1 & , & \textnormal{otherwise} \end{array} \right.
\end{split}
\end{equation*}
where the threshold $P^*$ depends on $k$ and $\pi_{e,k-1}$.
\end{theorem}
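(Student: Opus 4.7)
The plan is to reduce the claim to showing that the one–step advantage of transmission, $\psi_k(P,\pi_e)$ defined in (\ref{psi_fn_defn}), is an \emph{increasing} function of $P$ on the totally ordered set $\mathcal{S}$ for every fixed $\pi_e$ and every $k$. Since $\nu_k^*=1$ exactly when $\psi_k(P_{k-1|k-1},\pi_{e,k-1})>0$ and $\nu_k^*=0$ otherwise, monotonicity in $P$ immediately yields the threshold structure, with $P^*$ the largest element of $\mathcal{S}$ at which $\psi_k$ is still nonpositive. This mirrors exactly the argument used in the proof of Theorem \ref{structural_thm_full_CSI_alt}(i), with $\pi_e$ playing the role that $P_e$ played there.

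Turning to $\psi_k$, the term $\beta\lambda\,\mathrm{tr}\,f(P)$ is increasing in $P$ by Lemma \ref{f_composition_lemma}, while the terms not involving $P$ (i.e.\ $\beta\lambda\,\mathrm{tr}\,\bar P$, the two expressions in $1-\beta$, and $-\lambda\,\mathcal{J}_{k+1}(\bar P,\Phi(\pi_e,1))$) are irrelevant for monotonicity in $P$. Hence the entire argument reduces to proving that
\[
\Delta_k(P;\pi_e)\;\triangleq\;\mathcal{J}_{k+1}\!\bigl(f(P),\Phi(\pi_e,0)\bigr)\;-\;(1-\lambda)\,\mathcal{J}_{k+1}\!\bigl(f(P),\Phi(\pi_e,1)\bigr)
\]
is increasing in $P$ on $\mathcal{S}$, for each fixed $\pi_e$ and each $k$.

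I would establish this by backward induction on $k$, strengthening the statement so the induction closes. Specifically, the induction hypothesis I propose is: for every $k\in\{1,\dots,K+1\}$, every $n\in\mathbb{N}$, and all $\pi_e,\pi_e'\in\Pi_e$, the map
\[
P\;\longmapsto\;\mathcal{J}_k\!\bigl(f^n(P),\pi_e\bigr)\;-\;(1-\lambda)\,\mathcal{J}_k\!\bigl(f^n(P),\pi_e'\bigr)
\]
is increasing on $\mathcal{S}$. The base case $k=K+1$ is immediate since $\mathcal{J}_{K+1}\equiv 0$. For the inductive step I would expand both $\mathcal{J}_k$ values using the Bellman recursion (\ref{Jhat_fn_defn}), substituting $Q=f^n(P)$, $Q=f^n(P')$ for $P\geq P'$. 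Proceeding exactly as in the display in the proof of Theorem \ref{structural_thm_full_CSI_alt}(i), I would lower bound the resulting four‑term expression by a single $\min_{\nu\in\{0,1\}}$ of the combined integrand. Inside that minimum, the $\mathrm{tr}\,\bar P$ contributions cancel, and so do the terms $\mathcal{J}_{k+1}(\bar P,\Phi(\pi_e,1))$ and $\mathcal{J}_{k+1}(\bar P,\Phi(\pi_e',1))$ which are independent of $P$ (they appear with identical coefficients at $P$ and at $P'$). What remains, for each $\nu\in\{0,1\}$, is a weighted combination of $\beta(1-\nu\lambda)\bigl(\mathrm{tr}\,f^{n+1}(P)-\mathrm{tr}\,f^{n+1}(P')\bigr)$ — nonnegative by Lemma \ref{f_composition_lemma} — and differences of the form
\[
\mathcal{J}_{k+1}\!\bigl(f^{n+1}(P),\widetilde\pi\bigr)-(1-\lambda)\,\mathcal{J}_{k+1}\!\bigl(f^{n+1}(P),\widetilde\pi'\bigr)-\bigl[\text{same with $P\!\to\!P'$}\bigr],
\]
each of which is $\geq 0$ by the induction hypothesis applied with $n+1$ in place of $n$ and with the appropriate choices $\widetilde\pi=\Phi(\pi_e,0),\Phi(\pi_e',0)$ and $\widetilde\pi'=\Phi(\pi_e,1),\Phi(\pi_e',1)$.

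The main obstacle is purely bookkeeping: correctly carrying the $\min_{\nu}$ through the four‑term difference while keeping track of which $\mathcal{J}_{k+1}$‑evaluations cancel and which must be grouped into pairs matching the form of the induction hypothesis. Everything else — the monotonicity of $\mathrm{tr}\,f^n$ and the validity of passing from the difference of two minima to a single $\min_\nu$ of the integrand via the same manipulation as in Theorem \ref{structural_thm_full_CSI_alt}(i) — is standard once the generalized statement above is set up, and the proof can simply refer to that earlier calculation for the omitted algebraic details.
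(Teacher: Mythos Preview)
Your proposal is correct and follows essentially the same approach as the paper: reduce to monotonicity of $\psi_k(P,\pi_e)$ in $P$, isolate the key term $\mathcal{J}_{k}\bigl(f(P),\Phi(\pi_e,0)\bigr)-(1-\lambda)\mathcal{J}_{k}\bigl(f(P),\Phi(\pi_e,1)\bigr)$, and prove the strengthened statement that $\mathcal{J}_k\bigl(f^n(P),\pi_e\bigr)-(1-\lambda)\mathcal{J}_k\bigl(f^n(P),\pi_e'\bigr)$ is increasing in $P$ for all $k$, $n\in\mathbb{N}$, and $\pi_e,\pi_e'\in\Pi_e$ by backward induction mirroring Theorem~\ref{structural_thm_full_CSI_alt}(i). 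The paper's proof is equally brief and defers the algebraic details to that earlier argument, exactly as you do.
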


\begin{proof}
 Theorem \ref{fixed_Pi_e_thm_partial_CSI_alt} will be proved by showing that 
for fixed $\pi_e$, the functions $\psi_k(P,\pi_e) $ defined by (\ref{psi_fn_defn}) are increasing functions of $P $ for $k=1,\dots,K$. This will be the case if we can show that
$$\mathcal{J}_{k}\big(f(P),\Phi(\pi_e,0)\big)-    (1-\lambda)  \mathcal{J}_{k}\big(f(P), \Phi(\pi_e,1)\big)  $$
is an increasing function of $P$ for all $k$. Using a similar induction argument as in the proof of Theorem \ref{structural_thm_full_CSI_alt}(i), we can establish the slightly more general statement that 
$$\mathcal{J}_{k}\big(f^n(P),\pi_e\big)-    (1-\lambda)  \mathcal{J}_{k}\big(f^n(P), \pi_e'\big)  $$
is an increasing function of $P$ for all $k$, all $n\in \mathbb{N}$ and all $\pi_e,\pi_e' \in \Pi_e$. 
\end{proof}

\subsection{Infinite Horizon}
In the infinite horizon situation, we note that Theorem \ref{infinite_horizon_unbounded_thm} will still hold, as the threshold policy constructed in the proof does not require knowledge of the eavesdropper error covariances.

\section{An Alternative Measure of Security}
\label{alt_measures_sec}
We have so far studied security from the viewpoint of trying to keep the eavesdropper error covariance above a certain level, which has also been used in other works such as \cite{AysalBarner,ReboredoXavierRodrigues,GuoLeongDey_TAES,GuoLeongDey_SIPN}. However other measures of security are possible (and may be more appropriate depending on the situation).
One alternative measure of security is restricting the amount of information revealed to the eavesdropper, where information is defined in an information theoretic sense \cite{CoverThomas}. More specifically, we want to restrict the sum of conditional mutual informations  (also known as  the \emph{directed  information} \cite{Massey_ISITA}), revealed to the eavesdropper. The directed information measure has also been used in control system design with data rate constraints and source coding on the feedback path \cite{SilvaDerpichOstergaard}, and joint sensor and controller design for LQG control  \cite{TanakaSandberg}. 

Let $z_{e,k} \triangleq \gamma_{e,k} \bar{y}_k = \gamma_{e,k} \hat{x}_{k|k}^s$ be the signal received by the eavesdropper. The conditional mutual information
$$I_{e,k} \triangleq I(x_k; z_{e,k} | z_{e,0},\dots,z_{e,k-1})$$
 between $x_k$ and $z_{e,k}$ has the expression (see \cite{TanakaSandberg,CoverThomas}):
\begin{equation}
\label{directed_mutual_info}
\begin{split}
I_{e,k} & = \frac{1}{2} \log \det P_{e,k|k-1} - \frac{1}{2} \log \det P_{e,k|k} \\ & = \frac{1}{2} \log \det f(P_{e,k-1|k-1}) - \frac{1}{2} \log \det P_{e,k|k} 
\end{split}
\end{equation}

\subsection{Eavesdropper Error Covariance  Known at Remote Estimator}
\label{full_CSI_sec2}
We may consider  the finite horizon problem:
\begin{equation}
\label{finite_horizon_problem_full_CSI}
\begin{split}
& \min_{\{\nu_k \} } \sum_{k=1}^K \mathbb{E}[\beta \textrm{tr} P_{k|k} + (1-\beta) I_{e,k}] 
 \\ & = \min_{\{\nu_k \} } \sum_{k=1}^K \mathbb{E}[\mathbb{E}[\beta \textrm{tr} P_{k|k} + (1-\beta) I_{e,k} \\ & \quad\quad\quad\quad\quad\quad\quad\quad|P_{k-1|k-1}, P_{e,k-1|k-1}, \nu_k]] 
 \\ & = \min_{\{\nu_k \} } \sum_{k=1}^K \mathbb{E}\Big[\beta(\nu_k \lambda \textrm{tr} \bar{P} + (1-\nu_k \lambda)\textrm{tr}f(P_{k-1|k-1})) \\ & \quad + (1-\beta) \nu_k \lambda_e \Big(\frac{1}{2} \log \det f(P_{e,k-1|k-1}) - \frac{1}{2} \log \det \bar{P}\Big)\Big], 
\end{split}
\end{equation}
for some $\beta \in (0,1)$, noting that in the computation of $\mathbb{E}[I_{e,k}| P_{k-1|k-1}, P_{e,k-1|k-1},\nu_k]$, $P_{e,k|k} = \bar{P}$ when $\nu_k=1$ and $\gamma_{e,k}=1$.  The design parameter $\beta$ in problem (\ref{finite_horizon_problem_full_CSI}) now controls the tradeoff between estimation performance at the remote estimator and the amount of information revealed to the eavesdropper, with a larger $\beta$ placing more importance on keeping $\mathbb{E}[P_{k|k}]$ small, and a smaller $\beta$ placing more importance on keeping $\mathbb{E}[I_{e,k}]$ small.

We have the following structural results:
\begin{theorem}
\label{structural_thm_full_CSI}
(i) For fixed $P_{e,k-1|k-1}$, the optimal solution to problem (\ref{finite_horizon_problem_full_CSI}) is a threshold policy on $P_{k-1|k-1}$ of the form
\begin{equation*}
\begin{split}
\nu_k^*(P_{k-1|k-1},P_{e,k-1|k-1}) = \left\{ \begin{array}{lcl} 0 & , & \textnormal{if } P_{k-1|k-1} \leq P^t_k \\ 1 & , & \textnormal{otherwise} \end{array} \right.
\end{split}
\end{equation*}
where the threshold $P^t_k \in \mathcal{S}$ depends on $k$ and $P_{e,k-1|k-1}$.
\\(ii) For fixed $P_{k-1|k-1}$, the optimal solution to problem (\ref{finite_horizon_problem_full_CSI}) is a threshold policy on $P_{e,k-1|k-1}$ of the form
\begin{equation*}
\begin{split}
\nu_k^*(P_{k-1|k-1},P_{e,k-1|k-1}) = \left\{ \begin{array}{lcl} 0 & , & \textnormal{if } P_{e,k-1|k-1} \geq P_{e,k}^t \\ 1 & , & \textnormal{otherwise} \end{array} \right.
\end{split}
\end{equation*}
where the threshold $P_{e,k}^t\in\mathcal{S}$ depends on $k$ and $P_{k-1|k-1}$. 
\end{theorem}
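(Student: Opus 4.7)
My plan is to mirror the argument used to establish Theorem~\ref{structural_thm_full_CSI_alt}, modifying only those pieces that reflect the new instantaneous cost. First, I would define value functions $\tilde J_k(P,P_e)$ by a Bellman recursion analogous to~(\ref{J_fn_defn}) in which the term $-(1-\beta)(\nu\lambda_e\,\textrm{tr}\bar P+(1-\nu\lambda_e)\,\textrm{tr}f(P_e))$ is replaced by $(1-\beta)\nu\lambda_e\bigl(\tfrac{1}{2}\log\det f(P_e)-\tfrac{1}{2}\log\det\bar P\bigr)$, exploiting that $I_{e,k}=0$ whenever $\nu_k\gamma_{e,k}=0$. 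Since the transition structure for $(P_{k|k},P_{e,k|k})$ is unchanged, the same four branches weighted by $\lambda\lambda_e$, $\lambda(1-\lambda_e)$, $(1-\lambda)\lambda_e$, and $(1-\lambda)(1-\lambda_e)$ appear in the recursion. I would then introduce the switching function $\tilde\phi_k(P,P_e):=\tilde J_k^{\nu=0}(P,P_e)-\tilde J_k^{\nu=1}(P,P_e)$ in the spirit of~(\ref{phi_fn_defn}), so that $\nu^*=1$ is equivalent to $\tilde\phi_k(P,P_e)>0$.

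For part~(i), it suffices to prove that $\tilde\phi_k(\cdot,P_e)$ is increasing on $\mathcal S$ for each fixed $P_e$. The added log-determinant term does not depend on $P$ and therefore drops out of this monotonicity analysis, leaving the auxiliary claim that $[1-(1-\lambda)(1-\lambda_e)]\tilde J_k(f^n(P),P_e)-(1-\lambda)\lambda_e\,\tilde J_k(f^n(P),P_e')$ is increasing in $P$ for all $k$, all $n\in\mathbb N$, and all $P_e,P_e'\in\mathcal S$. This is established by exactly the same backward induction as in the proof of Theorem~\ref{structural_thm_full_CSI_alt}(i): the only $P$-dependent piece of the instantaneous cost is still $\beta(1-\nu\lambda)\,\textrm{tr}f(P)$, which is increasing by Lemma~\ref{f_composition_lemma}, and the inductive expansion of $\tilde J_{k+1}$ under both $\nu=0$ and $\nu=1$ reproduces the long inequality chain displayed there, with $\tilde J$ in place of $J$.

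For part~(ii) I would show that $\tilde\phi_k(P,\cdot)$ is decreasing on $\mathcal S$ for each fixed $P$. The key new observation is that $\log\det f(P_e)$ is itself increasing along the chain $\bar P\leq f(\bar P)\leq f^2(\bar P)\leq\cdots$, because $0<X\leq Y$ implies $\det X\leq\det Y$ (seen from the eigenvalues of $Y^{-1/2}XY^{-1/2}\leq I$). Hence $-(1-\beta)\lambda_e\cdot\tfrac{1}{2}\log\det f(P_e)$ already contributes in the desired direction. What remains is to show, again by backward induction, that $[1-(1-\lambda)(1-\lambda_e)]\tilde J_k(P,f^n(P_e))-\lambda(1-\lambda_e)\,\tilde J_k(P',f^n(P_e))$ is decreasing in $P_e$ for all $k$, $n$, and $P,P'\in\mathcal S$, which is symmetric to the argument sketched for Theorem~\ref{structural_thm_full_CSI_alt}(ii).

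The main obstacle is not conceptual but notational: each induction step requires expanding the $\min_\nu$ inside $\tilde J_{k+1}$ and regrouping terms so that the non-cancelling residuals align with the induction hypothesis, as in the multi-page display in the proof of Theorem~\ref{structural_thm_full_CSI_alt}(i). Because the new log-determinant cost depends only on $P_e$ and contributes the correct sign in part~(ii), no structural modification to that induction is needed, and the proofs of both parts go through by direct adaptation.
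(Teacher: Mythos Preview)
Your treatment of part~(i) is correct and matches the paper: since the log-determinant term does not depend on $P$, it drops out of the monotonicity-in-$P$ analysis, and the induction from Theorem~\ref{structural_thm_full_CSI_alt}(i) carries over verbatim.

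Part~(ii), however, has a real gap. You claim that, after peeling off the explicit term $-(1-\beta)\lambda_e\tfrac{1}{2}\log\det f(P_e)$ from $\tilde\phi_k$, what remains is to show that
\[
[1-(1-\lambda)(1-\lambda_e)]\,\tilde J_k(P,f^n(P_e))-\lambda(1-\lambda_e)\,\tilde J_k(P',f^n(P_e))
\]
is decreasing in $P_e$, and that this follows by the ``symmetric'' induction from Theorem~\ref{structural_thm_full_CSI_alt}(ii). This is not true. Inside the Bellman recursion for $\tilde J_k$ the mutual-information cost enters with a \emph{positive} sign, namely $+(1-\beta)\nu\lambda_e\tfrac{1}{2}\log\det f(P_e)$, so $\tilde J_k(P,\cdot)$ is nondecreasing in its second argument. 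Since $[1-(1-\lambda)(1-\lambda_e)]-\lambda(1-\lambda_e)=\lambda_e>0$, the weighted combination above can be \emph{increasing} in $P_e$; for instance, at $k=K$, take $P$ large enough that $\nu^*=1$ and $P'$ small enough that $\nu^*=0$, and the first summand grows with $P_e$ while the second is constant. Thus the induction hypothesis you propose is false already at the base step, and the argument does not go through ``without structural modification.''

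The paper's remedy is exactly the structural modification you omitted: it folds the log-determinant term into the induction hypothesis, proving instead that
\[
-(1-\beta)\lambda_e\tfrac{1}{2}\log\det f^n(P_e)+[1-(1-\lambda)(1-\lambda_e)]\,\tilde J_k(P,f^n(P_e))-\lambda(1-\lambda_e)\,\tilde J_k(P',f^n(P_e))
\]
is decreasing in $P_e$. In the induction step, after cancellation one is left with a residual of the form $\log\det f^n(P_e)-\log\det f^{n+1}(P_e)$ compared at $P_e\geq P_e'$, and it is precisely Lemma~\ref{logdet_lemma} (that $\log\det f^n(\cdot)-\log\det f^{n+1}(\cdot)$ is increasing) that makes this nonnegative. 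Mere monotonicity of $\log\det$ is not enough; you need this second-order property, which is why the paper singles the lemma out.
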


\begin{proof}
See Appendix \ref{structural_thm_full_CSI_proof}.
\end{proof}

In order to prove Theorem \ref{structural_thm_full_CSI}(ii), we will also need the following:
\begin{lemma}
\label{logdet_lemma}
The function
$$\log \det f^n(P) - \log \det f^{n+1}(P) $$ 
is an increasing function of $P$ for all $n \in \mathbb{N}$.
\end{lemma}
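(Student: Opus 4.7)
The plan is to express the difference $\log\det f^n(P) - \log\det f^{n+1}(P)$ as a simple monotone function of $X \triangleq f^n(P)$, and then to invoke monotonicity of $f^n$ itself. Writing $f^{n+1}(P) = A X A^T + Q$, I would combine the standard determinant identity $\det(AXA^T + Q) = \det(Q)\det(I + X A^T Q^{-1} A)$ with $\det(I + X M) = \det(X)\det(X^{-1} + M)$, both valid for $X > 0$ and $M \triangleq A^T Q^{-1} A \geq 0$. This yields the clean formula
\[
\log\det\bigl(AXA^T + Q\bigr) - \log\det X = \log\det Q + \log\det\bigl(X^{-1} + M\bigr).
\]
Substituting $X = f^n(P)$ and rearranging,
\[
\log\det f^n(P) - \log\det f^{n+1}(P) = -\log\det Q - \log\det\bigl(f^n(P)^{-1} + M\bigr),
\]
so the lemma reduces to showing that the right-hand side is an increasing function of $P$ on $\mathcal{S}$.

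Next I would chain together three elementary monotonicity facts. First, $f(X) = AXA^T + Q$ is monotone in the Loewner order, so its $n$-fold composition $f^n$ is too; hence $P \leq P'$ implies $f^n(P) \leq f^n(P')$. Second, inversion reverses order on the positive definite cone, so $f^n(P)^{-1} \geq f^n(P')^{-1}$, and therefore $f^n(P)^{-1} + M \geq f^n(P')^{-1} + M$. Third, $\log\det$ is monotone increasing on the positive definite cone (immediate from the eigenvalue characterisation), so $\log\det\bigl(f^n(P)^{-1} + M\bigr)$ is a \emph{decreasing} function of $P$, making its negation, and thus the whole expression, increasing in $P$. The only tricky step is the initial algebraic identity; once that is in place, the monotonicity argument is essentially trivial.

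The main thing to watch is that every quantity is well-defined. Under the standing assumptions $Q > 0$, $R > 0$, $(A,C)$ detectable and $(A, Q^{1/2})$ stabilizable, the steady-state posterior $\bar{P}$ is strictly positive definite (via the information-form identity $\bar{P}^{-1} = (\bar{P}^+)^{-1} + C^T R^{-1} C$), and since $f(X) \geq Q > 0$ whenever $X \geq 0$, every element of $\mathcal{S}$ is strictly positive definite. This legitimises the inversion $f^n(P)^{-1}$ and the application of $\log\det$ at each step. Note that invertibility of $A$ is not needed: even when $M = A^T Q^{-1} A$ is only positive semi-definite, the sum $f^n(P)^{-1} + M$ remains strictly positive definite because $f^n(P)^{-1} > 0$.
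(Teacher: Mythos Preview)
Your proof is correct and takes a genuinely different route from the paper's. Both arguments share the same outer reduction: use Loewner monotonicity of $f$ to push the problem from $f^n$ down to the base map $X \mapsto \log\det X - \log\det(AXA^T + Q)$. Where they diverge is in how that base map is shown to be increasing. The paper differentiates: it invokes the characterisation that a scalar function of positive semidefinite matrices is matrix monotone iff its gradient is positive semidefinite, computes $\nabla(\log\det P - \log\det f(P)) = P^{-1} - A^T(APA^T+Q)^{-1}A$, and then applies the matrix inversion lemma to rewrite this as $(P + PA^TQ^{-1}AP)^{-1} \geq 0$. Your approach avoids calculus entirely: the determinant identities collapse the expression to $-\log\det Q - \log\det(X^{-1} + A^TQ^{-1}A)$, after which monotonicity follows from the order-reversal of matrix inversion and monotonicity of $\log\det$. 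The two are really the same computation in disguise --- note that $(P + PMP)^{-1} = P^{-1}(P^{-1}+M)^{-1}P^{-1}$ with $M = A^TQ^{-1}A$, so the paper's gradient is exactly $\nabla\bigl(-\log\det(P^{-1}+M)\bigr)$ --- but your version is more elementary and self-contained, while the paper's gradient criterion is perhaps more systematic if one needs to verify monotonicity of several such expressions.
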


\begin{proof}
See Appendix \ref{logdet_lemma_proof}.
\end{proof}

The infinite horizon counterpart to (\ref{finite_horizon_problem_full_CSI}) is:
 \begin{equation}
\label{infinite_horizon_problem_full_CSI}
\begin{split}
& \min_{\{\nu_k \} } \limsup_{K \rightarrow \infty} \frac{1}{K} \sum_{k=1}^K \mathbb{E}[\beta \textrm{tr} P_{k|k} + (1-\beta) I_{e,k}] \\
 & = \min_{\{\nu_k \} } \limsup_{K \rightarrow \infty} \frac{1}{K} \! \sum_{k=1}^K \!\mathbb{E}\Big[\beta(\nu_k \lambda \textrm{tr} \bar{P} \!+\! (1\!-\!\nu_k \lambda)\textrm{tr}f(P_{k-1|k-1})) \\ & \quad+ (1-\beta) \nu_k \lambda_e \Big(\frac{1}{2} \log \det f(P_{e,k-1|k-1}) - \frac{1}{2} \log \det \bar{P}\Big)\Big], 
\end{split}
\end{equation}

As in Section \ref{infinite_horizon_sec_full_CSI_alt}, when $A$ is unstable $\mathbb{E} [P_{k|k}]$ can be kept  bounded if and only if  
$\lambda > 1 - \frac{1}{|\sigma_{\max}(A)|^2}$. The question now is whether one needs a similar condition on $\lambda_e$ in order to keep $\mathbb{E} [I_{e,k}]$ bounded, and hence ensure  the existence of solutions with bounded cost to problem (\ref{infinite_horizon_problem_full_CSI}). The answer turns out to be ``no'' (i.e. $ \limsup_{K \rightarrow \infty} \frac{1}{K} \sum_{k=1}^K \mathbb{E}[I_{e,k}]$ is bounded for all $\lambda_e$). To show this, for a given $K$, let the random variable $\tau(K)$ denote the number of times  where $\gamma_{e,k} = 1$ for $k \leq K$. Denote the random times between successful eavesdroppings by $N_1, N_2, \dots, N_{\tau(K)}$, with $N_1 + N_2 + \dots + N_{\tau(K)} \leq K$. Then  we have 
\begin{equation}
\label{I_e_bound}
\begin{split}
& \frac{1}{K} \sum_{k=1}^{K}  \mathbb{E}[I_{e,k}] \\& = \frac{1}{K} \mathbb{E} \left[ \sum_{i=1}^{\tau(K)}  \left(\frac{1}{2} \log \det f^{N_i}(\bar{P}) -  \frac{1}{2} \log \det \bar{P} \right) \right]\\
& =  \frac{1}{K} \mathbb{E} \left[ \sum_{i=1}^{\tau(K)} N_i \times \frac{1}{N_i}  \left(\frac{1}{2} \log \det f^{N_i}(\bar{P}) -  \frac{1}{2} \log \det \bar{P} \right) \right]\\
& <  \frac{1}{K} \mathbb{E} \left[ \sum_{i=1}^{\tau(K)} N_i  \Delta_U \right] \leq \Delta_U
\end{split}
\end{equation}
where the first inequality comes from  the following result:
\begin{lemma}
\label{infinite_horizon_upper_bound_lemma}
Let $A$ be an unstable matrix. Then there exists a $\Delta_U < \infty$, dependent only on $A, Q, \bar{P}$, such that
$$ \frac{1}{N} \left(\frac{1}{2} \log \det f^N(\bar{P}) -  \frac{1}{2} \log \det \bar{P} \right)  < \Delta_U$$
for all $N \in \mathbb{N}$. 
\end{lemma}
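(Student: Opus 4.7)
The plan is to exhibit an explicit upper bound on $\log\det f^N(\bar{P})$ that grows at most linearly in $N$, so that after subtracting $\log\det \bar{P}$ and dividing by $N$ the quotient remains uniformly bounded. First I would expand
$$f^N(\bar{P}) = A^N \bar{P}(A^N)^T + \sum_{m=0}^{N-1} A^m Q (A^m)^T,$$
so the task reduces to an operator-norm bound on each summand.

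Next, letting $\rho = |\sigma_{\max}(A)|$, I would pick a small $\epsilon>0$ and invoke Gelfand's formula (equivalently, work in a norm induced by the Jordan form of $A$) to obtain a constant $M(\epsilon)$ with $\|A^k\| \leq M(\epsilon)(\rho+\epsilon)^k$ for every $k \geq 0$; this is the step that absorbs the polynomial factors coming from Jordan blocks. Since $A$ is unstable we may choose $\epsilon$ small enough that $\rho+\epsilon > 1$, and then the triangle inequality together with a geometric series gives
$$\|f^N(\bar{P})\| \leq M(\epsilon)^2 \|\bar{P}\|(\rho+\epsilon)^{2N} + M(\epsilon)^2\|Q\| \frac{(\rho+\epsilon)^{2N}-1}{(\rho+\epsilon)^2-1} \leq C_1 (\rho+\epsilon)^{2N}$$
for a constant $C_1$ depending only on $A$, $Q$, $\bar{P}$, $\epsilon$. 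This is the main technical step; everything afterwards is arithmetic.

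Since $f^N(\bar{P})$ is an $n_x\times n_x$ positive semidefinite matrix, all its eigenvalues are bounded by $\|f^N(\bar{P})\|$, hence
$$\det f^N(\bar{P}) \leq \|f^N(\bar{P})\|^{n_x} \leq C_1^{n_x}(\rho+\epsilon)^{2n_x N}.$$
Taking logs and dividing by $N$ I get
$$\frac{1}{N}\left(\frac{1}{2}\log\det f^N(\bar{P}) - \frac{1}{2}\log\det \bar{P}\right) \leq \frac{n_x \log C_1 - \log\det \bar{P}}{2N} + n_x \log(\rho+\epsilon),$$
and the right-hand side is uniformly bounded over $N\geq 1$ by some finite constant $\Delta_U$ depending only on $A$, $Q$, and $\bar{P}$, as required.

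The only subtle point is in the second step: the exponential bound on $\|A^k\|$ with base arbitrarily close to $\rho$ must be invoked carefully, because if $A$ has non-trivial Jordan structure at its dominant eigenvalues then $\|A^k\|$ can carry a polynomial prefactor $k^{n_x-1}$. Either Gelfand's formula (giving the $\rho+\epsilon$ bound) or explicitly absorbing the polynomial factor via $k^{n_x-1}\leq C_2 (\rho+\epsilon)^k/\rho^k$ disposes of this. Once that is in place, instability ($\rho>1$) ensures the geometric sum is dominated by its final term and the bound propagates cleanly to the determinant.
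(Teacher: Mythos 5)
Your proof is correct and follows essentially the same route as the paper: expand $f^N(\bar{P}) = A^N\bar{P}(A^N)^T + \sum_{m=0}^{N-1}A^mQ(A^m)^T$, bound its largest eigenvalue by a geometric series that is dominated by its last term because $A$ is unstable, and convert to a determinant bound via $\det \leq (\text{largest eigenvalue})^{n_x}$ before taking logarithms. The only real difference is that the paper works with the largest singular value $s_{\max}(A)$ and submultiplicativity of the operator norm, $\|A^k\|\leq \|A\|^k = s_{\max}^k(A)$, which disposes of the Jordan-block polynomial prefactor you discuss without needing Gelfand's formula or an $\epsilon$; your spectral-radius-plus-$\epsilon$ variant is equally valid and yields a slightly tighter exponential rate.
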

\begin{proof}
See Appendix \ref{infinite_horizon_upper_bound_lemma_proof}.
\end{proof}
\noindent As (\ref{I_e_bound}) holds for all $K$, we thus have $ \limsup_{K \rightarrow \infty} \frac{1}{K} \sum_{k=1}^K \mathbb{E}[I_{e,k}] < \Delta_U < \infty$.

Theorem \ref{infinite_horizon_unbounded_thm} showed that for unstable systems, one could always find policies which can keep the estimation error covariance bounded while the eavesdropper covariance became unbounded. We might ask whether using the alternative measure of security of this section, one can also keep the estimation error bounded while driving the information revealed to the eavesdropper to zero. 
Theorem \ref{infinite_horizon_info_thm}  however will show that the answer is negative: in order to keep the error covariance of the remote estimator bounded, one will always reveal a non-zero amount of information to the eavesdropper.  Before proving this fundamental result in Theorem \ref{infinite_horizon_info_thm}, we will need the following: 
\begin{lemma}
\label{infinite_horizon_lower_bound_lemma}
Let $A$ be an unstable matrix. Then there exists a  $\Delta_L > 0$, dependent only on $A, Q, \bar{P}$, such that
$$ \frac{1}{N} \left(\frac{1}{2} \log \det f^N(\bar{P}) -  \frac{1}{2} \log \det \bar{P} \right)  > \Delta_L$$
for all $N \in \mathbb{N}$. 
\end{lemma}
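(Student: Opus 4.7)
\emph{Proof plan.} Let $D_N \triangleq \log \det f^N(\bar{P}) - \log \det \bar{P}$. The goal is to establish $\inf_{N \geq 1} D_N/(2N) > 0$, since any positive $\Delta_L$ strictly below this infimum will then satisfy the conclusion. I would obtain this via two complementary ingredients: a polynomial-free asymptotic lower bound $D_N/N \to$ something positive, and strict positivity at every finite $N$.

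For the asymptotic piece, retaining only the $m=0$ term in the summand defining $f^N(\bar{P})$ gives the trivial lower bound $f^N(\bar{P}) \geq A^N \bar{P} (A^N)^T + Q$. An application of the Sylvester determinant identity yields
\[
\det\bigl(A^N \bar{P} (A^N)^T + Q\bigr) = \det Q \cdot \det\bigl(I + W_N W_N^T\bigr), \quad W_N \triangleq Q^{-1/2} A^N \bar{P}^{1/2}.
\]
Standard operator-norm bounds give $\|W_N\| \geq \sigma_{\min}(Q^{-1/2}) \sigma_{\min}(\bar{P}^{1/2}) \|A^N\|$, and because spectral radius is a lower bound for operator norm, $\|A^N\| \geq \rho(A^N) = \rho(A)^N$, with $\rho(A) > 1$ from the instability hypothesis. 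Hence $\sigma_1(W_N)^2 \geq c\,\rho(A)^{2N}$ for some $c > 0$ depending only on $\bar{P}$ and $Q$, so $\log \det(I + W_N W_N^T) \geq \log(1 + c \rho(A)^{2N})$, and consequently
\[
\liminf_{N \to \infty} \frac{D_N}{N} \;\geq\; 2 \log \rho(A) \;>\; 0.
\]

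For strict positivity at each finite $N$, I would invoke the Kalman steady-state Riccati equation $\bar{P} = f(\bar{P}) - f(\bar{P})C^T(C f(\bar{P})C^T + R)^{-1} C f(\bar{P})$, so that $f(\bar{P}) - \bar{P}$ is a PSD matrix that is nonzero whenever $C \neq 0$; detectability of $(A,C)$ combined with $A$ unstable rules out $C=0$ (otherwise $\bar{P}$ would fail to exist as a finite matrix). Thus $D_1 > 0$, and iterating the monotonicity of $f$ applied to $\bar{P} \leq f(\bar{P})$ yields $f^N(\bar{P}) \geq f(\bar{P})$, so $D_N \geq D_1 > 0$ for every $N \geq 1$.

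Combining the two: for $N$ exceeding some threshold $N^*$ the asymptotic bound secures $D_N/(2N) \geq \tfrac{1}{2}\log\rho(A)$, while for $N < N^*$ the finite collection $\{D_N/(2N)\}$ is bounded below by $D_1/(2N^*) > 0$; hence $\inf_{N \geq 1} D_N/(2N) > 0$, and any strictly smaller $\Delta_L$ completes the proof. The main obstacle is handling general unstable $A$ without assuming invertibility or $|\det A| > 1$, since a direct route via $\det(A^N \bar{P} (A^N)^T) = |\det A|^{2N} \det \bar{P}$ collapses in the singular case; routing the bound through $\|A^N\| \geq \rho(A)^N$ rather than through $|\det A^N|$ handles all unstable $A$ uniformly, including when some eigenvalues of $A$ vanish or lie inside the unit disk.
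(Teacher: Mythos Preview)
Your proposal is correct and follows the same two-part scaffold as the paper's proof: an asymptotic lower bound of order $\log\rho(A)$ obtained from $f^N(\bar P)\ge A^N\bar P(A^N)^T+Q$, combined with strict positivity of $D_N$ at every finite $N$, and then taking the minimum. The technical routes differ slightly. For the asymptotic part, the paper bounds the determinant by applying Weyl's theorem to $\sigma_{\min}(\bar P)A^N(A^N)^T+Q$, isolating one eigenvalue of size $\sigma_{\min}(\bar P)\,|\sigma_{\max}(A)|^{2N}$ and lower-bounding the remaining $n_x-1$ eigenvalues by $\sigma_{\min}(Q)$; you instead use the Sylvester identity $\det(A^N\bar P(A^N)^T+Q)=\det Q\cdot\det(I+W_NW_N^T)$ and then $\|W_N\|\ge\rho(A)^N\sigma_{\min}(\bar P)^{1/2}/\sigma_{\max}(Q)^{1/2}$, which is arguably cleaner and, as you note, sidesteps any reliance on $|\det A|$. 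For the finite-$N$ part, the paper argues directly that $f^N(\bar P)\ge\bar P$ with $f^N(\bar P)\neq\bar P$ (implicitly because equality would force $\bar P$ to be a fixed point of $f$, i.e.\ a solution of the Lyapunov equation $A\bar P A^T+Q=\bar P$, impossible for unstable $A$) and then invokes a determinant monotonicity result; you instead exhibit $f(\bar P)-\bar P$ explicitly as the nonzero PSD correction term in the steady-state Riccati equation. Both routes are sound and yield the same conclusion.
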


\begin{proof}
See Appendix \ref{infinite_horizon_lower_bound_lemma_proof}.
\end{proof}

\begin{theorem}
\label{infinite_horizon_info_thm}
Let $A$ be an unstable matrix, and assume that $\lambda_e > 0$. Then, for any transmission policy satisfying 
\begin{equation}
\label{bounded_error_covariance}
\limsup_{K \rightarrow \infty} \frac{1}{K} \sum_{k=1}^K \mathbb{E}[P_{k|k}] < \infty, 
\end{equation}
one must have 
$$\liminf_{K \rightarrow \infty}\frac{1}{K} \sum_{k=1}^K \mathbb{E}[I_{e,k}] > \epsilon$$
for some $\epsilon > 0$ dependent only on $\lambda_e, A, Q, \bar{P}$.
\end{theorem}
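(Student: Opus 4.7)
My plan is to lower bound $\mathbb{E}\big[\sum_{k=1}^K I_{e,k}\big]$ in terms of the expected number of transmissions $\mathbb{E}[T_K]$ with $T_K := \sum_{k=1}^K \nu_k$, and then to deduce from hypothesis (\ref{bounded_error_covariance}) that $\mathbb{E}[T_K]$ must grow linearly in $K$.

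For the first step, I would note that $I_{e,k}=0$ whenever $\nu_k\gamma_{e,k}=0$, since then $P_{e,k|k}=f(P_{e,k-1|k-1})=P_{e,k|k-1}$. Let the successful-eavesdropping instants be $k_1<k_2<\cdots<k_{\tau(K)}$ and put $N_i := k_i-k_{i-1}$ with $k_0:=0$. Then $P_{e,k_i|k_i-1}=f^{N_i}(\bar{P})$ and $P_{e,k_i|k_i}=\bar{P}$, so the $i$-th contribution to the sum is exactly $\tfrac{1}{2}\log\det f^{N_i}(\bar{P})-\tfrac{1}{2}\log\det\bar{P}$. By Lemma \ref{infinite_horizon_lower_bound_lemma} this is at least $N_i\Delta_L\geq\Delta_L$, and using the independence of $\{\gamma_{e,k}\}$ from the policy (so $\mathbb{E}[\tau(K)]=\lambda_e\,\mathbb{E}[T_K]$), I obtain
\begin{equation*}
\mathbb{E}\!\left[\sum_{k=1}^K I_{e,k}\right]\;\geq\;\Delta_L\,\mathbb{E}[\tau(K)]\;=\;\Delta_L\,\lambda_e\,\mathbb{E}[T_K].
\end{equation*}

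The second step, showing $\liminf_K\mathbb{E}[T_K]/K>0$ under (\ref{bounded_error_covariance}), is the main obstacle. Let $m_k$ be the number of steps since the most recent successful reception at the legitimate estimator, so $P_{k|k}=f^{m_k}(\bar{P})$, and let $S_K:=\sum_{k=1}^K\nu_k\gamma_k$, so $\mathbb{E}[S_K]=\lambda\,\mathbb{E}[T_K]$. Because $A$ is unstable, there is a constant $c_1>0$ depending only on $A,Q,\bar P$ with $\textrm{tr}\,f^m(\bar{P})\geq c_1 r^m$, where $r:=|\sigma_{\max}(A)|^2>1$. Two applications of Jensen's inequality, first in the time-average using convexity of $m\mapsto r^m$ and then in the outer expectation, yield
\begin{equation*}
\tfrac{1}{K}\,\mathbb{E}\!\left[\sum_{k=1}^K \textrm{tr}\,P_{k|k}\right]\;\geq\;c_1\, r^{\,\mathbb{E}[\bar m_K]},\qquad \bar m_K:=\tfrac{1}{K}\sum_{k=1}^K m_k.
\end{equation*}
A combinatorial step — summing $m_k$ over each inter-reception interval of length $\ell_j$ to obtain $\ell_j(\ell_j-1)/2$, then applying Cauchy--Schwarz with at most $S_K+1$ intervals of total length $K$ — gives $\sum_k m_k\geq K^2/(2(S_K+1))-K/2$. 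Taking expectations and applying Jensen once more to the convex function $1/(x+1)$ produces $\mathbb{E}[\bar m_K]\geq K/(2(\mathbb{E}[S_K]+1))-1/2$. Plugging this into the Jensen bound above, hypothesis (\ref{bounded_error_covariance}) forces $\mathbb{E}[\bar m_K]$ to be bounded uniformly in $K$, hence $\mathbb{E}[S_K]=\Omega(K)$ and so $\liminf_K\mathbb{E}[T_K]/K\geq\alpha$ for some $\alpha>0$.

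Combining the two steps yields the conclusion with $\epsilon:=\Delta_L\,\lambda_e\,\alpha>0$. The delicate part is really the second step: I need all three Jensen applications plus the combinatorial bound on $\sum_k m_k$ to be oriented in the same direction so that the chain closes — any single reversal would break the argument. Once that is verified, everything else is just bookkeeping.
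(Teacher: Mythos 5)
Your argument is essentially correct step by step and takes a genuinely different route from the paper's. You count transmissions: the chain $\textrm{tr}\,f^{m}(\bar{P}) \geq \sigma_{\min}(\bar{P})\,|\sigma_{\max}(A)|^{2m}$, two applications of Jensen, and the Cauchy--Schwarz bound $\sum_k m_k \geq K^2/(2(S_K+1)) - K/2$ together show that (\ref{bounded_error_covariance}) forces a linear transmission rate, which you then convert into information leakage via $\mathbb{E}[\tau(K)] = \lambda_e\,\mathbb{E}[T_K]$ (legitimate, since $\nu_k$ is measurable with respect to time-$(k-1)$ information and hence independent of $\gamma_{e,k}$). The paper instead first observes that (\ref{bounded_error_covariance}) forces the gap between successive transmissions to be bounded by some $\kappa_{\max}$, conditions on the event that the last transmission before the horizon is successfully eavesdropped, and telescopes the full sum $\sum_i N_i = K'$. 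I checked that all three of your Jensen applications and the interval combinatorics are oriented consistently; the chain does close.

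There is, however, one genuine gap relative to the theorem as stated: your $\epsilon = \Delta_L \lambda_e \alpha$ is not ``dependent only on $\lambda_e, A, Q, \bar{P}$.'' The constant $\alpha$ comes from $B = \log(M/c_1)/\log r$ with $M$ the (policy-dependent) value of the $\limsup$ in (\ref{bounded_error_covariance}), so a policy with large but finite $M$ receives an arbitrarily small $\epsilon$; you prove that the liminf is positive for each admissible policy, but not the uniform lower bound the theorem asserts. The loss occurs precisely where you replace $N_i\Delta_L$ by $\Delta_L$: keeping the $N_i$ gives $\sum_k I_{e,k} \geq \Delta_L \sum_i N_i = \Delta_L\, k_{\tau(K)}$, a bound proportional to the elapsed time up to the last successful eavesdropping rather than to the number of eavesdroppings, and this is exactly what makes the paper's constant $\lambda_e\Delta_L$ policy-independent. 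Note that a linear transmission rate alone does not control $K - k_{\tau(K)}$ (the transmissions could cluster away from the end of the horizon); to repair the proof you would invoke the bounded-gap property, which gives $\mathbb{P}(K - k_{\tau(K)} > j\kappa_{\max}) \leq (1-\lambda_e)^j$ and hence $\mathbb{E}[k_{\tau(K)}]/K \rightarrow 1$ --- at which point your first step alone yields a uniform bound and the Jensen machinery of your second step is no longer needed.
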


\begin{proof}
Arguing by contradiction, we first note that for any transmission policy satisfying (\ref{bounded_error_covariance}), the time between any two successive transmission attempts must be upper bounded by some constant $\kappa_{\max}$, where $\kappa_{\max}$ depends on the particular policy used. 

Now fix such a policy satisfying (\ref{bounded_error_covariance}). Call a sequence of transmission decisions $\boldsymbol{\nu} \triangleq \{\nu_0,\nu_1,\dots\} $  \emph{admissible} for the policy if there is a sequence  $\{\gamma_0,\gamma_1,\dots, \gamma_{e,0},\gamma_{e,1},\dots \}$ which generates it. We will show that $\frac{1}{K} \sum_{k=1}^K \mathbb{E}[I_{e,k}| \boldsymbol{\nu}] > 0$ is lower bounded away from zero for  any admissible sequence $\boldsymbol{\nu}$ and all sufficiently large $K$, thus proving the theorem. 

Fix a  $K > \kappa_{\max}$ and an admissible sequence $\boldsymbol{\nu}$. Let $K'$ denote the last time over the horizon $K$ when there is a transmission. Then $K - K' < \kappa_{\max}$, and $K'/K > 1 - \kappa_{\max} /K$ is bounded away from zero for all $K > \kappa_{\max}$.  
We have
\begin{align*}
& \frac{1}{K} \sum_{k=1}^K \mathbb{E}[I_{e,k}| \boldsymbol{\nu}] \\ & = \frac{1}{K} \sum_{k=1}^K  \mathbb{E}[I_{e,k}| \boldsymbol{\nu}, \gamma_{e,K'}=1] \times \mathbb{P}(\gamma_{e,K'}=1) \\ & \quad + \frac{1}{K} \sum_{k=1}^K\mathbb{E}[I_{e,k}| \boldsymbol{\nu}, \gamma_{e,K'}=0] \times \mathbb{P}(\gamma_{e,K'}=0) \\
& \geq \frac{\lambda_e}{K} \sum_{k=1}^K  \mathbb{E}[I_{e,k}| \boldsymbol{\nu}, \gamma_{e,K'}=1] \\
& =  \frac{\lambda_e}{K} \sum_{k=1}^{K'}  \mathbb{E}[I_{e,k}| \boldsymbol{\nu}, \gamma_{e,K'}=1] \\
& = \frac{\lambda_e K'}{K}  \frac{1}{K'} \sum_{k=1}^{K'}  \mathbb{E}[I_{e,k}| \boldsymbol{\nu}, \gamma_{e,K'}=1] \\
& > \lambda_e \left(1 - \frac{\kappa_{\max} }{K}\right) \frac{1}{K'} \sum_{k=1}^{K'}  \mathbb{E}[I_{e,k}| \boldsymbol{\nu}, \gamma_{e,K'}=1] 
\end{align*}
Now consider the term $\frac{1}{K'} \sum_{k=1}^{K'}  \mathbb{E}[I_{e,k}| \boldsymbol{\nu}, \gamma_{e,K'}=1]$. 
Within this time period $K'$, there could be a number of instances $k$ where $\gamma_{e,k} = 1$. Denote the random times between successful eavesdroppings by $N_1, N_2, \dots, N_{\tau(K')}$, where the random variable $\tau(K')$ denotes the total number of successful eavesdroppings within the time period $K'$ (since $\gamma_{e,K'}=1$, we have $\tau(K')\geq 1$), with $N_1 + N_2 + \dots + N_{\tau(K')} = K'$. Then  we have 
\begin{equation*}
\begin{split}
& \frac{1}{K'} \sum_{k=1}^{K'}  \mathbb{E}[I_{e,k}| \boldsymbol{\nu}, \gamma_{e,K'}=1] \\ & = \frac{1}{K'} \mathbb{E} \left[ \sum_{i=1}^{\tau(K')}  \left(\frac{1}{2} \log \det f^{N_i}(\bar{P}) -  \frac{1}{2} \log \det \bar{P} \right) \right]\\
& =  \frac{1}{K'} \mathbb{E} \left[ \sum_{i=1}^{\tau(K')} N_i \!\times\! \frac{1}{N_i}  \left(\frac{1}{2} \log \det f^{N_i}(\bar{P}) -  \frac{1}{2} \log \det \bar{P} \right) \right]\\
& \geq  \frac{1}{K'} \mathbb{E} \left[ \sum_{i=1}^{\tau(K')} N_i  \Delta_L \right] = \Delta_L
\end{split}
\end{equation*}
where the inequality comes from Lemma \ref{infinite_horizon_lower_bound_lemma}. 
Hence 
\begin{equation*}
\begin{split}
\frac{1}{K} \sum_{k=1}^K \mathbb{E}[I_{e,k}| \boldsymbol{\nu}] & > \lambda_e  \left(1 - \frac{\kappa_{\max} }{K} \right) \Delta_L, 
\end{split}
\end{equation*}
for all $K > \kappa_{\max}$. 
In particular,
$$\frac{1}{K} \sum_{k=1}^K \mathbb{E}[I_{e,k}| \boldsymbol{\nu}] > \lambda_e \Delta_L \triangleq \epsilon$$ for all sufficiently large $K$.
\end{proof}

We now return to the problem (\ref{infinite_horizon_problem_full_CSI}).  The Bellman equation for problem  (\ref{infinite_horizon_problem_full_CSI}) is: 
\begin{equation}
\label{Bellman_eqn}
\begin{split}
&\rho + h(P,P_e)  = \min_{\nu \in \{0,1\}} \Big\{ \beta(\nu \lambda \textrm{tr} \bar{P} + (1-\nu \lambda)\textrm{tr}f(P)) \\ & \quad + (1-\beta) \nu \lambda_e \Big(\frac{1}{2} \log \det f(P_{e}) - \frac{1}{2} \log \det \bar{P}\Big) \\ & \quad +  \nu \lambda \lambda_e h(\bar{P}, \bar{P}) +  \nu \lambda (1-\lambda_e) h(\bar{P},f(P_e)) \\ & \quad +  \nu (1-\lambda) \lambda_e h(f(P),\bar{P}) 
\\ &  \quad + \big(\nu(1-\lambda)(1-\lambda_e) + 1 - \nu\big)  h(f(P),f(P_e))  \Big\} 
\end{split}
\end{equation}
where $\rho$ is the optimal average cost per stage and $h(.,.)$ is the differential cost or relative value function \cite{Bertsekas_DP1}. Solutions $h(.,.)$ to the Bellman equation (\ref{Bellman_eqn}) can be found using the relative value iteration algorithm. 
Define the functions $V_l(\cdot,\cdot): \mathcal{S} \times \mathcal{S} \rightarrow \mathbb{R}$ as:
\begin{equation*}
\begin{split}
&V_0(P,P_e)  =0 \\
&V_{l+1}(P,P_e)  = \min_{\nu \in \{0,1\}} \Big\{ \beta(\nu \lambda \textrm{tr} \bar{P} + (1-\nu \lambda)\textrm{tr}f(P)) \\ & \quad + (1-\beta) \nu \lambda_e \Big(\frac{1}{2} \log \det f(P_{e}) - \frac{1}{2} \log \det \bar{P}\Big) \\ & \quad +  \nu \lambda \lambda_e V_l(\bar{P}, \bar{P}) +  \nu \lambda (1-\lambda_e) V_l(\bar{P},f(P_e)) \\ & \quad +  \nu (1-\lambda) \lambda_e V_l(f(P),\bar{P}) 
\\ &  \quad + \big(\nu(1-\lambda)(1-\lambda_e) + 1 - \nu\big)  V_l(f(P),f(P_e))  \Big\} 
\end{split}
\end{equation*}
for $l=0,1,2,\dots$. Fix an arbitrary $(P^f,P_e^f) \in \mathcal{S} \times \mathcal{S}$. The relative value iteration algorithm then computes 
$$h_l(P,P_e) \triangleq V_{l}(P,P_e) - V_{l}(P^f,P_e^f)$$
for $l=0,1,2,\dots$, with $h_l(P,P_e) \rightarrow h(P,P_e), \forall (P,P_e) \in \mathcal{S} \times \mathcal{S}$  as $l \rightarrow \infty$. 

\begin{remark}
\label{SN_remark}
In the infinite horizon case, the number of possible values of $(P_{k|k},P_{e,k|k})$
is infinite. Thus in practice the state space will need to be truncated for numerical solution. 
For instance, define the finite set $\mathcal{S}^N \subseteq \mathcal{S}$ by
\begin{equation*}
\mathcal{S}^N \triangleq \{\bar{P}, f(\bar{P}), \dots, f^N(\bar{P}) \},
\end{equation*}
which includes the values of all error covariances with up to $N$ successive packet drops or non-transmissions. Then one can run the relative value iteration algorithm over the finite state space $\mathcal{S}^N \times  \mathcal{S}^N$ (of cardinality $(N+1)^2)$), and compare the solutions obtained as $N$ increases to determine an appropriate value for $N$ \cite{Sennott_book}. 
\end{remark}

We have the following structural results:

\begin{theorem}
\label{infinite_horizon_thm}
(i) For fixed $P_{e,k-1|k-1}$, the optimal solution to problem (\ref{infinite_horizon_problem_full_CSI}) is a threshold policy on $P_{k-1|k-1}$ of the form
\begin{equation*}
\begin{split}
\nu_k^*(P_{k-1|k-1},P_{e,k-1|k-1}) = \left\{ \begin{array}{lcl} 0 & , & \textnormal{if } P_{k-1|k-1} \leq P^* \\ 1 & , & \textnormal{otherwise} \end{array} \right.
\end{split}
\end{equation*}
where the threshold $P^*$ depends on  $P_{e,k-1|k-1}$.\\
(ii) For fixed $P_{k-1|k-1}$, the optimal solution to problem (\ref{infinite_horizon_problem_full_CSI}) is a threshold policy on $P_{e,k-1|k-1}$ of the form
\begin{equation*}
\begin{split}
\nu_k^*(P_{k-1|k-1},P_{e,k-1|k-1}) = \left\{ \begin{array}{lcl} 0 & , & \textnormal{if } P_{e,k-1|k-1} \geq P_e^* \\ 1 & , & \textnormal{otherwise} \end{array} \right.
\end{split}
\end{equation*}
where the threshold $P_e^*$ depends on  $P_{k-1|k-1}$. 
\end{theorem}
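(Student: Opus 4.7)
The plan is to lift the finite-horizon structural argument of Theorem \ref{structural_thm_full_CSI} to the average-cost setting by means of the relative value iteration algorithm introduced just before the theorem. Since $V_l(P,P_e)$ is generated by exactly the same one-stage Bellman operator that defines $J_k(P,P_e)$ (with the roles of $k$ and $l$ reversed), I would first reuse verbatim the induction argument from Theorems \ref{structural_thm_full_CSI_alt} and \ref{structural_thm_full_CSI} to establish, for every $l\in\mathbb{N}$ and every $n\in\mathbb{N}$, the two monotonicity statements
\begin{equation*}
[1-(1-\lambda)(1-\lambda_e)]V_l(f^n(P),P_e)-(1-\lambda)\lambda_e V_l(f^n(P),P_e')
\end{equation*}
is increasing in $P$ for all $P_e,P_e'\in\mathcal{S}$, and
\begin{equation*}
[1-(1-\lambda)(1-\lambda_e)]V_l(P,f^n(P_e))-\lambda(1-\lambda_e)V_l(P',f^n(P_e))
\end{equation*}
is decreasing in $P_e$ for all $P,P'\in\mathcal{S}$. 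These are the exact analogues of the inductive hypotheses used in the finite-horizon proofs, and Lemmas \ref{f_composition_lemma} and \ref{logdet_lemma} together with the induction step carry over unchanged because the only changes compared to (\ref{J_fn_defn}) are the identity of the per-stage cost (linear in the trace plus the mutual-information term, both treated in Theorem \ref{structural_thm_full_CSI}).

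Next I would pass to the limit. Subtracting $V_l(P^f,P_e^f)$ preserves the monotonicity in $P$ (respectively in $P_e$) of the above linear combinations, so the same statements hold with $V_l$ replaced by $h_l$. Under the standing assumption $\lambda>1-1/|\sigma_{\max}(A)|^2$ there exist transmission policies yielding bounded average cost (the per-stage cost of $\mathbb{E}[\text{tr}\,P_{k|k}]$ is bounded by Theorem \ref{infinite_horizon_unbounded_thm} and the information term is bounded by (\ref{I_e_bound})), so standard existence results for average-cost MDPs on a countable state space with unbounded cost (e.g.\ those in \cite{Sennott_book}) ensure that the relative value iteration converges, $h_l\to h$ pointwise, and $h$ satisfies the Bellman equation (\ref{Bellman_eqn}) together with an optimal stationary policy. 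Pointwise convergence preserves monotonicity, so the two combinations above hold with $h$ in place of $V_l$.

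Finally, the difference between the two branches of the Bellman equation (\ref{Bellman_eqn}) is the natural infinite-horizon analogue of $\phi_k$ in (\ref{phi_fn_defn}),
\begin{equation*}
\begin{split}
\phi(P,P_e)&\triangleq\beta\lambda\,\text{tr}f(P)-\beta\lambda\,\text{tr}\bar{P}\\
&\quad+(1-\beta)\lambda_e\!\left(\tfrac{1}{2}\log\det f(P_e)-\tfrac{1}{2}\log\det\bar{P}\right)\\
&\quad+[1-(1-\lambda)(1-\lambda_e)]h(f(P),f(P_e))\\
&\quad-\lambda\lambda_e h(\bar{P},\bar{P})-\lambda(1-\lambda_e)h(\bar{P},f(P_e))\\
&\quad-(1-\lambda)\lambda_e h(f(P),\bar{P}),
\end{split}
\end{equation*}
and $\nu^*=1$ is optimal if and only if $\phi(P,P_e)>0$. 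For part (i), for fixed $P_e$ the monotonicity of $\text{tr}f(P)$ (Lemma \ref{f_composition_lemma}) combined with the $P$-monotonicity of the $h$ combination established above shows $\phi(\cdot,P_e)$ is increasing, yielding the threshold in $P$. For part (ii), for fixed $P$ the monotonicity of $\log\det f(P_e)-\log\det f^0(P_e)$-type differences (Lemma \ref{logdet_lemma}) and the $P_e$-monotonicity of the second $h$ combination show that $-\phi(P,\cdot)$ is increasing, giving the threshold in $P_e$.

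The main obstacle I expect is the passage to the infinite horizon: verifying that the relative value iteration converges under the unbounded costs here and that the limiting $h$ indeed solves (\ref{Bellman_eqn}) with an optimal stationary threshold policy. Once the convergence statement is in place, the structural part is a direct transcription of the finite-horizon inductions in Theorems \ref{structural_thm_full_CSI_alt} and \ref{structural_thm_full_CSI}.
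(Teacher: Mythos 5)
Your proposal is correct and takes essentially the same route as the paper, whose proof of Theorem \ref{infinite_horizon_thm} consists precisely of observing that the inductions of Theorem \ref{structural_thm_full_CSI} carry over when $J_{k+1}(\cdot,\cdot)$ is replaced by $V_l(\cdot,\cdot)$, hence by $h_l(\cdot,\cdot)$, and then invoking the pointwise convergence $h_l \to h$. One transcription slip: in your displayed $\phi(P,P_e)$ the mutual-information term must carry a minus sign, $-(1-\beta)\lambda_e\left(\tfrac{1}{2}\log\det f(P_e)-\tfrac{1}{2}\log\det\bar{P}\right)$, as in (\ref{phi_fn_defn_info}); with the plus sign as written that explicit term is increasing in $P_e$ and your part (ii) argument would not go through.
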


\begin{proof}
We can  verify that the arguments used in the proof of Theorem \ref{structural_thm_full_CSI} hold when $J_{k+1}(.,.)$ is replaced by $V_l(.,.)$, and hence also holds for $h_l(.,.)$. The result then follows by noting that $h_l(P,P_e) \rightarrow h(P,P_e) \,\, \forall (P,P_e)$.
\end{proof}

\begin{remark}
In contrast to Theorem \ref{structural_thm_full_CSI}, for the infinite horizon the thresholds $P^*$ and $P_e^*$ do not depend on the time index $k$. 
\end{remark}

\begin{remark}
The motivation for considering in problems (\ref{finite_horizon_problem_full_CSI_alt}) and (\ref{finite_horizon_problem_partial_CSI_alt}) the mean squared error (or estimation error covariances) at the eavesdropper rather than the  mutual information is that since we are considering an estimation problem, using estimation theoretic measures is perhaps more suitable than information theoretic measures, which often assume large block lengths and hence large delays. Nonetheless, the two notions are closely related, e.g. the expression (\ref{directed_mutual_info}) for the conditional mutual information is given as the log determinant of the estimation error covariances. 
Other relations between information theory and estimation theory have also been discovered in the literature, see e.g. \cite{GuoShamaiVerdu,WeissmanKimPermuter,NaghibiSalimiSkoglund,FengLoparo}.
\end{remark}

\subsection{Eavesdropper Error Covariance  Unknown at Remote Estimator}
In the case where the eavesdropper error covariances are unknown to the remote estimator, we obtain the problems:
\begin{equation}
\label{finite_horizon_problem_partial_CSI}
\begin{split}
& \min_{\{\nu_k \} } \sum_{k=1}^K  \mathbb{E}\Big[\beta(\nu_k \lambda \textrm{tr} \bar{P} + (1-\nu_k \lambda)\textrm{tr}f(P_{k-1|k-1})) \\&\! +\! (1\!-\!\beta) \nu_k \lambda_e \Big(\sum_{i=0}^K \frac{1}{2} ( \log \det f^{i+1}(\bar{P})) \pi_{e,k-1}^{(i)} \!-\! \frac{1}{2} \log \det  \bar{P} \Big) \Big],
\end{split}
\end{equation}
\begin{equation}
\label{infinite_horizon_problem_partial_CSI}
\begin{split}
& \min_{\{\nu_k \} }\limsup_{K \rightarrow \infty} \frac{1}{K}  \sum_{k=1}^K \mathbb{E}\Big[\beta(\nu_k \lambda \textrm{tr} \bar{P} + (1-\nu_k \lambda)\textrm{tr}f(P_{k-1|k-1}))  \\ & \! +\! (1\!-\!\beta) \nu_k \lambda_e \Big(\sum_{i=0}^K \frac{1}{2} ( \log \det f^{i+1}(\bar{P})) \pi_{e,k-1}^{(i)} \!-\! \frac{1}{2} \log \det  \bar{P} \Big) \Big],
\end{split}
\end{equation}
for the finite horizon and infinite horizon situations respectively. Similar techniques as in Section \ref{partial_CSI_sec} can be used to analyze these problems. 

\section{Transmission of Measurements}
\label{tx_meas_sec}
In this section, we will briefly describe an extension to the transmission of measurements, which can also be analyzed using similar techniques as presented in the preceding sections. Only the finite horizon situation will be studied here.  

In the case where measurements are transmitted, the remote estimator and eavesdropper will run Kalman filters. 
At the remote estimator, we have
\begin{equation*}
\label{remote_estimator_tx_meas}
\begin{split}
\hat{x}_{k+1|k} & = A \hat{x}_{k|k}, \quad
\hat{x}_{k|k}  = \hat{x}_{k|k-1} + \gamma_{k} K_{k} (y_{k} - C \hat{x}_{k|k-1}) \\
P_{k+1|k} & = A P_{k|k} A^T + Q, \quad
P_{k|k}  = P_{k|k-1} - \gamma_{k} K_{k} C P_{k|k-1} 
\end{split}
\end{equation*} 
where $K_{k} \triangleq P_{k|k-1} C^T (C P_{k|k-1} C^T + R)^{-1} $.
We can thus write:
\begin{equation}
\label{remote_estimator_eqns_tx_meas}
\begin{split}
\hat{x}_{k\!+\!1|k} & \!= \!\left\{\begin{array}{ccc}  A \hat{x}_{k|k-1} & \!\!\!\!\!, \, \nu_{k} \gamma_{k} = 0 \\ A \hat{x}_{k|k\!-\!1} \!+\! A K_{k} (y_{k} \!-\! C \hat{x}_{k|k\!-\!1}) & \!\!\!\!\!, \, \nu_{k} \gamma_{k} = 1 \end{array}  \right. \\
P_{k\!+\!1|k} & \!=\! \left\{\begin{array}{cll} f(P_{k|k-1}) & , \,  \nu_{k} \gamma_{k} = 0 \\ g(P_{k|k-1}) & , \,  \nu_{k} \gamma_{k} = 1, \end{array} \right. 
\end{split}
\end{equation} 
where $f(X) \triangleq A X A^T + Q$ as before, and 
\begin{equation*}
g(X) \triangleq  A X A^T - A X C^T (C X C^T + R)^{-1} C X A^T + Q, 
\end{equation*}
From Kalman filtering theory, we have that $g(P)$ is an increasing function of $P$ in the sense of Definition~\ref{increasing_fn_defn}. 
Note that in (\ref{remote_estimator_eqns_tx_meas}) the recursions are given in terms of $\hat{x}_{k+1|k}$ and $P_{k+1|k}$ (rather than $\hat{x}_{k|k}$ and $P_{k|k}$), which  are slightly more convenient to work with. 
Similarly, at the eavesdropper we have 
\begin{equation*}
\label{eavesdropper_eqns_tx_meas}
\begin{split}
\hat{x}_{e,k\!+\!1|k} & \!= \!\left\{\begin{array}{ccc}  A \hat{x}_{e,k|k-1} & \!\!\!\!\!, \, \nu_{k} \gamma_{e,k} = 0 \\ A \hat{x}_{e,k|k\!-\!1} \!+\! A K_{e,k} (y_{k} \!-\! C \hat{x}_{e,k|k\!-\!1}) & \!\!\!\!\!, \, \nu_{k} \gamma_{e,k} = 1 \end{array}  \right. \\
P_{e,k\!+\!1|k} & \!=\! \left\{\begin{array}{cll} f(P_{e,k|k-1}) & , \,  \nu_{k} \gamma_{e,k} = 0 \\ g(P_{e,k|k-1}) & , \,  \nu_{k} \gamma_{e,k} = 1, \end{array} \right. 
\end{split}
\end{equation*} 
where $K_{e,k} \triangleq P_{e,k|k-1} C^T (C P_{e,k|k-1} C^T + R)^{-1} $.

\subsection{Eavesdropper Error Covariance Known at Remote Estimator}
In the case where the error covariance of the eavesdropper is available at the remote estimator, we may consider the problem (cf. (\ref{finite_horizon_problem_full_CSI_alt})):
\begin{equation}
\label{finite_horizon_problem_full_CSI_tx_meas}
\begin{split}
& \min_{\{\nu_k \} } \sum_{k=1}^K \mathbb{E}[\beta \textrm{tr} P_{k+1|k} - (1-\beta) \textrm{tr} P_{e,k+1|k}] 
 \\ 
& = \min_{\{\nu_k \} } \sum_{k=1}^K \mathbb{E}\Big[\beta\big(\nu_k \lambda \textrm{tr} g(P_{k|k-1}) + (1-\nu_k \lambda)\textrm{tr}f(P_{k|k-1})\big) \\ & \quad - (1-\beta) \big(\nu_k \lambda_e \textrm{tr} g(P_{e,k|k-1}) + (1-\nu_k \lambda_e)\textrm{tr}f(P_{e,k|k-1})\big)\Big], 
\end{split}
\end{equation}
where the  transmission decisions $\nu_k$ are dependent on $P_{k|k-1}$ and $P_{e,k|k-1}$.
For scalar systems, we have the following structural results:
\begin{theorem}
\label{structural_thm_full_CSI_tx_meas}
Suppose the system is scalar. 
\\(i) For fixed $P_{e,k|k-1}$, the optimal solution to problem (\ref{finite_horizon_problem_full_CSI_tx_meas}) is a threshold policy on $P_{k|k-1}$ of the form
\begin{equation*}
\begin{split}
\nu_k^*(P_{k|k-1},P_{e,k|k-1}) = \left\{ \begin{array}{lcl} 0 & , & \textnormal{if } P_{k|k-1} \leq P^*_k \\ 1 & , & \textnormal{otherwise} \end{array} \right.
\end{split}
\end{equation*}
where the threshold $P^*_k \in \mathbb{R}$ depends on $k$ and $P_{e,k|k-1}$.
\\(ii) For fixed $P_{k|k-1}$, the optimal solution to problem (\ref{finite_horizon_problem_full_CSI_tx_meas}) is a threshold policy on $P_{e,k|k-1}$ of the form
\begin{equation*}
\begin{split}
\nu_k^*(P_{k|k-1},P_{e,k|k-1}) = \left\{ \begin{array}{lcl} 0 & , & \textnormal{if } P_{e,k|k-1} \geq P_{e,k}^* \\ 1 & , & \textnormal{otherwise} \end{array} \right.
\end{split}
\end{equation*}
where the threshold $P_{e,k}^* \in \mathbb{R}$ depends on $k$ and $P_{k|k-1}$. 
\end{theorem}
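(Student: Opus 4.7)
The plan is to follow the template of the proof of Theorem \ref{structural_thm_full_CSI_alt}, adapted to the measurement-transmission recursion in (\ref{remote_estimator_eqns_tx_meas}). First I would write out the dynamic programming recursion for (\ref{finite_horizon_problem_full_CSI_tx_meas}): define $J_k(P,P_e)$ on $\mathbb{R}_{\geq 0}\times\mathbb{R}_{\geq 0}$ with $J_{K+1}\equiv 0$ and Bellman equation exactly as in (\ref{J_fn_defn}) except that the four successor states are now $(g(P),g(P_e))$, $(g(P),f(P_e))$, $(f(P),g(P_e))$, $(f(P),f(P_e))$ with the usual probabilities, and the per-stage cost contains $\text{tr}\,g(P)$ and $\text{tr}\,g(P_e)$ on the transmission branches rather than $\text{tr}\,\bar P$. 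Then define $\phi_k(P,P_e)$ as the (no-transmit) minus (transmit) difference analogous to (\ref{phi_fn_defn}). Part (i) reduces to showing $\phi_k(\cdot,P_e)$ is nondecreasing in $P$, and part (ii) to showing $\phi_k(P,\cdot)$ is nonincreasing in $P_e$.

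The direct cost portion of $\phi_k$ contains $\beta\lambda\,\bigl(\text{tr}\,f(P)-\text{tr}\,g(P)\bigr)$, and in the scalar case this is the increasing function $\tfrac{A^2C^2 P^2}{C^2 P + R}$ of $P$. Similarly the $P_e$ part yields $-(1-\beta)\lambda_e\bigl(f(P_e)-g(P_e)\bigr)$, decreasing in $P_e$, so the instantaneous cost has the right monotonicity. What remains is to show that the value-function terms preserve these monotonicities under the Bellman recursion. To this end I would first record two lemmas that hold in the scalar setting: (a) both $f$ and $g$ are increasing, and (b) $f-g$ is increasing, so that in particular $P\leq P'$ implies $g(P)\leq g(P')\leq f(P')$ and $f(P')-g(P')\geq f(P)-g(P)$. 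Both are immediate from the explicit scalar formulas.

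I would then run an induction on $k$ in the spirit of the proof of Theorem \ref{structural_thm_full_CSI_alt}, but with the strengthened hypothesis adapted to this setting: for part (i), that the map
\[
P \;\mapsto\; \bigl[1-(1-\lambda)(1-\lambda_e)\bigr]\,J_k(h(P),P_e) - (1-\lambda)\lambda_e\,J_k(h(P),P_e')
\]
is nondecreasing for every $k$, every $P_e,P_e'$, and every $h\in\{f^m g^n:m,n\in\mathbb{N}_0\}$; for part (ii) the symmetric statement in the second argument. The inductive step substitutes the Bellman expression for $J_k$ and, after cancellation of the direct-cost pieces that depend only on $P$ (respectively $P_e$) through $f(P)-g(P)$, invokes the induction hypothesis applied at $h'\in\{f\circ h,\,g\circ h\}$. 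The scalar ordering ensures that each of the four $J_{k+1}$ terms produced by the recursion can be bounded in the correct direction termwise, so the $\min$ over $\nu\in\{0,1\}$ preserves the sign as in the earlier proof.

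The main obstacle, and the reason the theorem is restricted to scalar systems, is precisely the strengthened induction hypothesis. In the state-estimate case of Theorem \ref{structural_thm_full_CSI_alt}, a successful transmission reset the local covariance to $\bar P$, which eliminated any dependence on $P$ in two of the four successor states and made the induction straightforward. Here, both successor branches $g(P)$ and $f(P)$ retain the dependence on $P$, so the induction must track monotonicity through arbitrary compositions of $f$ and $g$. In the scalar case the total ordering of the positive reals and the monotonicity of both maps lets these compositions be compared directly; in the vector case neither $f-g$ nor the compositions $f\circ g$ and $g\circ f$ preserve the Loewner order in a way that would close the induction, which is why (as noted in the introduction) the threshold structure fails in general for vector systems.
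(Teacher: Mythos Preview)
Your overall plan is right in outline, but the induction hypothesis you propose does not close, for two related reasons.

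First, the class $\{f^m g^n : m,n\in\mathbb{N}_0\}$ is not closed under the successor maps generated by the Bellman recursion. When you expand $J_k(h(P),\cdot)$ you obtain $J_{k+1}$ evaluated at first arguments $f\circ h$ and $g\circ h$. If $h=f^m g^n$ with $m\geq 1$, then $g\circ h = g\circ f^m\circ g^n$ is not of the form $f^{m'}g^{n'}$. The paper therefore takes $\mathcal{F}$ to range over \emph{all} finite compositions of $f$, $g$, and $\mathrm{id}$.

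Second, and more importantly, your two-term hypothesis
\[
P \;\longmapsto\; [1-(1-\lambda)(1-\lambda_e)]\,J_k(h(P),P_e) - (1-\lambda)\lambda_e\,J_k(h(P),P_e')
\]
does not control $\phi_k$. In the measurement case the $J_{k+1}$ part of $\phi_k$ is
\[
[1-(1-\lambda)(1-\lambda_e)]\,J_{k+1}(f(P),f(P_e)) - \lambda\lambda_e\,J_{k+1}(g(P),g(P_e)) - \lambda(1-\lambda_e)\,J_{k+1}(g(P),f(P_e)) - (1-\lambda)\lambda_e\,J_{k+1}(f(P),g(P_e)),
\]
with \emph{two distinct} first arguments $f(P)$ and $g(P)$; the $g(P)$ terms carry an overall negative coefficient $-\lambda$, so they cannot be absorbed into your two-term form with the prescribed coefficients. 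The paper's remedy is to carry a genuinely four-term induction hypothesis,
\[
P\;\longmapsto\;[1-(1-\lambda)(1-\lambda_e)]\,J_k(\mathcal{F}f(P),P_e) - \lambda\lambda_e\,J_k(\mathcal{F}g(P),P_e') - \lambda(1-\lambda_e)\,J_k(\mathcal{F}g(P),P_e) - (1-\lambda)\lambda_e\,J_k(\mathcal{F}f(P),P_e'),
\]
which for $\mathcal{F}=\mathrm{id}$ is exactly the $J_{k+1}$ part of $\phi_k$. To propagate this, the per-stage cost contributions produce differences of the form $\mathcal{F}'(f(P))-\mathcal{F}'(g(P))$ with $\mathcal{F}'\in\{f\mathcal{F},\,g\mathcal{F}\}$, and one needs these to be increasing in $P$. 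Your observation that $f-g$ is increasing is the $\mathcal{F}'=\mathrm{id}$ case only; the general statement is the scalar \emph{composition lemma} (Lemma~\ref{composition_lemma}, quoted from \cite{LeongDeyQuevedo_TAC}): $\mathcal{F}(f(P))-\mathcal{F}(g(P))$ is increasing in $P$ for every composition $\mathcal{F}$ of $f$, $g$, $\mathrm{id}$. This is the key scalar ingredient that makes the induction close, and it is precisely what fails in the vector case.
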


\begin{proof}
See Appendix \ref{structural_thm_full_CSI_tx_meas_proof}
\end{proof}

\begin{example}
The following is a counterexample to show that for vector systems, Theorem \ref{structural_thm_full_CSI_tx_meas}(i) in general does not hold. Similar counterexamples to Theorem \ref{structural_thm_full_CSI_tx_meas}(ii) can also be constructed for vector systems, but for brevity will not be presented here. 

Consider a system with parameters 
$$A = \left[\begin{array}{cc} 1.2 & 0.2 \\ 0.3 & 0.8 \end{array} \right], \quad C = \left[\begin{array}{cc} 1 & -0.5 \end{array} \right], $$
$Q=I$, $R=1$, $\lambda=0.6$, $\lambda_e=0.6$. We have
$$\bar{P}^+ = \left[\begin{array}{cc} 6.2117 & 4.7680 \\ 4.7680 & 5.9176 \end{array} \right].$$

Consider the transmission decision at the final time instant $k=K$. Let 
$$P_1=\bar{P}^+, P_2 = \left[\begin{array}{cc} 6.4 & 4.5 \\ 4.5 & 6.3 \end{array} \right], P_{e} = f(\bar{P}^+).$$
The final stage costs of problem (\ref{finite_horizon_problem_full_CSI_tx_meas}) are: 
\begin{equation*}
\begin{split}
&\beta\big(\nu_K \lambda \textnormal{tr} g(P_{K|K-1}) + (1-\nu_K \lambda)\textnormal{tr}f(P_{K|K-1})\big) \\& - (1-\beta) \big(\nu_K \lambda_e \textnormal{tr} g(P_{e,K|K-1}) + (1-\nu_K \lambda_e)\textnormal{tr}f(P_{e,K|K-1})\big)
\end{split}
\end{equation*}
For $\beta=0.73$, $P_{K|K-1} = P_1$, and $P_{e,K|K-1}=P_e$, the final stage costs
when $\nu_K=0$ and $\nu_K=1$ are $5.4979$ and $5.4427$ respectively, and thus it is optimal to transmit at time $K$. 
On the other hand, for $P_{K|K-1} = P_2$ and $P_{e,K|K-1}=P_e$, the final stage costs when $\nu_K=0$ and $\nu_K=1$ are $5.7103$ and $5.9216$ respectively, and thus it is now optimal to not transmit at time $K$.  But since $P_2 > P_1$ (as one can easily verify), this shows that Theorem  \ref{structural_thm_full_CSI_tx_meas}(i) in general does not hold for vector systems. 
\end{example}

\subsection{Eavesdropper Error Covariance  Unknown at Remote Estimator}
Let us define
\begin{equation*}
\begin{split}
 \mathcal{F}_k^0 (P) &\triangleq \underbrace{g \circ g \circ \dots \circ g \circ g}_{k} (P) \\
 \mathcal{F}_k^1 (P) &\triangleq g \circ g \circ \dots \circ g \circ f (P) \\
  \mathcal{F}_k^2 (P) &\triangleq g \circ g \circ \dots \circ f \circ g (P) \\
 & \vdots \\
  \mathcal{F}_k^{2^k-1} (P) &\triangleq f \circ f \circ \dots \circ f \circ f (P) 
\end{split}
\end{equation*}
where $\circ$ denotes composition, and the ordering of the $g$'s and $f$'s in  $\mathcal{F}_k^i$ is determined by the binary representation of $i$ and the correspondence $g \rightarrow 0, f \rightarrow 1$. Assuming that the initial covariances $P_{0|0} = \bar{P}$ and $P_{e,0|0} = \bar{P}$, 
then the  possible values that $P_{k+1|k}$ can take will lie in the finite set
$\{\mathcal{F}_k^0 (\bar{P}^+), \mathcal{F}_k^1 (\bar{P}^+), \dots, \mathcal{F}_k^{2^k-1} (\bar{P}^+) \}$.

Now define the belief vector (c.f. (\ref{belief_vector}))
$$
\pi_{e,k} \! =\! \left[ \!\!\begin{array}{c} \pi_{e,k}^{(0)} \\ \pi_{e,k}^{(1)} \\ \vdots \\ \pi_{e,k}^{(2^k-1)} \end{array} \!\! \right] \! \triangleq  \!
\left[ \!\!\begin{array}{c} \mathbb{P}\big(P_{e,k+1|k} = \mathcal{F}_k^0 (\bar{P}^+) | \nu_0, \dots, \nu_k \big) \\  \mathbb{P}\big(P_{e,k+1|k} = \mathcal{F}_k^1 (\bar{P}^+) | \nu_0, \dots, \nu_k \big)  \\ \vdots \\  \mathbb{P}\big(P_{e,k+1|k} = \mathcal{F}_k^{2^k-1} (\bar{P}^+) | \nu_0, \dots, \nu_k \big)  \end{array} \!\!\right]
$$
We can easily verify that the following recursive relationship for the beliefs holds:
$$ \pi_{e,k+1 } = \Phi_k(\pi_{e,k}, \nu_{k+1}),$$
where $\Phi_k(\pi_{e}, \nu) : \mathbb{R}^{2^k} \times \{0,1\} \rightarrow  \mathbb{R}^{2^{k+1}}$ is given  by
\begin{equation*}
\begin{split}
& \Phi_k(\pi_{e}, \nu) \\ &\!  \triangleq  \left\{ \!\!\! \begin{array}{lc} 
\left[\begin{array}{cccccc} 0 & \dots & 0 & \pi_{e}^{(0)} & \dots & \pi_{e}^{(2^k-1)} \end{array} \right]^T\!\!, & \textnormal{if } \nu = 0   \\ 
  \left[\begin{array}{cccc}  \lambda_e \pi_{e}^{(0)} & \dots &  \lambda_e \pi_{e}^{(2^k-1)} & (1-\lambda_e) \pi_{e}^{(0)}  \end{array}\right. \\ \left.  \begin{array}{cc} \quad\quad\quad\quad\quad\quad\quad  \dots &(1-\lambda_e) \pi_{e}^{(2^k-1)}  \end{array} \right]^T\!\!, & \textnormal{if } \nu = 1 \end{array}\right. 
\end{split}
\end{equation*}
The transmission scheduling problem is then given by
\begin{equation}
\label{finite_horizon_problem_partial_CSI_tx_meas}
\begin{split}
& \min_{\nu_k \in \{0,1\} } \sum_{k=1}^K \mathbb{E}\Big[\beta(\nu_k \lambda \textrm{tr}g(P_{k|k-1})) + (1-\nu_k \lambda)\textrm{tr}f(P_{k|k-1})) \\ & \quad  - (1-\beta) \Big( \nu_k \lambda_e  \sum_{i=0}^{2^{k\!-\!1}\!-1}g ( \mathcal{F}_{k-1}^i(\bar{P}^+)) \pi_{e,k-1}^{(i)} \\ & \quad\quad\quad\quad + (1-\nu_k \lambda_e) \sum_{i=0}^{2^{k\!-\!1}\!-1} f( \mathcal{F}_{k-1}^i(\bar{P}^+)) \pi_{e,k-1}^{(i)}  \Big) \Big],
\end{split}
\end{equation}
where the  transmission decisions $\nu_k$ are dependent on $P_{k|k-1}$ and $\pi_{e,k-1}$.
Similar to Theorem \ref{fixed_Pi_e_thm_partial_CSI_alt}, we have the following structural result:
\begin{theorem}
\label{fixed_Pi_e_thm_partial_CSI_tx_meas}
Suppose the system is scalar. Then for fixed $\pi_{e,k-1}$, the optimal solution to problem (\ref{finite_horizon_problem_partial_CSI_tx_meas}) is a threshold policy on $P_{k|k-1}$ of the form
\begin{equation*}
\begin{split}
\nu_k^*(P_{k|k-1},\pi_{e,k-1}) = \left\{ \begin{array}{lcl} 0 & , & \textnormal{if } P_{k|k-1} \leq P^*_k \\ 1 & , & \textnormal{otherwise} \end{array} \right.
\end{split}
\end{equation*}
where the threshold $P^*_k \in \mathbb{R}$ depends on $k$ and $\pi_{e,k-1}$.
\end{theorem}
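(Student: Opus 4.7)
The plan is to mirror the dynamic programming argument used to prove Theorem~\ref{fixed_Pi_e_thm_partial_CSI_alt}, adapted to the measurement-transmission setting where the one-step transition uses both $f(\cdot)$ and $g(\cdot)$. First I would write out the partially observed dynamic programming recursion for problem~(\ref{finite_horizon_problem_partial_CSI_tx_meas}): value functions $\mathcal{J}_k(P,\pi_e)$ with $\mathcal{J}_{K+1}\equiv 0$ and
\begin{equation*}
\mathcal{J}_k(P,\pi_e)=\min_{\nu\in\{0,1\}}\bigl\{c_k(P,\pi_e,\nu)+\mathcal{C}_k(P,\pi_e,\nu)\bigr\},
\end{equation*}
where $c_k$ is the per-stage cost appearing in~(\ref{finite_horizon_problem_partial_CSI_tx_meas}) and $\mathcal{C}_k$ is the expected continuation cost, which for $\nu=0$ equals $\mathcal{J}_{k+1}(f(P),\Phi_{k-1}(\pi_e,0))$ and for $\nu=1$ equals $\lambda\mathcal{J}_{k+1}(g(P),\Phi_{k-1}(\pi_e,1))+(1-\lambda)\mathcal{J}_{k+1}(f(P),\Phi_{k-1}(\pi_e,1))$.

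Next I would define the switching function
\begin{equation*}
\psi_k(P,\pi_e)\triangleq\bigl[c_k(P,\pi_e,0)+\mathcal{C}_k(P,\pi_e,0)\bigr]-\bigl[c_k(P,\pi_e,1)+\mathcal{C}_k(P,\pi_e,1)\bigr].
\end{equation*}
Since $\nu_k$ is binary, the optimal policy transmits iff $\psi_k(P_{k|k-1},\pi_{e,k-1})>0$, so the claimed threshold structure in $P$ (for fixed $\pi_e$) follows once I show $\psi_k(\cdot,\pi_e)$ is increasing on $\mathbb{R}_{\geq 0}$. Expanding, the per-stage contribution to $\psi_k$ is proportional to $\textrm{tr} f(P)-\textrm{tr} g(P)$, which in the scalar case equals $A^2C^2P^2/(C^2P+R)$, an increasing nonnegative function of $P$; this is where scalarness is used decisively (as in Theorem~\ref{structural_thm_full_CSI_tx_meas}). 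The continuation contribution is
\begin{equation*}
\mathcal{J}_{k+1}(f(P),\Phi_{k-1}(\pi_e,0))-\lambda\mathcal{J}_{k+1}(g(P),\Phi_{k-1}(\pi_e,1))-(1-\lambda)\mathcal{J}_{k+1}(f(P),\Phi_{k-1}(\pi_e,1)).
\end{equation*}

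I would then establish monotonicity of this continuation term by backward induction on $k$, proving, in the spirit of the induction used for Theorem~\ref{fixed_Pi_e_thm_partial_CSI_alt}, the slightly strengthened statement that for every $k$, every composition $\mathcal{F}$ of $f$'s and $g$'s, and every $\pi_e,\pi_e'$ of compatible dimension,
\begin{equation*}
\mathcal{J}_k(\mathcal{F}(P),\pi_e)-(1-\lambda)\mathcal{J}_k(\mathcal{F}(P),\pi_e')
\end{equation*}
is an increasing function of $P$. The base case $k=K+1$ is immediate. For the inductive step I would plug the DP recursion into both sides, distribute over the two candidate actions $\nu\in\{0,1\}$, use the inequality $\min-\min \geq \min(-)$ to bring the minimum outside, and then check each $\nu$ branch using: (a) scalar monotonicity of $f$ and $g$, so that $\mathcal{F}'(P)=f(\mathcal{F}(P))$ and $\mathcal{F}''(P)=g(\mathcal{F}(P))$ remain in the class of admissible compositions; and (b) the inductive hypothesis applied to these new compositions with the appropriately shifted beliefs $\Phi_{k-1}(\pi_e,\cdot)$ and $\Phi_{k-1}(\pi_e',\cdot)$.

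The main obstacle I anticipate is bookkeeping: the belief-update map $\Phi_k$ is time-dependent and doubles the belief dimension at each step, unlike the stationary $\Phi$ in Section~\ref{partial_CSI_sec}. I would handle this by formulating the induction hypothesis with $\pi_e,\pi_e'$ of matching (but $k$-dependent) dimension, and by observing that the belief map only enters through the second argument of $\mathcal{J}_k$, which plays no role in the monotonicity-in-$P$ argument. With scalarness ensuring that $f,g$ and every composition $\mathcal{F}$ preserve the order on $P$, the induction closes exactly as in the proof of Theorem~\ref{fixed_Pi_e_thm_partial_CSI_alt}, giving the desired threshold on $P_{k|k-1}$.
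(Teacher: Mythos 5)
Your overall strategy (write the partially observed DP recursion, reduce the threshold claim to monotonicity in $P$ of a switching function $\psi_k$, handle the per-stage part via $f(P)-g(P)=A^2C^2P^2/(C^2P+R)$ being increasing in the scalar case, and handle the continuation part by a backward induction over a class of compositions of $f$ and $g$) is the right one, and it is what the paper intends — the paper itself gives no proof of this theorem, merely asserting it is "similar to" Theorem \ref{fixed_Pi_e_thm_partial_CSI_alt}, so the real test is whether your argument correctly combines the techniques of that theorem with those of Theorem \ref{structural_thm_full_CSI_tx_meas}. It does not, because your strengthened induction hypothesis is too weak.

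The continuation contribution you correctly identify is
\begin{equation*}
\mathcal{J}_{k+1}\bigl(f(P),\Phi_{k-1}(\pi_e,0)\bigr)-\lambda\,\mathcal{J}_{k+1}\bigl(g(P),\Phi_{k-1}(\pi_e,1)\bigr)-(1-\lambda)\,\mathcal{J}_{k+1}\bigl(f(P),\Phi_{k-1}(\pi_e,1)\bigr),
\end{equation*}
but the statement you propose to prove by induction, namely that $\mathcal{J}_k(\mathcal{F}(P),\pi_e)-(1-\lambda)\mathcal{J}_k(\mathcal{F}(P),\pi_e')$ is increasing in $P$, has the \emph{same} first argument $\mathcal{F}(P)$ in both terms. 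Taking $\mathcal{F}=f$ it controls the first and third terms above, but it says nothing about the middle term $-\lambda\,\mathcal{J}_{k+1}(g(P),\cdot)$, which is a \emph{decreasing} function of $P$ (since $\mathcal{J}_{k+1}$ is increasing in its first argument and $g$ is increasing); adding a decreasing term to an increasing one proves nothing. This is precisely the difficulty that distinguishes the measurement-transmission case from the state-estimate case: there a successful transmission resets the covariance to the constant $\bar{P}$, so the "transmit and succeed" branch drops out of the monotonicity argument, whereas here it maps $P$ to $g(P)$ and must be carried along. The induction hypothesis therefore has to be of the form "$\mathcal{J}_k(\mathcal{F}f(P),\pi_e)-\lambda\,\mathcal{J}_k(\mathcal{F}g(P),\pi_e')-(1-\lambda)\,\mathcal{J}_k(\mathcal{F}f(P),\pi_e'')$ is increasing in $P$ for every admissible composition $\mathcal{F}$ and all beliefs," mirroring the four-term hypothesis (\ref{induction_hypothesis_tx_meas}) used for Theorem \ref{structural_thm_full_CSI_tx_meas}(i). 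Closing that induction then forces you to invoke Lemma \ref{composition_lemma} — that $\mathcal{F}(f(P))-\mathcal{F}(g(P))$ is increasing for scalar systems — because after one step of the DP the per-stage costs produce differences such as $f\mathcal{F}f(P)-\lambda\,g\mathcal{F}g(P)-(1-\lambda)f\mathcal{F}f(P)$ whose monotonicity is exactly what that lemma supplies. Your proposal uses only the $\mathcal{F}=\mathrm{id}$ instance of this lemma (for the per-stage cost) and never cites it for the continuation cost, so the inductive step as written cannot be completed. Your bookkeeping remarks about the time-varying, dimension-doubling belief map $\Phi_k$ are fine and are not where the difficulty lies.
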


\section{Numerical Studies}
\label{numerical_sec}
We consider an example with parameters 
$$A = \left[\begin{array}{cc} 1.2 & 0.2 \\ 0.3 & 0.8  \end{array}\right],\, C = \left[\begin{array}{cc} 1 & 1 \end{array} \right],\, Q=I,\, R=1$$
The steady state error covariance $\bar{P}$ is easily computed as
 $$\bar{P} =  \left[\begin{array}{rr} 1.3411 & -0.8244 \\ -0.8244 & 1.0919  \end{array}\right].$$

\subsection{Finite Horizon}
We will here solve the finite horizon problem with $K=10$. The packet reception probability is chosen to be $\lambda=0.6$, and the eavesdropping probability $\lambda_e =0.6$.  
Assuming that  the eavesdropper error covariance is available, and using the design parameter $\beta=0.7$, Fig. \ref{threshold_plot} plots $\nu_k^*$ for different values of $P_{k-1|k-1}=f^n(\bar{P})$  and $P_{e,k-1|k-1}=f^{n_e}(\bar{P})$, at the time step $k=4$. Fig. \ref{threshold_plot2} plots $\nu_k^*$ at the time step $k=6$. We observe a threshold behaviour in both $P_{k-1|k-1}$ and $P_{e,k-1|k-1}$, with the thresholds also dependent on the time $k$, in agreement with Theorem \ref{structural_thm_full_CSI_alt}.
\begin{figure}[t!]
\centering 
\includegraphics[scale=0.5]{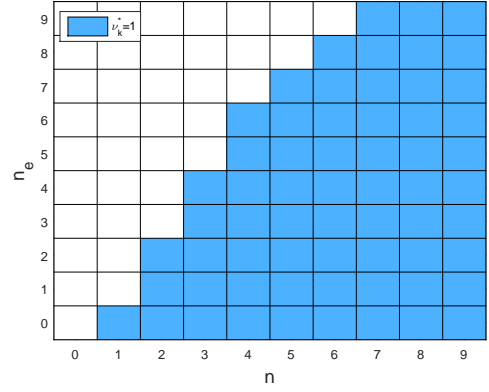} 
\caption{$\nu_k^*$ for different values of $P_{k-1|k-1}=f^n(\bar{P})$ and $P_{e,k-1|k-1}=f^{n_e}(\bar{P})$, at time $k=4$.}
\label{threshold_plot}
\end{figure}  
\begin{figure}[t!]
\centering 
\includegraphics[scale=0.5]{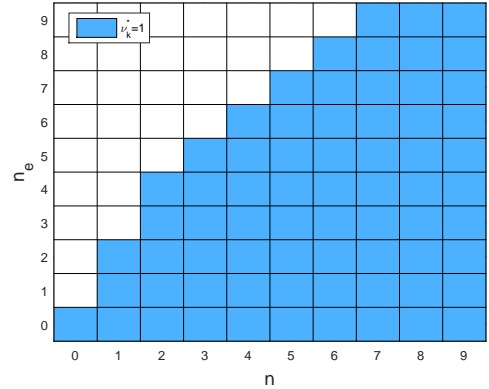} 
\caption{$\nu_k^*$ for different values of $P_{k-1|k-1}=f^n(\bar{P})$ and $P_{e,k-1|k-1}=f^{n_e}(\bar{P})$, at time $k=6$.}
\label{threshold_plot2}
\end{figure}  
 
 Next, we consider the performance as $\beta$ is varied, both when the eavesdropper error covariance is known and  unknown. Fig. \ref{Pk_Pek_plot} plots the trace of the expected error covariance at the estimator $\textrm{tr} \mathbb{E}[P_{k|k}]$ vs. the trace of the expected error covariance at the eavesdropper $\textrm{tr} \mathbb{E}[P_{e,k|k}]$. 
  \begin{figure}[t!]
\centering 
\includegraphics[scale=0.6]{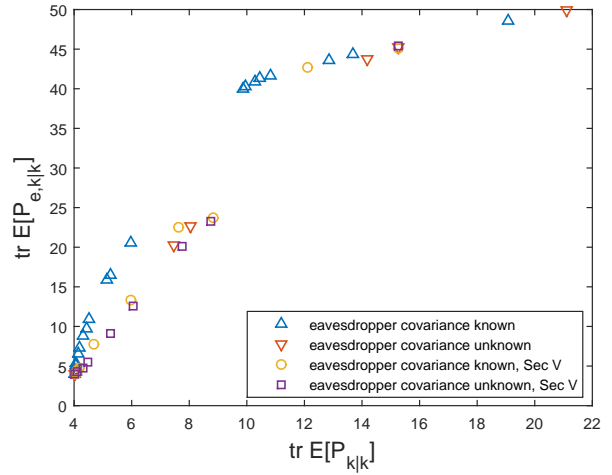} 
\caption{Expected error covariance at estimator vs expected error covariance at eavesdropper. Finite horizon.}
\label{Pk_Pek_plot}
\end{figure} 
Each point is obtained by averaging over 100000 Monte Carlo runs. We see that by varying $\beta$ we obtain a tradeoff between  $\textrm{tr} \mathbb{E}[P_{k|k}]$ and $\textrm{tr} \mathbb{E}[P_{e,k|k}]$, with the tradeoff being better when the eavesdropper error covariance is known. 

 Fig. \ref{Pk_Iek_plot} plots the trace of the expected error covariance $\textrm{tr} \mathbb{E}[P_{k|k}]$ vs. the expected  information $\mathbb{E}[I_{e,k}]$ revealed to the eavesdropper, where $I_{e,k}$ is given by (\ref{directed_mutual_info}).  
 \begin{figure}[t!]
\centering 
\includegraphics[scale=0.6]{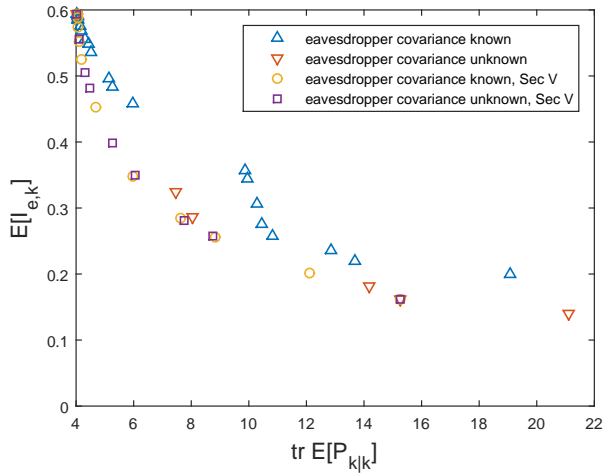} 
\caption{Expected error covariance at estimator vs expected  information revealed to eavesdropper. Finite horizon.}
\label{Pk_Iek_plot}
\end{figure} 
For comparison, the performance obtained by solving problems (\ref{finite_horizon_problem_full_CSI}) and (\ref{finite_horizon_problem_partial_CSI}) in Section \ref{alt_measures_sec} is also plotted in Fig. \ref{Pk_Pek_plot}.   We  observe that the solutions to problems (\ref{finite_horizon_problem_full_CSI}) and (\ref{finite_horizon_problem_partial_CSI}) give worse performance in terms of the tradeoff between $\textrm{tr} \mathbb{E}[P_{k|k}]$ and  $\textrm{tr} \mathbb{E}[P_{e,k|k}]$, but better performance in terms of the tradeoff between $\textrm{tr} \mathbb{E}[P_{k|k}]$ and  $ \mathbb{E}[I_{e,k}]$, since they directly optimize this tradeoff.

\subsection{Infinite Horizon}
\label{numerical_infinite_horizon_sec}
We next present results for the infinite horizon situation.   Table \ref{Pk_Pek_table_infinite_horizon}  tabulates some values of $\textrm{tr} \mathbb{E}[P_{k|k}]$ and $\textrm{tr} \mathbb{E}[P_{e,k|k}]$, obtained by taking the time average of a Monte Carlo run of length 1000000, using the threshold policy in the proof of Theorem \ref{infinite_horizon_unbounded_thm}  which transmits at time $k$ if and only if $P_{k-1|k-1} \geq f^t(\bar{P})$. In the case  $\lambda= 0.6$, $\lambda_e = 0.6$, condition (\ref{lambda_e_condition_infinite_horizon}) for unboundedness of the expected eavesdropper covariance is satisfied when $t \geq 2$, and in the case  $\lambda= 0.6$, $\lambda_e = 0.8$ (where the eavesdropping probability is higher than the packet reception probability), condition (\ref{lambda_e_condition_infinite_horizon}) is satisfied for $t \geq 3$.  We see that in both cases, by using a sufficiently large $t$, one can make the expected error covariance of the eavesdropper very large, while keeping the expected error covariance at the estimator bounded. 
\begin{table}[t!]
\caption{Expected error covariance at estimator vs expected error covariance at eavesdropper. Infinite horizon.}
\centering
\begin{tabular}{|c||c|c||c|c|} \hline
& \multicolumn{2}{|c||}{$\lambda = 0.6$, $\lambda_e = 0.6$} &  \multicolumn{2}{|c|}{$\lambda = 0.6$, $\lambda_e = 0.8$}  \\ \hline
$ t$ & $\textrm{tr} \mathbb{E}[P_{k|k}]$ & $\textrm{tr} \mathbb{E}[P_{e,k|k}]$  & $\textrm{tr} \mathbb{E}[P_{k|k}]$ & $\textrm{tr} \mathbb{E}[P_{e,k|k}]$   \\ \hline \hline
 1 & 5.59 & $ 19.49$ & 5.32 & $ 4.66$ \\ \hline
 2 & 7.53 &  $ 523.06$ & 7.60 &  $ 14.05$ \\ \hline
 3 & 10.76 &  $ 2.82 \times 10^{5}$ & 10.67 &  $ 136.06$ \\ \hline
 4 & 15.36 &  $ 1.21 \times 10^{8}$ & 15.59 &  $ 2.14 \times 10^{3}$ \\ \hline
 5 & 23.57 &  $ 1.19 \times 10^{10}$ & 23.34 &  $ 1.72  \times 10^{5}$ \\ \hline 
 6 & 35.07 &  $ 3.68 \times 10^{13}$ & 35.04 &  $ 6.83  \times 10^{6} $ \\ \hline 
\end{tabular}
\label{Pk_Pek_table_infinite_horizon}
\end{table}

Finally, we consider the performance  obtained by solving the infinite horizon problems (\ref{infinite_horizon_problem_full_CSI}) and (\ref{infinite_horizon_problem_partial_CSI}) as $\beta$ is varied, both when the eavesdropper error covariance is known and  unknown. We use $\lambda= 0.6$, $\lambda_e = 0.6$.  In numerical solutions we use the truncated set $\mathcal{S}^N$ from Remark \ref{SN_remark} with $N=10$. Fig. \ref{Pk_Iek_plot_infinite_horizon} plots the trace of the expected error covariance $\textrm{tr} \mathbb{E}[P_{k|k}]$ vs. the expected  information $\mathbb{E}[I_{e,k}]$ revealed to the eavesdropper, with $\textrm{tr} \mathbb{E}[P_{k|k}]$ and $\mathbb{E}[I_{e,k}]$ obtained by taking the time average of a Monte Carlo run of length 1000000. We again obtain a tradeoff between  $\textrm{tr} \mathbb{E}[P_{k|k}]$ and $\mathbb{E}[I_{e,k}]$. 
However, here  the expected information revealed to the eavesdropper always appears to be bounded away from zero, which is in agreement with Theorem \ref{infinite_horizon_info_thm}. 
\begin{figure}[t!]
\centering 
\includegraphics[scale=0.6]{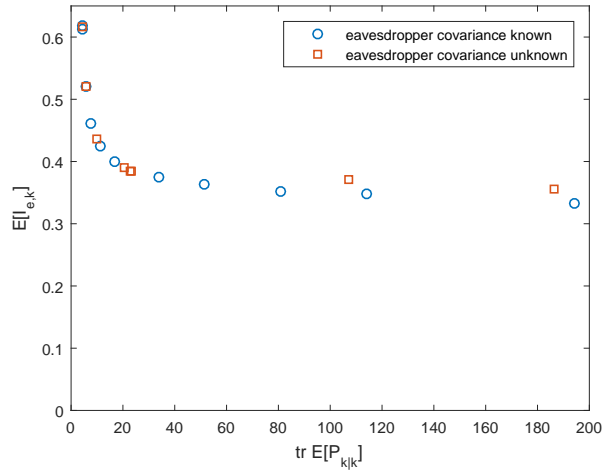} 
\caption{Expected error covariance at estimator vs expected  information revealed to eavesdropper. Infinite horizon.}
\label{Pk_Iek_plot_infinite_horizon}
\end{figure}

\section{Conclusion}
\label{conclusion_sec}
In this paper we have studied the scheduling of sensor transmissions for remote state estimation, where each transmission can be overheard by an eavesdropper with a certain probability. The scheduling is done by solving an optimization problem that minimizes a combination of the expected error covariance at the remote estimator and the negative of the expected error covariance at the eavesdropper. We have derived structural results on the optimal transmission scheduling which show a thresholding behaviour in the optimal policies.  In the infinite horizon situation, we have shown that with unstable systems one can keep the expected estimation error covariance bounded while the expected eavesdropper error covariance becomes unbounded. An alternative measure of security has been considered, where in the infinite horizon situation with unstable systems, we have shown that the expected amount of information revealed to the eavesdropper is always lower bounded away from zero, for any transmission policy which keeps the expected estimation error covariance bounded. Extensions to the basic framework have been considered, such as the transmission of measurements, with an extension to Markovian packet drops currently under investigation. 
Future work will include the investigation of other techniques inspired by  those introduced into physical layer security \cite{ZhouSongZhang}.

\begin{appendix}


\subsection{Proof of Theorem \ref{structural_thm_full_CSI}}
\label{structural_thm_full_CSI_proof}
Define $J_k(\cdot,\cdot): \mathcal{S} \times \mathcal{S} \rightarrow \mathbb{R}$ by:
\begin{align*}
&J_{K+1}(P,P_e)  =0 \\
&J_k(P,P_e)  = \min_{\nu \in \{0,1\}} \Big\{ \beta(\nu \lambda \textrm{tr} \bar{P} + (1-\nu \lambda)\textrm{tr}f(P)) \\ & \quad+ (1-\beta) \nu \lambda_e \Big(\frac{1}{2} \log \det f(P_{e}) - \frac{1}{2} \log \det \bar{P}\Big) \\ & \quad +  \nu \lambda \lambda_e J_{k+1}(\bar{P}, \bar{P}) +  \nu \lambda (1-\lambda_e) J_{k+1}(\bar{P},f(P_e)) \\ & \quad +  \nu (1-\lambda) \lambda_e J_{k+1}(f(P),\bar{P}) 
\\ &  \quad + \big(\nu(1-\lambda)(1-\lambda_e) + 1 - \nu\big)  J_{k+1}(f(P),f(P_e))  \Big\} 
\end{align*}
for $k=K,\dots,1$, 
and define:
\begin{equation}
\label{phi_fn_defn_info}
\begin{split}
& \phi_k(P,P_e)  \triangleq \beta \lambda  \textrm{tr}f(P) - \beta \lambda \textrm{tr} \bar{P} \\ & \quad - (1-\beta) \lambda_e \Big(\frac{1}{2} \log \det f(P_{e}) - \frac{1}{2} \log \det \bar{P}\Big) \\
& \quad + [1-(1-\lambda)(1-\lambda_e)] J_{k+1}(f(P),f(P_e)) \\
& \quad - \lambda \lambda_e J_{k+1}(\bar{P}, \bar{P}) -   \lambda (1-\lambda_e) J_{k+1}(\bar{P},f(P_e)) \\ & \quad -   (1-\lambda) \lambda_e J_{k+1}(f(P),\bar{P}). 
\end{split}
\end{equation}
Part (i) can then  be proved using similar techniques as in the proof of Theorem \ref{structural_thm_full_CSI_alt}(i). 
\\The proof of (ii) requires a more delicate argument involving the use of Lemma \ref{logdet_lemma}. We want to show that for fixed $P$, the function $\phi_k(P,P_e) $ defined by (\ref{phi_fn_defn_info}) is a decreasing function of $P_e$. This will be true if we can show that 
\begin{equation*}
\begin{split}
& -(1-\beta) \lambda_e \frac{1}{2} \log \det f(P_e) + [1-(1-\lambda)(1-\lambda_e)] \\ & \quad \times J_{k}(f(P),f(P_e))-  \lambda (1-\lambda_e) J_{k}(\bar{P},f(P_e)) 
\end{split}
\end{equation*}
is a decreasing function of $P_e$ for all $k$. Using induction, we will prove the slightly more general statement that 
\begin{equation*}
\begin{split}
& -(1-\beta) \lambda_e \frac{1}{2} \log \det f^n(P_e) + [1-(1-\lambda)(1-\lambda_e)] \\ & \quad \times J_{k}(P,f^n(P_e))-  \lambda (1-\lambda_e) J_{k}(P',f^n(P_e)) 
\end{split}
\end{equation*}
is a decreasing function of $P_e$ for all $k$, all $n\in \mathbb{N}$ and all $P, P' \in \mathcal{S}$. 

The case of $k=K+1$ is clear. Now assume that for  $P_e \geq P_e'$, 
\begin{equation}
\label{induction_hypothesis_fixed_P}
\begin{split}
&-(1-\beta) \lambda_e \frac{1}{2} \log \det f^n(P_e) + [1-(1-\lambda)(1-\lambda_e)] \\ & \quad \times  J_{l}(P,f^n(P_e))-  \lambda (1-\lambda_e) J_{l}(P',f^n(P_e)) \\
&  +(1-\beta) \lambda_e \frac{1}{2} \log \det f^n(P_e') - [1-(1-\lambda)(1-\lambda_e)] \\ & \quad \times J_{l}(P,f^n(P_e')) +  \lambda (1-\lambda_e) J_{l}(P',f^n(P_e')) \leq 0
\end{split}
\end{equation}
or equivalently, 
\begin{equation}
\label{induction_hypothesis_equiv_fixed_P}
\begin{split}
&-(1-\beta) \lambda_e \frac{1}{2} \log \det f^n(P_e') + [1-(1-\lambda)(1-\lambda_e)] \\ & \quad \times J_{l}(P,f^n(P_e'))-  \lambda (1-\lambda_e) J_{l}(P',f^n(P_e')) \\
&  +(1-\beta) \lambda_e \frac{1}{2} \log \det f^n(P_e) - [1-(1-\lambda)(1-\lambda_e)]  \\ & \quad \times J_{l}(P,f^n(P_e)) +  \lambda (1-\lambda_e) J_{l}(P',f^n(P_e)) \geq 0
\end{split}
\end{equation}
holds for $l=K+1,K,\dots,k+1$. Then 
\begin{align*}
&-(1-\beta) \lambda_e \frac{1}{2} \log \det f^n(P_e') + [1-(1-\lambda)(1-\lambda_e)] \\ & \quad \times J_{k}(P,f^n(P_e'))-  \lambda (1-\lambda_e) J_{k}(P',f^n(P_e'))  \\
& +(1-\beta) \lambda_e \frac{1}{2} \log \det f^n(P_e) - [1-(1-\lambda)(1-\lambda_e)] \\ & \quad \times J_{k}(P,f^n(P_e)) +  \lambda (1-\lambda_e) J_{k}(P',f^n(P_e)) \\
& \geq -(1-\beta) \lambda_e \frac{1}{2} \log \det f^n(P_e')  +(1-\beta) \lambda_e \frac{1}{2} \log \det f^n(P_e) \\&  +  \min_{\nu \in \{0,1\}} \Bigg\{  [1-(1-\lambda)(1-\lambda_e)] \Big\{ (1-\beta) \nu \lambda_e \\ & \quad \times \frac{1}{2} \log \det f^{n+1}(P_e')   + \nu \lambda (1-\lambda_e) J_{k+1}(\bar{P}, f^{n+1}(P_e')) \\ 
&   \quad + [\nu(1-\lambda)(1-\lambda_e) + 1-\nu] J_{k+1}(f(P),f^{n+1}(P_e'))   \Big\} \\
&  -\lambda (1-\lambda_e)    \Big\{ (1-\beta) \nu \lambda_e \frac{1}{2} \log \det f^{n+1}(P_e') \\ & \quad  + \nu \lambda (1-\lambda_e) J_{k+1}(\bar{P}, f^{n+1}(P_e')) \\
& \quad  + [\nu(1-\lambda)(1-\lambda_e) + 1-\nu] J_{k+1}(f(P'),f^{n+1}(P_e'))   \Big\}  \\ 
&  -[1-(1-\lambda)(1-\lambda_e)] \Big\{ (1-\beta) \nu \lambda_e \frac{1}{2} \log \det f^{n+1}(P_e)  \\ & \quad + \nu \lambda (1-\lambda_e) J_{k+1}(\bar{P}, f^{n+1}(P_e)) \\ 
&  \quad + [\nu(1-\lambda)(1-\lambda_e) + 1-\nu] J_{k+1}(f(P),f^{n+1}(P_e))   \Big\} \\
&   +\lambda (1-\lambda_e)    \Big\{ (1-\beta) \nu \lambda_e \frac{1}{2} \log \det f^{n+1}(P_e)  \\ &  \quad + \nu \lambda (1-\lambda_e) J_{k+1}(\bar{P}, f^{n+1}(P_e)) \\ 
&  \quad + [\nu(1-\lambda)(1-\lambda_e) + 1-\nu] J_{k+1}(f(P'),f^{n+1}(P_e))   \Big\} \Bigg\} \\
& =  \min_{\nu \in \{0,1\}} \Bigg\{  \nu \lambda(1-\lambda_e) \Big \{ [1-(1-\lambda)(1-\lambda_e)] \\ &  \quad \times \!J_{k+1}(\bar{P}, f^{n+1}(P_e')) \!-\! \lambda (1\!-\!\lambda_e)  J_{k+1}(\bar{P}, f^{n+1}(P_e')) \\ 
& \quad  - [1-(1-\lambda)(1-\lambda_e)] J_{k+1}(\bar{P}, f^{n+1}(P_e)) \\ & \quad + \lambda (1-\lambda_e) J_{k+1}(\bar{P}, f^{n+1}(P_e)) \\
& \quad   -(1-\beta) \lambda_e \frac{1}{2} \log \det f^{n+1}(P_e') \\ &  \quad +(1-\beta) \lambda_e \frac{1}{2} \log \det f^{n+1}(P_e) \Big\} \\
&  + [\nu(1-\lambda)(1-\lambda_e) +1-\nu] \Big \{ [1-(1-\lambda)(1-\lambda_e)] \\ & \quad   \times\! J_{k+1}(f(P), f^{n+1}(P_e'))\!-\! \lambda (1\!-\!\lambda_e)  J_{k+1}(f(P'), f^{n+1}(P_e')) \\ 
&  \quad - [1-(1-\lambda)(1-\lambda_e)] J_{k+1}(f(P), f^{n+1}(P_e)) \\ &  \quad + \lambda (1-\lambda_e) J_{k+1}(f(P'), f^{n+1}(P_e)) \\
&  \quad  -(1-\beta) \lambda_e \frac{1}{2} \log \det f^{n+1}(P_e')  \\ &  \quad +(1-\beta) \lambda_e \frac{1}{2} \log \det f^{n+1}(P_e) \Big\} \Bigg\} \\
&   +\!(1\!-\!\beta) \lambda_e \frac{1}{2} \log \det f^{n+1}(P_e')  \!-\! (1\!-\!\beta) \lambda_e \frac{1}{2} \log \det f^{n+1}(P_e) \\
&   -(1-\beta) \lambda_e \frac{1}{2} \log \det f^n(P_e')  +(1-\beta) \lambda_e \frac{1}{2} \log \det f^n(P_e)  \\
&\geq 0
\end{align*}
where the first inequality holds after some cancellation, and the equality is a rearrangement of the terms. The last inequality holds (for both cases $\nu^*=0$ and $\nu^*=1$)
by the induction hypothesis (\ref{induction_hypothesis_equiv_fixed_P}) and the fact that $(1-\beta) \lambda_e \frac{1}{2} \log \det f^n(P_e) - (1-\beta) \lambda_e \frac{1}{2} \log \det f^{n+1}(P_e)$ is an increasing function of $P_e$ by Lemma \ref{logdet_lemma}.

\subsection{Proof of Lemma \ref{logdet_lemma}}
\label{logdet_lemma_proof}
It suffices to show that
$$\log \det P - \log \det f(P) $$ 
is an increasing function of $P$, as the property that $P \geq P' \Rightarrow f^n(P) \geq f^n(P')$ then implies that $\log \det f^n(P) - \log \det f^{n+1}(P) = \log \det f^n(P) - \log \det f(f^{n}(P))$ is an increasing function of $P$. 
Let $\mathbf{S}$ denote the set of all symmetric positive semi-definite matrices. We will use the characterization from pp.108-109 of \cite{BoydVandenberghe} that  a function $F: \mathbf{S} \rightarrow \mathbb{R}$ is matrix monotone increasing (which corresponds to Definition \ref{increasing_fn_defn} when restricted to $\mathcal{S}$) if and only if the gradient $\nabla F(P) \geq 0, \, \forall P \in \mathbf{S}$. We have (see e.g. p.641 of \cite{BoydVandenberghe}) that $\nabla \log \det P = P^{-1}$ (from our assumption that $Q > 0$, we have  $Q$, and hence $P$, being invertible), and by the chain rule that $\nabla \log \det f(P)=\nabla \log \det (A P A^T + Q) = A^T (A P A^T + Q)^{-1} A$. Then
\begin{equation*}
\begin{split}
& \nabla \big(\log \det P - \log \det f(P) \big)   \\
&= P^{-1} - A^T (A P A^T + Q)^{-1} A \\
&=   P^{-1} - P^{-1} P A^T (A P P^{-1} P A^T + Q)^{-1} A P P^{-1}\\
&=    \big(P + P A^T Q^{-1} A P \big)^{-1} \\
& \geq 0 \quad \forall P \in \mathcal{S},
\end{split}
\end{equation*}
where the third equality uses the matrix inversion lemma. 

\subsection{Proof of Lemma \ref{infinite_horizon_upper_bound_lemma}}
\label{infinite_horizon_upper_bound_lemma_proof}
Given a square matrix $X$, let $\sigma_{\min}(X)$ and $\sigma_{\max}(X)$ be the minimum and maximum eigenvalues of $X$ respectively if they are real valued, and let $|\sigma_{\max}(X)|$ be the  spectral radius of $X$. Denote the largest singular value of $X$ by $s_{\max}(X)$.  

We have
\begin{equation*}
\begin{split}
&\frac{1}{N} \left(\frac{1}{2} \log \det f^N(\bar{P}) -  \frac{1}{2} \log \det \bar{P} \right) \\& = \frac{1}{N}\bigg[ \frac{1}{2} \log \det \bigg(A^N \bar{P} (A^N)^T + \sum_{m=0}^{N-1} A^m Q (A^m)^T \bigg) \\ & \quad\quad\quad -  \frac{1}{2} \log \det \bar{P} \bigg] \\
& < \frac{1}{N}  \bigg[ \frac{1}{2} \log \det \bigg(A^N \sigma_{\max}(\bar{P}) I (A^N)^T \\ & \quad \quad \quad + \sum_{m=0}^{N-1} A^m  \sigma_{\max}(Q) I (A^m)^T  \bigg) \bigg] \\
& = \frac{1}{N}  \bigg[ \frac{1}{2} \log \det \bigg(\sigma_{\max}(\bar{P}) A^N   (A^N)^T \\ & \quad \quad \quad + \sum_{m=0}^{N-1} \sigma_{\max}(Q) A^m   (A^m)^T \bigg)  \bigg] 
\end{split}
\end{equation*}
We also have
\begin{equation*}
\begin{split}
\sigma_{\max}  (A^m   (A^m)^T) & \!=\! s_{\max}^2 ((A^m)^T)\! =\! s_{\max}^2 ((A^m)) \!\leq\!  s_{\max}^{2m} (A)
\end{split}
\end{equation*}
where the inequality follows from e.g. p.454 of \cite{HornJohnson}. Then by  Weyl's Theorem \cite[p.239]{HornJohnson}, all eigenvalues of $\sigma_{\max}(\bar{P}) A^N   (A^N)^T + \sum_{m=0}^{N-1} \sigma_{\max}(Q) A^m   (A^m)^T$ will be less than 
\begin{equation*}
\begin{split}
&\sigma_{\max}(\bar{P}) s_{\max}^{2N} (A) + \sum_{m=0}^{N-1} \sigma_{\max}(Q) s_{\max}^{2m} (A) \\& = \sigma_{\max}(\bar{P}) s_{\max}^{2N} (A) + \frac{ \sigma_{\max}(Q) (s_{\max}^{2N} (A)-1) }{s_{\max}^{2} (A)-1} \\
& < \left(\sigma_{\max}(\bar{P}) + \frac{ \sigma_{\max}(Q) }{s_{\max}^{2} (A)-1} \right) s_{\max}^{2N} (A)
\end{split}
\end{equation*}
Recalling that $n_x$ is the dimension of $x_k$, we thus have
\begin{equation*}
\begin{split}
& \frac{1}{N} \left(\frac{1}{2} \log \det f^N(\bar{P}) -  \frac{1}{2} \log \det \bar{P} \right) \\ & < \frac{1}{N}  \bigg[ \frac{1}{2} \log \det \bigg(\sigma_{\max}(\bar{P}) A^N   (A^N)^T \\ & \quad \quad \quad + \sum_{m=0}^{N-1} \sigma_{\max}(Q) A^m   (A^m)^T \bigg)  \bigg] \\
& = \frac{1}{N}  \bigg[ \frac{1}{2} \log \prod_i \sigma_i  \Big(\sigma_{\max}(\bar{P}) A^N   (A^N)^T \\ & \quad \quad \quad + \sum_{m=0}^{N-1} \sigma_{\max}(Q) A^m   (A^m)^T \Big)  \bigg] \\
& < \frac{1}{N}  \left[ \frac{1}{2} \log \left(\left(\sigma_{\max}(\bar{P}) + \frac{ \sigma_{\max}(Q) }{s_{\max}^{2} (A)-1} \right)^{n_x} s_{\max}^{2Nn_x} (A) \right) \right] \\
& \leq  \frac{1}{2} \log \left(\sigma_{\max}(\bar{P}) + \frac{ \sigma_{\max}(Q) }{s_{\max}^{2} (A)-1} \right)^{n_x} + n_x \log s_{\max}(A) \\ & \triangleq \Delta_U < \infty.
\end{split}
\end{equation*}

\subsection{Proof of Lemma \ref{infinite_horizon_lower_bound_lemma}}
\label{infinite_horizon_lower_bound_lemma_proof}
The proof is similar to, though slightly more involved than, the proof of Lemma \ref{infinite_horizon_upper_bound_lemma}. 
We have
\begin{equation*}
\begin{split}
&\frac{1}{N} \left(\frac{1}{2} \log \det f^N(\bar{P}) -  \frac{1}{2} \log \det \bar{P} \right) \\ & = \frac{1}{N}\bigg[ \frac{1}{2} \log \det \bigg(A^N \bar{P} (A^N)^T + \sum_{m=0}^{N-1} A^m Q (A^m)^T \bigg) \\ & \quad \quad \quad -  \frac{1}{2} \log \det \bar{P} \bigg] \\
& \geq \frac{1}{N}  \left[ \frac{1}{2} \log \det \left(A^N \bar{P} (A^N)^T +Q \right) -  \frac{1}{2} \log \det \bar{P} \right] \\
& \geq \frac{1}{N}  \left[ \frac{1}{2} \log \det \left(\sigma_{\min}(\bar{P}) A^N   (A^N)^T +Q \right) -  \frac{1}{2} \log \det \bar{P} \right] 
\end{split}
\end{equation*}

We also have
\begin{equation*}
\begin{split}
& \sigma_{\max}  (A^N   (A^N)^T) = s_{\max}^2 ((A^N)^T) \\ & \quad \geq |\sigma_{\max} ((A^N)^T)|^2   = |\sigma_{\max} (A)|^{2N}
\end{split}
\end{equation*}
where the inequality now follows from e.g. p.347 of \cite{HornJohnson}. By  Weyl's Theorem, the largest eigenvalue of $\sigma_{\min}(\bar{P}) A^N   (A^N)^T +Q$ will be greater than $\sigma_{\min}(\bar{P}) |\lambda_{\max} (A)|^{2N}$, and the remaining eigenvalues of $\sigma_{\min}(\bar{P}) A^N   (A^N)^T +Q$ will be greater than $\sigma_{\min} (Q)$. 
Thus
\begin{equation*}
\begin{split}
& \frac{1}{N} \left(\frac{1}{2} \log \det f^N(\bar{P}) -  \frac{1}{2} \log \det \bar{P} \right) \\  & \geq  \frac{1}{N}  \left[ \frac{1}{2} \log  \prod_i \sigma_i \Big(\sigma_{\min}(\bar{P}) A^N   (A^N)^T +Q \Big) -  \frac{1}{2} \log \det \bar{P} \right] \\
& \geq \frac{1}{N}  \left[ \frac{1}{2} \log   \left(\sigma_{\min}(\bar{P}) |\sigma_{\max} (A)|^{2N} \sigma_{\min}^{n_x-1} (Q)  \right) -  \frac{1}{2} \log \det \bar{P} \right] \\
& =   \log( |\sigma_{\max} (A)|) + \frac{1}{N}  \bigg[ \frac{1}{2} \log   \left(\sigma_{\min}(\bar{P})  \sigma_{\min}^{n_x-1} (Q)  \right) \\ & \quad \quad -  \frac{1}{2} \log \det \bar{P} \bigg]
\end{split}
\end{equation*}
 Let $N'$ be sufficiently large that 
\begin{equation*}
\begin{split}
 \Delta_1 \triangleq & \log( |\sigma_{\max} (A)|) + \frac{1}{N'}  \bigg[ \frac{1}{2} \log   \left(\sigma_{\min}(\bar{P})  \sigma_{\min}^{n_x-1} (Q)  \right) \\ & \quad -  \frac{1}{2} \log \det \bar{P} \bigg]  > 0, 
 \end{split}
 \end{equation*}
which can be satisfied since $ |\sigma_{\max} (A)| > 1$. 
Then we have $\frac{1}{N} \left(\frac{1}{2} \log \det f^N(\bar{P}) -  \frac{1}{2} \log \det \bar{P} \right) > \Delta_1$ for all $N \geq N'$. 

Next, since  $f^N(\bar{P}) \geq \bar{P} > 0, \forall N \in \mathbb{N}$, and $f^N(\bar{P}) \neq \bar{P}$, one can use Theorem 8.4.9 of \cite{Bernstein_book} to conclude that   $\det f^N(\bar{P}) > \det \bar{P}, \forall N \in \mathbb{N}$.
Letting 
$$\Delta_2 \triangleq \!\! \min_{N \in \{1,2,\dots,N'\}} \frac{1}{N} \! \left(\frac{1}{2} \log \det f^N(\bar{P}) -  \frac{1}{2} \log \det \bar{P} \right) > 0,$$
we have  $\frac{1}{N} \left(\frac{1}{2} \log \det f^N(\bar{P}) -  \frac{1}{2} \log \det \bar{P} \right) > \Delta_2$ for all $N \leq N'$. Defining $\Delta_L = \min(\Delta_1,\Delta_2)$ then gives the result. 

\subsection{Proof of Theorem \ref{structural_thm_full_CSI_tx_meas}}
\label{structural_thm_full_CSI_tx_meas_proof}
In order to prove Theorem \ref{structural_thm_full_CSI_tx_meas},  the following result from \cite{LeongDeyQuevedo_TAC} will also be required: 
\begin{lemma}[From \cite{LeongDeyQuevedo_TAC}]
\label{composition_lemma}
Suppose the system is scalar. Let $\mathcal{F}(.)$ be a function formed by composition (in any order) of any of the functions $f(.), g(.), \textnormal{id}(.)$
where 
$$f(P) = A^2 P + Q, \quad g(P) = A^2 P + Q - \frac{A^2 C^2 P^2}{C^2 P + R},$$ and $\textnormal{id}(.)$ is the identity function. Then  $ \mathcal{F}(f(P)) - \mathcal{F}(g(P))$ is an increasing function of $P$. 
\end{lemma}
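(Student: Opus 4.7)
My plan is to exploit the linear-fractional (Möbius) structure that both $f$ and $g$ possess in the scalar case. Rewriting
\begin{equation*}
g(P) = A^2 P + Q - \frac{A^2 C^2 P^2}{C^2 P + R} = \frac{(A^2 R + QC^2)P + QR}{C^2 P + R},
\end{equation*}
together with $f(P) = (A^2 P + Q)/1$, I would associate with $f$, $g$, and $\textnormal{id}$ the $2\times 2$ matrices
\begin{equation*}
M_f \!=\! \begin{pmatrix} A^2 & Q \\ 0 & 1 \end{pmatrix}\!,\; M_g \!=\! \begin{pmatrix} A^2 R + QC^2 & QR \\ C^2 & R \end{pmatrix}\!,\; M_{\textnormal{id}} \!=\! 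I,
\end{equation*}
each of which has entries in $[0,\infty)$ and strictly positive determinant ($A^2$, $A^2 R^2$, and $1$, respectively). Since the composition of two Möbius transformations corresponds to the product of the associated matrices, any composition $\mathcal{F}$ formed from $f$, $g$, $\textnormal{id}$ is itself of Möbius form $\mathcal{F}(P) = (aP+b)/(cP+d)$, whose matrix $M_\mathcal{F} = \bigl(\begin{smallmatrix} a & b \\ c & d \end{smallmatrix}\bigr)$ inherits non-negative entries and strictly positive determinant $ad-bc>0$.

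The next step is the standard Möbius-difference identity
\begin{equation*}
\mathcal{F}(x) - \mathcal{F}(y) = \frac{(ad-bc)(x-y)}{(cx+d)(cy+d)},
\end{equation*}
specialised to $x=f(P)$ and $y=g(P)$. Combined with $f(P)-g(P) = \frac{A^2 C^2 P^2}{C^2 P + R}$, this gives
\begin{equation*}
\mathcal{F}(f(P))-\mathcal{F}(g(P)) = \frac{(ad-bc)\,A^2 C^2 P^2}{(C^2 P + R)\,(cf(P)+d)\,(cg(P)+d)}.
\end{equation*}
A short computation yields $cf(P)+d = cA^2 P + (cQ+d)$ and, after clearing denominators, $(C^2 P + R)(cg(P)+d) = [cA^2 R + C^2(cQ+d)]P + R(cQ+d)$. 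Both expressions are affine in $P$ with non-negative coefficients, so the whole denominator on the right-hand side above is a polynomial in $P$ of degree at most $2$ with non-negative coefficients, while the numerator equals $KP^2$ with $K=(ad-bc)A^2 C^2 > 0$.

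To finish, I would write the expression as $K/(u_2 + u_1/P + u_0/P^2)$ with $u_0,u_1,u_2 \ge 0$; monotonicity on $(0,\infty)$ then follows immediately from the strict decrease of $1/P$ and $1/P^2$, and together with the value $0$ at $P=0$ this yields monotonicity on $[0,\infty)$. The main obstacle I anticipate is not conceptual but bookkeeping: verifying non-negativity of every coefficient after expanding $(C^2 P+R)(cg(P)+d)$ and separately handling the degenerate case $c=0$ (when $\mathcal{F}$ is affine, the denominator reduces to a positive constant times $C^2 P + R$ and monotonicity is again immediate). Once these are checked, the entire proof collapses to the one-line Möbius-difference identity together with the non-negativity propagation through matrix multiplication.
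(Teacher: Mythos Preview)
The paper does not actually prove this lemma; it is quoted verbatim from \cite{LeongDeyQuevedo_TAC} and invoked without argument in Appendix~\ref{structural_thm_full_CSI_tx_meas_proof}. There is therefore no in-paper proof to compare against.

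Your argument is correct and self-contained. The key observations---that in the scalar case both $f$ and $g$ are linear-fractional with non-negative coefficient matrices and strictly positive determinant, that these properties are preserved under composition because they correspond to matrix multiplication, and that the M\"obius difference identity reduces $\mathcal{F}(f(P))-\mathcal{F}(g(P))$ to a ratio $KP^{2}/(u_{2}P^{2}+u_{1}P+u_{0})$ with $K>0$ and $u_{i}\ge 0$---are all valid. Your check that $cQ+d>0$ (forced by $ad-bc>0$ together with $c,d\ge 0$) guarantees the denominator never vanishes on $[0,\infty)$, and the rewriting as $K/(u_{2}+u_{1}/P+u_{0}/P^{2})$ cleanly gives monotonicity on $(0,\infty)$, with the value $0$ at $P=0$ handling the endpoint. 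The degenerate case $c=0$ you flag is indeed harmless: then $\mathcal{F}$ is affine increasing and the claim reduces to monotonicity of $f(P)-g(P)=A^{2}C^{2}P^{2}/(C^{2}P+R)$, which follows by the same $1/P$ trick.

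One small cosmetic point: when you say the denominator is ``a polynomial in $P$ of degree at most $2$,'' it may help the reader to spell out that $(C^{2}P+R)(cg(P)+d)$ collapses to degree one precisely because the $C^{2}P+R$ factor cancels the denominator of $g$, so the product with the degree-one factor $cf(P)+d$ is genuinely quadratic. You have this computation, but stating the cancellation explicitly removes any ambiguity.
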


(i) Define the functions  $J_k(\cdot,\cdot): \mathbb{R} \times \mathbb{R} \rightarrow \mathbb{R}$ by:
\begin{equation}
\label{J_fn_defn_tx_meas}
\begin{split}
&J_{K+1}(P,P_e)  =0 \\
&J_k(P,P_e)  = \min_{\nu \in \{0,1\}} \Big\{ \beta(\nu \lambda \textrm{tr}g(P) + (1-\nu \lambda)\textrm{tr}f(P)) \\ & \quad -  (1-\beta) (\nu \lambda_e \textrm{tr} g(P_e) + (1-\nu \lambda_e)\textrm{tr}f(P_e)) \\ & \quad +  \nu \lambda \lambda_e J_{k+1}(g(P),g(P_e)) +  \nu \lambda (1-\lambda_e) J_{k+1}(g(P),f(P_e))  \\ & \quad +  \nu (1-\lambda) \lambda_e J_{k+1}(f(P),g(P_e)) 
\\ &  \quad + \big(\nu(1-\lambda)(1-\lambda_e) + 1 - \nu\big)  J_{k+1}(f(P),f(P_e))  \Big\} 
\end{split}
\end{equation}
for $k=K,\dots,1$. Then problem  (\ref{finite_horizon_problem_full_CSI_tx_meas}) is solved by computing $J_k(P_{k|k-1}, P_{e,k|k-1})$ for $k = K,K-1,\dots,1$.  For scalar systems, define:
\begin{align}
\label{phi_fn_defn_tx_meas}
& \phi_k(P,P_e)  \nonumber \\ & \triangleq \beta \lambda f(P) - \beta \lambda g(P)  - (1-\beta) \lambda_e f(P_e) + (1-\beta) \lambda_e g(P_e) \nonumber \\
& \quad + [1-(1-\lambda)(1-\lambda_e)] J_{k+1}(f(P),f(P_e)) \nonumber\\ & \quad - \lambda \lambda_e J_{k+1}(g(P), g(P_e)) -   \lambda (1-\lambda_e) J_{k+1}(g(P),f(P_e)) \nonumber \\ & \quad  -   (1-\lambda) \lambda_e J_{k+1}(f(P),g(P_e)) 
\end{align}
As in the proof of Theorem \ref{structural_thm_full_CSI_alt}(i), we wish to show that for fixed $P_e$, $\phi_k(P,P_e)$ is an increasing function of $P$. Since $f(P)$ and $g(P)$ are increasing functions of $P$, this will be the case if we can show that 
\begin{equation*}
\begin{split}
& [1\!-\!(1\!-\!\lambda)(1\!-\!\lambda_e)] J_{k}(\mathcal{F}f(P),P_e) \!-\! \lambda \lambda_e J_{k}(\mathcal{F}g(P),P_e') \\ &  -\! \lambda (1\!-\!\lambda_e) J_{k}(\mathcal{F}g(P),P_e) \!-\!  (1\!-\!\lambda) \lambda_e J_{k}(\mathcal{F}f(P),P_e') 
\end{split}
\end{equation*}
is an increasing function of $P$ for all $k$, all $\mathcal{F}(.)$ formed by compositions of the functions $f(.), g(.), \textrm{id}(.)$, and all $P_e, P_e' \in \mathcal{S}$. The case of $k=K+1$ is clear. Now assume that for $P \geq P'$, 
\begin{equation}
\label{induction_hypothesis_tx_meas}
\begin{split}
& [1\!-\!(1\!-\!\lambda)(1\!-\!\lambda_e)] J_{l}(\mathcal{F}f(P),P_e) \!-\! \lambda \lambda_e J_{l}(\mathcal{F}g(P),P_e') \\ & \quad -\! \lambda (1\!-\!\lambda_e) J_{l}(\mathcal{F}g(P),P_e) \!-\!  (1\!-\!\lambda) \lambda_e J_{l}(\mathcal{F}f(P),P_e') \\ 
& - \![1\!-\!(1\!-\!\lambda)(1\!-\!\lambda_e)] J_{l}(\mathcal{F}f(P'),P_e) \!+\! \lambda \lambda_e J_{l}(\mathcal{F}g(P'),P_e') \\ & \quad +\! \lambda (1\!-\!\lambda_e) J_{l}(\mathcal{F}g(P'),P_e) \!+\!  (1\!-\!\lambda) \lambda_e J_{l}(\mathcal{F}f(P'),P_e') \\ & \geq 0
\end{split}
\end{equation}
holds for $l=K+1,K,\dots,k+1$. Then
\begin{align*}
& [1\!-\!(1\!-\!\lambda)(1\!-\!\lambda_e)] J_{k}(\mathcal{F}f(P),P_e) \!-\! \lambda \lambda_e J_{k}(\mathcal{F}g(P),P_e') \\ & \quad -\! \lambda (1\!-\!\lambda_e) J_{k}(\mathcal{F}g(P),P_e) \!-\!  (1\!-\!\lambda) \lambda_e J_{k}(\mathcal{F}f(P),P_e') \\ 
& - \![1\!-\!(1\!-\!\lambda)(1\!-\!\lambda_e)] J_{k}(\mathcal{F}f(P'),P_e) \!+\! \lambda \lambda_e J_{k}(\mathcal{F}g(P'),P_e') \\ & \quad +\! \lambda (1\!-\!\lambda_e) J_{k}(\mathcal{F}g(P'),P_e) \!+\!  (1\!-\!\lambda) \lambda_e J_{k}(\mathcal{F}f(P'),P_e') \\
& \geq \min_{\nu \in \{0,1\}} \Bigg\{  [1-(1-\lambda)(1-\lambda_e)] \Big\{ \beta [ \nu \lambda g\mathcal{F}f(P)  \\ & \quad + (1-\nu \lambda) 
f\mathcal{F}f(P)] - (1-\beta)  [ \nu \lambda_e g(P_e) + (1-\nu \lambda_e) f(P_e)] \\
& \quad  + \nu \lambda \lambda_e J_{k+1}(g\mathcal{F}f(P),g(P_e)) \\ & \quad + \nu \lambda(1-\lambda_e) J_{k+1} (g\mathcal{F}f(P),f(P_e)) \\ & \quad 
+\nu(1-\lambda) \lambda_e J_{k+1} (f\mathcal{F}f(P),g(P_e)) \\
& \quad  + [\nu(1-\lambda)(1-\lambda_e) + (1-\nu)]J_{k+1} (f\mathcal{F}f(P),f(P_e)) \Big\} \\
&  \quad \vdots \\ 
& + (1-\lambda) \lambda_e   \Big\{ \beta [ \nu \lambda g\mathcal{F}f(P') + (1-\nu \lambda) f\mathcal{F}f(P')] \\ & \quad 
- (1-\beta)[ \nu \lambda_e  g(P_e') + (1-\nu \lambda_e)  f(P_e')] \\
& \quad + \nu \lambda \lambda_e J_{k+1}(g\mathcal{F}f(P'),g(P_e')) \\ & \quad + \nu \lambda(1-\lambda_e) J_{k+1} (g\mathcal{F}f(P'),f(P_e')) \\ & \quad 
+\nu(1-\lambda) \lambda_e J_{k+1} (f\mathcal{F}f(P'),g(P_e')) \\
& \quad + [\nu(1-\lambda)(1-\lambda_e) + (1-\nu)]J_{k+1} (f\mathcal{F}f(P'),f(P_e')) \Big\}  \Bigg\} 
\end{align*}
Since $f\mathcal{F}(.)$ and $g\mathcal{F}(.)$ are also compositions of functions of the form $f(.),g(.),\textrm{id}(.)$, Lemma \ref{composition_lemma} and the induction hypothesis (\ref{induction_hypothesis_tx_meas}) can be used to conclude, after some calculations, that the above is positive. 
\\(ii) We now wish to show that 
\begin{equation*}
\begin{split}
& [1\!-\!(1\!-\!\lambda)(1\!-\!\lambda_e)] J_{k}(P,\mathcal{F}f(P_e)) \!-\! \lambda \lambda_e J_{k}(P',\mathcal{F}g(P_e)) \\ &  -\! \lambda (1\!-\!\lambda_e) J_{k}(P',\mathcal{F}f(P_e)) \!-\!  (1\!-\!\lambda) \lambda_e J_{k}(P,\mathcal{F}g(P_e)) 
\end{split}
\end{equation*}
is an decreasing function of $P_e$ for all $k$, all $\mathcal{F}(.)$ formed by compositions of the functions $f(.), g(.), \textrm{id}(.)$, and all $P, P' \in \mathcal{S}$.
The proof is similar to part (i).

\end{appendix}

\bibliography{IEEEabrv,secure_remote_estimation}
\bibliographystyle{IEEEtran}

\end{document}